\newtheorem{theorem}{Theorem}[section]
\newtheorem{lemma}[theorem]{Lemma}
\newtheorem{proposition}[theorem]{Proposition}
\newtheorem{corollary}[theorem]{Corollary}
\newenvironment{proofof}[1]{\emph{Proof of #1.  }}{\hfill$\Box$}
\DeclareMathOperator*{\ED}{ED}
\DeclareMathOperator*{\Ham}{Ham}
\newcommand\eps{\varepsilon}
\newcommand{\OO}{{\widetilde{O}}}
\newcommand{\cc}{{\mathtt{c}}}
\newcommand{\rr}{{\mathtt{r}}}
\newcommand{\N}{{\mathbb{N}}}
\newcommand{\FL}{F_{\mathrm{CVL}}}
\newcommand{\Process}{{\mathrm{Process}}}
\newcommand{\Split}{{\mathrm{Split}}}
\newcommand{\Compress}{{\mathrm{Compress}}}
\newcommand{\Grammar}{{\mathrm{Grammar}}}
\newcommand{\FKR}{{F_\mathrm{KR}}}
\newcommand{\Enc}{{\mathrm{Enc}}}
\newcommand{\Bin}{{\mathrm{Bin}}}
\newcommand{\MI}{{\mathrm{MIS}}}
\newcommand{\SKED}{\mathrm{sk}^{\mathrm{ED}}}
\newcommand{\SKHAM}{\mathrm{sk}^{\mathrm{Ham}}}
\newcommand{\SKR}{\mathrm{sk}^{\mathrm{Rolling}}}
\newcommand{\Append}{{\mathrm{Append}}}
\newcommand{\Remove}{{\mathrm{Remove}}}
\newcommand{\Compare}{{\mathrm{Compare}}}
\newcommand{\UpdateAG}{{\mathrm{UpdateActiveGrammars}}}
\newcommand{\PartialDecompress}{{\mathrm{PartiallyDecompress}}}
\newcommand{\Recompress}{{\mathrm{Recompress}}}
\newcommand{\RecompressFirstBlock}{{\mathrm{RecompressFirstBlock}}}
\newcommand{\FindCompressedPrefix}{{\mathrm{FindCompressedPrefix}}}
\newcommand{\SplittingDepth}{{\mathrm{SplittingDepth}}}
\newcommand{\DecompressSymbol}{{\mathrm{DecompressSymbol}}}
\newcommand{\DecompressString}{{\mathrm{DecompressString}}}
\newcommand{\DecompressSymbolLength}{{\mathrm{DecompressSymbolLength}}}
\newcommand{\CompressWithGrammar}{{\mathrm{CompressWithGrammar}}}
\newcommand{\CrossOverBlock}{{\mathrm{CrossOverBlock}}}
\newcommand{\acc}[1]{{\em \hfil // #1 }}
\newcommand{\poly}{{\mathrm{poly}}}
\newcommand{\Dict}{{\mathrm{Dict}}}
\newcommand{\eval}{{\mathrm{eval}}}
\title{Locally consistent decomposition of strings with applications to edit distance sketching}
\author[1]{Sudatta Bhattacharya\thanks{Email: sudatta@iuuk.mff.cuni.cz. Partially supported by the Grant Agency of the Czech Republic under the grant agreement no. 19-27871X.}}
\author[1]{Michal Kouck{\'{y}}\thanks{Email: koucky@iuuk.mff.cuni.cz. Partially supported by the Grant Agency of the Czech Republic under the grant agreement no. 19-27871X. This project has received funding from the European Union’s Horizon 2020 research and innovation programme under the Marie Skłodowska-Curie grant agreement No. 823748 (H2020-MSCA-RISE project CoSP).}}
\affil[1]{Computer Science Institute of Charles University,
Malostransk{\'e}  n{\'a}m\v{e}st\'{\i} 25,
118 00 Praha 1, Czech Republic}
\date{}
\begin{document}

\maketitle

\begin{abstract}
    In this paper we provide a new locally consistent decomposition of strings. Each string $x$ is decomposed into blocks
    that can be described by grammars of size $\OO(k)$ (using some amount of randomness). 
    If we take two strings $x$ and $y$ of edit distance at most $k$
    then their block decomposition uses the same number of grammars and the $i$-th grammar of $x$ is the same as the $i$-th grammar of $y$ except for at most $k$ indexes $i$. 
    The edit distance of $x$ and $y$ equals to the sum of edit distances of pairs of blocks where $x$ and $y$ differ.
    Our decomposition can be used to design a sketch of size $\OO(k^2)$ for edit distance, 
    and also a rolling sketch for edit distance of size $\OO(k^2)$. 
    The rolling sketch allows to update the sketched string by appending a symbol or removing a symbol from the beginning of the string. 
\end{abstract}

\section{Introduction}

Edit distance is a measure of similarity of two strings. 
It measures how many symbols one has to insert, delete or substitute in a string $x$ to get a string $y$. 
The measure has many applications from text processing to bioinformatics. 
The edit distance $\ED(x,y)$ of two strings $x$ and $y$  can be computed in time $O(n^2)$ by a classic dynamic programming algorithm~\cite{WF74}.
Save for poly-log improvements in the running time~\cite{MP80,G16}, 
the best known running time for edit distance computation is $O(n+k^2)$~\cite{LMS98}, where $k=\ED(x,y)$.
Assuming Strong Exponential Time Hypothesis (SETH) this running time cannot be substantially improved~\cite{BI15}.
The conditional lower bound does not exclude some approximation algorithms, though, and there was a recent progress on computing edit distance in almost-linear time to within some constant factor approximation~\cite{CDGKS18,KS20,BR20,AN20}.

Another problem for edit distance that saw a major progress in recent years is sketching. 
In sketching we want to map a string $x$ to a short sketch $\SKED_{n,k}(x)$ 
so that from sketches $\SKED_{n,k}(x)$  and $\SKED_{n,k}(y)$ of two strings $x$ and $y$ we can compute their edit distance, either exactly or approximately. Apriori it is not even obvious that short sketches for edit distance exist. In a surprising construction, 
Belazzougui and Zhang~\cite{belazzougui_zhang} gave an exact edit distance sketch of size $O(k^{8}\log^{5}n)$ bits.
The sketch size was then improved to $O(k^{3}\log^{2}(\frac{n}{\delta})\log{n})$ bits by Jin, Nelson and Wu~\cite{nelson_edit_sketch}, where the $\ED(x,y)$ was computed exactly from the sketches with probability at least $1-\delta$, if $\ED(x,y)\le k$.
The current best sketch is of size $O(k^{2}\log^{3}n)$ bits and was given by Kociumaka, Porat and Starikovskaya~\cite{editsketchfocs2021}.
\cite{nelson_edit_sketch} gives a lower bound $\Omega(k)$ on the size of a sketch for exact edit distance.

The major problem in edit distance computation as well as in sketching is how to align the matching parts of two strings $x$ and $y$.
Finding an optimal alignment of two strings is the crux in the computation of edit distance and its sketching. 
In sketching finding a good alignment is even more challenging as we do not have both strings in our hands simultaneously to look for the matching.
To the best of our knowledge, to resolve this issue all edit distance sketches use {\em CGK random walk} on strings~\cite{CGK16}
which allows to embed the edit distance metrics into Hamming distance metrics with distortion $O(k)$. 
The walk implicitly fixes some reasonably good matching between the two strings.
Going from the CGK random walk to a sketch is non-trivial undertaking and all three sketch results rely on sophisticated machinery to achieve it.

In this paper we provide a new technique to align two strings $x$ and $y$ in oblivious manner.
In nutshell, we provide a decomposition procedure that breaks $x$ and $y$ into the same number of ``short'' blocks so that at most $k$ pairs
of blocks in the decomposition of $x$ and $y$ differ, and all other pairs of blocks are matching in an optimal alignment.
So the edit distance of $x$ and $y$ is the sum of edit distances of the differing blocks.
To be more specific our blocks are not short in their length 
but they are short in the sense that each of them can be described by a  context-free grammar of size $\OO(k)$.
Our decomposition algorithm constructs the grammars.
Our decomposition is based on {\em locally consistent parsing} of strings a technique similar to the one used in~\cite{lcp-application94, BES06, Jow, lcp_application2020} and hash based partitioning similar to \cite{ZZ-19}. 
Our main technical result is:

\begin{theorem}[String decomposition]\label{t-main1}
    There is an algorithm running in time $\OO(|x|)$ that for each string $x$ of length at most $n$ produces grammars $G^x_1,\dots,G^x_s$ such that with probability at least $1-O(1/n)$, $x=\eval(G^x_1)\cdots \eval(G^x_s)$ and each of the grammars is of size $\OO(k)$. (The $\OO(\cdot)$ notation hides factors that are poly-logarithmic in $n$.) Furthermore, for any two strings $x$ and $y$ of edit distance at most $k$ with grammars $G^x_1,\dots,G^x_s$ and  $G^y_1,\dots,G^y_{s'}$, resp., that are produced by the algorithm using the same randomness, the following is true simultaneously with probability at least $4/5$:
    \begin{enumerate}
        \item $s=s'$,
        \item $G^x_i=G^y_i$, for all $i\in \{1,\dots,s\}$ except for at most $k$ indices $i$, and
        \item $\ED(x,y)=\sum_i \ED(\eval(G^x_i),\eval(G^y_i))$.
    \end{enumerate}
\end{theorem}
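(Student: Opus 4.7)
The plan is to build the grammars in two layers: a hierarchical locally consistent parsing (HLCP) that produces an aligned parse-tree structure for both strings, followed by a hash-based cut into blocks at a sufficiently high level of the hierarchy.

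First I would apply $O(\log n)$ rounds of locally consistent parsing to $x$. In each round the current string over the alphabet of meta-symbols is partitioned into pieces by rules that depend only on a constant-size window around each position (this is where randomness enters, e.g., via a random bit assigned to each symbol or via a random ordering used for deterministic coin tossing / alphabet reduction). Each piece becomes a new meta-symbol in the next round. Because the parsing is local, for any position of $x$ that matches a position of $y$ and is surrounded by a sufficiently long agreeing window, the assigned meta-symbol at every level of the hierarchy is identical on both sides. Standard analyses of LCP show that the parse of $y$ can differ from that of $x$ on at most $O(\log n)$ contiguous meta-symbols per edit, so after all levels the symmetric difference has size $\OO(k)$ both at every level and at the top. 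I would freeze the top level as a sequence $X=X_1\cdots X_m$ of meta-symbols, together with the (shared) dictionary associating each meta-symbol to its expansion rule.

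Second, to cut $X$ into blocks, I apply Karp--Rabin style hashing $\FKR$ to a short suffix (say the last $\Theta(\log n)$ meta-symbols) of each prefix of $X$ and place a block boundary after position $j$ whenever $\FKR(X_{j-\ell+1}\cdots X_j)=0\pmod{p}$ for an appropriate prime $p$. The decision at each position depends only on a window of $\ell$ meta-symbols, so block boundaries coincide between $x$ and $y$ everywhere outside the $\OO(k)$ positions where the top-level meta-symbols disagree. By tuning $p$ so that boundaries arise roughly every $\Theta(k)$ meta-symbols I get blocks that contain $\OO(k)$ top-level symbols; each block is then described by a grammar whose start symbol produces those $\OO(k)$ meta-symbols and which imports the shared productions of the HLCP dictionary for the symbols it uses. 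A fall-back truncation after $\Theta(k \log n)$ meta-symbols handles atypical windows and keeps every grammar of size $\OO(k)$ deterministically.

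Given this construction, the three guarantees follow almost by design. For claim~1 ($s=s'$), I count boundaries: every boundary in $x$ not appearing in $y$ must sit inside one of the $\OO(k)$ hash windows that include a differing meta-symbol, and analogously in the other direction; by choosing parameters the extra boundaries introduced on either side exactly cancel, or I pre-align the sequences by padding the boundary list. For claim~2, each mismatched meta-symbol lies in exactly one block on each side, and the $\OO(k)$ mismatches plus the $\OO(k)$ ``window'' boundaries together affect at most $k$ block grammars after collecting constants into the $\OO(\cdot)$. For claim~3, I observe that the edit-distance optimal alignment of $x$ and $y$ can be routed through the agreeing block boundaries: every pair of aligned identical blocks contributes zero, and the remaining pairs partition the edit script, so $\ED(x,y)=\sum_i \ED(\eval(G^x_i),\eval(G^y_i))$ by subadditivity one way and by exhibiting a valid alignment the other.

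The hardest step I expect is claim~3: merely producing an aligned boundary scheme is not enough, since an adversarial edit script might cross a matched boundary. I would handle this by arguing that for any optimal alignment one can reroute it through each agreeing boundary position without increasing its cost (because the surrounding LCP windows agree there character-for-character), turning subadditivity into equality. Probability bookkeeping is routine: $\OO(1/n)$ for parse-tree validity and Karp--Rabin collision-freeness, and a constant loss in the good-window count from the hashing, combine via a union bound to the stated $4/5$.
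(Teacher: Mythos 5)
Your high-level architecture differs from the paper's in one structural way that turns out to be fatal as written: you build the entire $O(\log n)$-level hierarchy first and cut into blocks only once, at the top, whereas the paper interleaves a random split with every compression round. The interleaving is not an implementation detail — it is what makes the grammars small. A block's grammar must contain every distinct production reachable from its symbols down to the input alphabet; for an incompressible $x$ these productions are essentially all distinct, so the grammar size is proportional to the number of \emph{characters} the block expands to, not to the number of top-level meta-symbols it contains. After $O(\log n)$ rounds your top level has $O(1)$ meta-symbols (so there is nothing to cut), and at any intermediate freeze level a block of $\Theta(k)$ meta-symbols expands to $\Theta(k\cdot(3/2)^{\ell^*})$ characters and needs a grammar of that size. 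The paper avoids this by splitting at \emph{every} level with rate $\Theta(1/(k\,\mathrm{polylog}\,n))$, which guarantees (Lemma~\ref{l-decomposition1}) that every block at every level has dictionary size $\OO(k)$; the final grammar is then the union of $L$ such dictionaries, giving $\OO(k)$ total. Your ``fall-back truncation after $\Theta(k\log n)$ meta-symbols'' does not repair this, because the blow-up is in the expansion \emph{below} the top level, not in the number of top-level symbols per block, and truncation would in any case break $x=\eval(G^x_1)\cdots\eval(G^x_s)$. Relatedly, Cole--Vishkin-style alphabet reduction requires no two adjacent symbols to be equal, so you also need the paper's device of collapsing runs $a^r$ into single symbols $\rr_{a,r}$ before each parsing round; without it the scheme fails already on $0^n$.

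Two further points where your accounting does not meet the statement. The theorem demands \emph{at most $k$} differing indices, not $\OO(k)$; ``collecting constants into the $\OO(\cdot)$'' is not available. The paper gets exactly $\le k$ by fixing a canonical decomposition $x=w_0u_1w_1\cdots u_kw_k$, $y=w_0v_1w_1\cdots v_kw_k$ and showing that with probability $9/10$ \emph{no} split ever lands in any $u_j$, $v_j$, or at the boundary of any $w_j$, at any level — so each of the $k$ edit regions stays inside a single block and all other blocks are literally identical. This also disposes of your claim~1 worry: there is nothing to ``cancel'' or ``pad,'' because with the right split rate the boundary sets of $x$ and $y$ coincide outright. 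Note that this forces the boundary rate to be $O(1/(kR L))$, not your $\Theta(1/k)$: with $\Theta(k\,\mathrm{polylog})$ positions at risk and rate $\Theta(1/k)$ you would expect $\Theta(\mathrm{polylog})$ bad boundaries, not $O(1)$. Your instinct on claim~3 (reroute the optimal alignment through agreeing boundaries) is the right one and matches the paper's argument, but it only goes through once the ``no split in a bad region, at any level'' invariant above is in place.
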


Here, for a grammar $G$, $\eval(G)$ denotes its evaluation. 
Our decomposition can be used immediately to give an embedding of edit distance into Hamming distance with distortion $O(k)$.
It also readily yields a sketch for exact edit distance of size $\OO(k^2)$:

\begin{theorem}[Sketch for edit distance]\label{t-main2}
   There is a randomized sketching algorithm $\SKED_{n,k}$ that on an input string $x$ of length at most $n$ produces a sketch $\SKED_{n,k}(x)$ 
   of size $\OO(k^2)$ in time $\OO(nk)$, and a comparison algorithm running in time $\OO(k^2)$ such that given two sketches $\SKED_{n,k}(x)$ and $\SKED_{n,k}(y)$ for two strings $x$ and $y$ of length at most $n$ obtained using the same randomness of the sketching algorithm outputs with probability at least $1-1/n$ (over the randomness of the sketching and comparison algorithms) the edit distance of $x$ and $y$ if it is less than $k$ and $\infty$ otherwise.
\end{theorem}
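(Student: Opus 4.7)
The plan is to reduce the problem to sketching a short sequence over a large alphabet via Theorem~\ref{t-main1}, and then plug in a standard $k$-mismatch sketch. First I run the decomposition of Theorem~\ref{t-main1} on $x$ to obtain grammars $G^x_1,\dots,G^x_s$, each of bit-length $\tilde O(k)$. The sequence $(G^x_1,\dots,G^x_s)$ is then a string of length $s\le n$ over an effective alphabet $\Sigma$ of size $2^{\tilde O(k)}$, and by Theorem~\ref{t-main1}, if $\ED(x,y)\le k$ and the two decompositions use the same randomness, they agree in all but at most $k$ positions while $\ED(x,y)$ equals the sum of the blockwise edit distances at the differing positions.

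On top of this I apply a standard $k$-mismatch sketching scheme for strings over a large alphabet (in the spirit of Porat--Porat, or a BCH/sparse-recovery-based scheme) to the sequence of grammars. Such a sketch has size $\tilde O(k\log|\Sigma|)=\tilde O(k^2)$ bits and supports, from the two parties' sketches, recovery of the set $I$ of differing positions together with both grammars $(G^x_i,G^y_i)$ for each $i\in I$ whenever $|I|\le k$. I also append $s$ and a short linear Karp--Rabin-style fingerprint of the sequence of grammars for verification. Sketch construction takes $\tilde O(|x|)$ time for the decomposition plus $\tilde O(s\log|\Sigma|)=\tilde O(nk)$ time for the mismatch sketch, fitting the claimed bound.

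For the comparison algorithm, I first check $s=s'$ and output $\infty$ on mismatch. Otherwise I run the mismatch-sketch decoder; if it reports more than $k$ mismatches or fails internally, output $\infty$. From the recovered $I$ and pairs $(G^x_i,G^y_i)$, I compute $\ED(\eval(G^x_i),\eval(G^y_i))$ directly on the grammar representations in $\tilde O(k^2)$ total time (using an edit distance algorithm for grammar-compressed strings with bounded output distance), sum them to get a candidate value $D$, and output $\infty$ if $D>k$. Finally, I verify that applying the decoded changes to $x$'s stored fingerprint produces $y$'s stored fingerprint; on mismatch output $\infty$. When $\ED(x,y)<k$ and Theorem~\ref{t-main1} succeeds, this procedure returns exactly $\ED(x,y)$. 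To lift the success probability from $4/5$ up to $1-1/n$, I run $\Theta(\log n)$ independent copies of the whole construction with independent randomness and output the majority value among the copies whose verification succeeded (or $\infty$ if no such majority exists); each component stays within $\tilde O(k^2)$ bits.

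The main obstacle is the $\ED(x,y)\ge k$ case and decoding failures: the combination of the ``too many mismatches'', ``$D>k$'', and fingerprint-mismatch rejection rules must output $\infty$ with probability at least $1-1/n$ over the sketching randomness, while the sketch stays within $\tilde O(k^2)$ bits per copy. A secondary technical issue is realising each pairwise edit distance in $\tilde O(k^2)$ time directly on grammar representations (whose evaluations may be much longer than $k$) rather than on fully decoded blocks, which is needed for the $\tilde O(k^2)$ comparison-time bound.
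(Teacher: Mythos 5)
Your proposal is correct and follows essentially the same route as the paper: decompose into grammars via Theorem~\ref{t-main1}, treat the grammar sequence as a string over a large alphabet, apply a $k$-mismatch sketch that recovers the differing grammar pairs, compute each pairwise edit distance with the grammar-compressed edit-distance algorithm (Proposition~\ref{p-edgrammar}), and amplify with $\Theta(\log n)$ independent copies and a majority vote. The two obstacles you flag are resolved exactly as you anticipate: the paper instantiates the large-alphabet mismatch sketch by encoding each grammar so that distinct grammars differ in every one of their $M$ symbols (via a Karp--Rabin fingerprint of the whole grammar) and then running a $kM$-mismatch Hamming sketch, and the $\ED(x,y)\ge k$ case is handled by outputting $\infty$ whenever the Hamming decoder reports too many mismatches or length disagreement, the summed distance exceeds $k$, or the edit-distance computations exceed the $\OO(k^2)$ time budget.
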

 
Furthermore, we can also provide a {\em rolling sketch}, a sketch in which we can update the stored string by appending a symbol or
removing its first symbol. 

\begin{theorem}[Rolling sketch for edit distance]\label{t-main3}
   There are algorithms $\Append(sk_x,a)$, $\Remove(sk_{ax},a)$, and $\Compare(sk_x, sk_y)$ such that for integer parameters $k\le m$:
   \begin{enumerate}
       \item Given a sketch $sk_x$ representing a string $x$ and a symbol $a$, $\Append(sk_x,a)$ outputs a sketch $sk_{xa}$ for the string $xa$ in time $\OO(k^2)$.
       \item Given a sketch $sk_{ax}$ representing a string $ax$ for a symbol $a$, $\Remove(sk_{ax},a)$ outputs a sketch $sk_{x}$ for the string $x$ in time $\OO(k^2)$.
       \item Given two sketches $sk_{x}$ and $sk_y$ representing strings $x$ and $y$ obtained from the same random sketch for empty string using two sequences of at most $m$ operations $\Append$ and $\Remove$, $\Compare(sk_x, sk_y)$ calculates the edit distance of $x$ and $y$
       if it is less than $k$, and outputs $\infty$ otherwise. The algorithm $\Compare(sk_x, sk_y)$ runs in time $\OO(k^2)$.
   \end{enumerate}
   All the sketches are of size $\OO(k^2)$.
   The probability that any of the algorithms fails or produces incorrect output is at most $1/m$ over the initial randomness of the sketch for empty string and internal randomness of the algorithms. 
\end{theorem}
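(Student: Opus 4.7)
The plan is to bootstrap the rolling sketch from the static sketch guaranteed by Theorem~\ref{t-main2}, exploiting the fact that appending or removing a single symbol is an edit-distance-one modification and therefore, by Theorem~\ref{t-main1}, perturbs the grammar decomposition in only $O(1)$ positions. The key additional property needed is \emph{locality}: those changed grammars lie in a bounded neighborhood of the endpoint where the update happens, so the bulk of the decomposition can be updated by touching only a small suffix (for $\Append$) or prefix (for $\Remove$) of the grammar sequence.

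I would structure the sketch $sk_x$ into two components of total size $\OO(k^2)$. The first component stores, in decompressed form, the first $\OO(k)$ and last $\OO(k)$ grammars of the decomposition of $x$ together with a small amount of surrounding context. The second component is essentially the static sketch of Theorem~\ref{t-main2} applied to the sequence $G^x_1,\dots,G^x_s$: fingerprints of each grammar together with a $k$-mismatch sketch over the resulting sequence, occupying $\OO(k^2)$ bits and supporting both the identification of the at most $k$ indices where two decompositions differ and the recovery of the content of the differing grammars.

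For $\Append(sk_x,a)$, I would decompress the stored suffix portion, concatenate $a$, and invoke the algorithm of Theorem~\ref{t-main1} on this short string; by locality, the resulting grammars agree with $G^x_1,\dots,G^x_s$ except in a $\OO(1)$-size window at the right end. I then update the stored suffix portion and apply the resulting $O(1)$ insertions or deletions to the $k$-mismatch data structure. $\Remove(sk_{ax},a)$ is handled symmetrically at the left end. For $\Compare(sk_x,sk_y)$, I use the comparison procedure underlying Theorem~\ref{t-main2} to recover the at most $k$ pairs $(G^x_i,G^y_i)$ on which the decompositions disagree, apply an exact edit-distance algorithm to each pair of decompressed blocks (whose total size is $\OO(k)$), and sum the results, returning $\infty$ if the total exceeds $k$. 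Setting the internal failure parameter of Theorem~\ref{t-main1} to $1/(mn)$ and union-bounding over the at most $m$ intermediate strings encountered during the $m$ operations gives the required failure probability of $1/m$.

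The main obstacle is establishing the promised locality: I need that modifying one endpoint symbol alters only $\OO(1)$ grammars, all within a bounded-size window near that endpoint. This should follow from the locally consistent parsing underlying Theorem~\ref{t-main1}, where block boundaries depend only on a polylogarithmic window of symbols, so that both the block structure and the grammar assignments stabilize once one is sufficiently far from the modification point. A secondary technical issue is ensuring that the $k$-mismatch component supports a bounded number of online grammar insertions and deletions per operation without compromising the overall $\OO(k^2)$ sketch size or the failure bound, which standard dynamic $k$-mismatch sketches based on algebraic fingerprinting handle in $\OO(k^2)$ time per change.
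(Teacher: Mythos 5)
Your high-level architecture (a mismatch sketch over encoded grammars plus explicit buffers of the few grammars near each end that can still change) matches the paper's, but two of your steps have genuine gaps.

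First, the update itself. You propose to store the first and last $\OO(k)$ grammars ``in decompressed form,'' and for $\Append$ to ``decompress the stored suffix portion, concatenate $a$, and invoke the algorithm of Theorem~\ref{t-main1} on this short string.'' The string is not short: a single grammar of size $\OO(k)$ can evaluate to $\Theta(n)$ symbols (via the rules $\rr_{a,r}\rightarrow a^r$ and the doubling structure), so full decompression destroys both the $\OO(k^2)$ sketch size and the $\OO(k^2)$ update time. Moreover, re-running the decomposition on a suffix \emph{in isolation} does not reproduce the decomposition of that suffix inside $x$: the pairing of symbols at every level depends on the $\FL$-coloring, which looks up to $R$ symbols to the left, so the recompression must be stitched into the existing compressed hierarchy level by level with the correct left context. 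This is exactly what the paper's Section~\ref{s-update} machinery ($\PartialDecompress$, $\Recompress$, $\CrossOverBlock$) does, keeping only $O(T)$ symbols per level decompressed. Relatedly, the locality you ``need to establish'' is not symmetric folklore: the right-end version is Lemma~\ref{l-grammarsuffix}, a nontrivial induction via Lemma~\ref{l-extension}, and the paper proves \emph{no} left-end analogue. Instead, $\Remove$ in the paper never touches the left end of the stored decomposition: it maintains a second, growing decomposition of the \emph{deleted prefix} $u$ (updated by the same append machinery) and subtracts it at comparison time, using Theorem~\ref{t-decomposition-rolling} to handle the fact that the logical start of $x$ generally falls strictly inside some grammar $G^{ux}_{p^x}$, whose evaluation must be truncated by an offset $r$ computed from evaluation sizes. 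Your symmetric-recomputation route would additionally have to prove the left-locality lemma and modify the front of the Hamming sketch by $\Theta(T)$ encoded grammars per operation.

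Second, the probability analysis. You cannot ``set the internal failure parameter of Theorem~\ref{t-main1} to $1/(mn)$'': the pairwise alignment guarantee (items 1--3 of that theorem, or Theorem~\ref{t-decomposition-rolling}) holds only with \emph{constant} probability per instance, because it requires that no undesirable split ever lands in an edit-affected region, an event of probability only bounded below $1/5$. The paper separates the two failure sources: per-string events (oversized grammars, non-injective $C_\ell$) are union-bounded over the $m$ updates by taking $n\ge 10m^3$, while the constant-probability pairwise event is boosted to $1-1/2m$ by maintaining $\Theta(\log m)$ independent copies of the whole rolling sketch and taking a majority at comparison time. Without that repetition your $\Compare$ errs with constant probability, not $1/m$. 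A smaller but real omission: recovering the content of a differing grammar from Hamming mismatch information requires that differing encodings disagree in \emph{every} coordinate (the Karp--Rabin-based encoding of Section~\ref{s-binencoding}, extended in the rolling setting to fingerprint a window of $\Theta(T)$ neighboring grammars so that boundary grammars near the deletion point are also recovered); ``standard algebraic fingerprinting'' does not give this for free.
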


We remark that we did not attempt to optimize the running time of either of our algorithms, or poly-log factors in the sketch sizes, 
and we believe that both parameters can be readily improved by usual amortization techniques of processing symbols in batches of size $\OO(k)$.
We believe that the update time in the last theorem can be improved to $\OO(1)$ by buffering $\OO(k)$ symbols
that shall be inserted or removed without affecting the other parameters of the algorithm.

Another distinguishing feature of our decomposition procedure compared to the technique of CGK random walks is its parallelizability.
CGK random walk seems inherently sequential whereas our decomposition procedure can be easily parallelized. 
We believe that our decomposition will allow for further applications beyond our simple sketches.

\subsection{Related work}

The problem of embedding edit distance to other distance measures, like Hamming distance, $\ell_1$, etc. has been studied extensively. In \cite{CGK16}, the authors have given a randomized embedding from edit distance to Hamming distance, where any string $x \in \{0,1\}^n$ can be mapped to a string $f(x)\in \{0,1\}^{3n}$, given a random string $r\in\{0,1\}^{\log^{2}n}$, such that, $\ED(x,y)/2 \le \Ham(f(x),f(y)) \le O(\ED(x,y)^2)$ with probability at least $2/3$.
Batu, Ergun and Sahinalp~\cite{BES06} have introduced a dimensionality reduction technique, where any string $x$ of length $n$ can be mapped to a string $f(x)$ of length at most ${n}/{r}$, for any parameter $r$, with a distortion of $\OO(r)$. 
They used the locally consistent parsing technique for their embedding.
Ostrovsky and Rabani \cite{hamming_to_l1_rabani_2007} gave an embedding from edit distance to $\ell_1$ distance with a distortion of $O(\sqrt{\log n \log \log n})$. 
Jowhari \cite{Jow} also gave a randomized embedding from edit distance to $\ell_1$ distance with a distortion of $O(\log n \log^{*}n)$. He used the embedding given by Cormode and Muthukrishnan \cite{CM02} who showed that any string $x$ of length $n$ can be mapped to a vector $f(x)$ of length $m=O(2^{n \log n})$, such that for any pair of strings $x,y$ of length $n$ each, $\ED(x,y)/2 \le \lVert f(x)-f(y) \rVert_{\ell_1} \le O(\log n \log^{*}n) \cdot \ED(x,y)$. Since the size of the vector was too large, \cite{Jow} used random hashing to get his final embedding. 

\subsection{Our techniques}

We first provide the intuition for our technique. We would like to break a string $x$ into small blocks {\em obliviously} 
so that when a string $y$ is broken by the same procedure, the difference between $x$ and $y$ caused by the edit operations
is confined within the corresponding blocks of $x$ and $y$, and the overall decomposition is not affected by them.
For random binary strings $x$ and $y$ this could be done fairly easily: 
look on all the (overlapping) windows of $\log n$ consecutive bits in each of the strings and for each window decide at random whether to make a break at that window or not. 
To make it consistent between $x$ and $y$ use some random hash function $H:\{0,1\}^{\log n} \rightarrow \{0,\dots,D-1\}$
so that if the hash function evaluates to $0$ on a given window then start a next block of the decomposition.
If we chose $D$ suitably, say $D \ge 10 k \log n$, then we are unlikely to start a new block in any window which is affected by the the at most $k$ edit operations on $x$ and $y$. In that case we obtain the desired decomposition. Hence, decomposing random strings $x$ and $y$ is easy.

The issue is what to do with non-random strings. 
Consider for example strings $x$ and $y$ that are very sparse, so they contain $\sqrt{n}$ ones sprinkled within the vast ocean of zeros. 
The hash function $H$ will see mostly windows of 0's and occasionally a window of the form $0^i10^{\log(n) - i -1}$.
The decomposition will have no effect on such strings despite the fact that the string might contain $\Omega(\sqrt{n})$ bits of entropy.

However, we can compress such sparse strings: replace stretches of zeros by some binary encoded information about their length,
and try to break the strings again. 
Still, this will fail if in our example the stretches of zeros are replaced by stretches of some repeated pattern such as $(01)^*$.
So we need slightly more general compression which will compress any $\log n$ bits into $\log(n)/2$ bits. 
By repeating the sequence of steps: split and compress, we will eventually get the desired decomposition of each string.

Our actual algorithm mimics the above intuition. 
It is technically easier to work with a larger alphabet, so we extend the input alphabet $\Sigma$ by adding special compression symbols into the work alphabet $\Gamma$. 
(Without loss of generalization we can assume that $\Sigma$ is of size $O(n^3)$ otherwise we can hash each symbol of our input strings using some perfect hash function into an alphabet of size $O(n^3)$ without affecting the edit distance of a given pair of strings.)
To split a string we will use a random hash function $H:\Gamma^2 \rightarrow \{0,1\}$ from a suitable hash family that we call {\em $(D,O(\log n))$-iterated pair-wise independent family}, for $D=\Theta(k \log n)$.\footnote{In earlier version of this paper
we used $H$ to be $\OO(k)$-wise independent hash function $H:\Gamma^2 \rightarrow \{0,\dots, D-1\}$. 
In the current version we replace it by a choice from a smaller hash family which is computationally more efficient.}
If the hash function is zero on a pair of consecutive symbols in a string, we start a new block of the decomposition on the first symbol in the pair, and this happens with probability roughly $1/D$ for our choice of $H$.

Then in each resulting block we replace stretches of repeated symbols by a special compression symbol from $\Gamma$ representing the block,
and we use a pair-wise independent hash function $C:\Gamma^2 \rightarrow (\Gamma \setminus \Sigma)$ to compress non-overlapping pairs of symbols
into one symbol. This latter step requires some care as we have to make sure that we select non-overlapping pairs in the same way in $x$ and $y$.
For the selection of non-overlapping pairs we use the locally consistent coloring of Cole and Vishkin~\cite{cole1986deterministic,linial1987distributive,linial1992locality} where the selection of pairs depends only on the context of $O(\log^* n)$ symbols.
The compression reduces the size of each block by a factor of $2/3$.
We repeat the compress and split process  for $O(\log n)$ iterations until each compressed block of $x$ is of size at most 2. 
{\em Decompression} of each block then gives us the desired decomposition of $x$. (See Fig.~\ref{fig:hierarchy} for an illustration.)

It is natural and convenient to represent each of the blocks by a context-free grammar which corresponds to the compression process.
We can argue that the grammars will be of size $O(D \log n)$ with high probability. 
So we can represent each string by a sequence of small grammars so that if $x$ and $y$ are at edit distance at most $k$ 
then at most $k$ pairs of their grammars will differ, and the sum of the edit distances of differing pairs is the edit distance of $x$ and $y$.
Note, that edit distance of two strings represented by context-free grammars can be computed efficiently~\cite{ED_compressed_string_Soda22}.
These are the main ideas behind our decomposition algorithm, and we provide more details in Section~\ref{s-decomposition}

\smallskip
Building a sketch from the string decomposition is straightforward: 
We encode each grammar in binary using fixed number of bits, and we use off-the-shelf sketch for Hamming distance to sketch the sequence of grammars.
As the Hamming distance sketch does not recover identical bits but only the mismatched bits we make sure that if two grammars differ then 
their binary encoding differ in every bit. 
Over binary alphabet this might be impossible but over large alphabets one could use error-correcting codes to achieve the desired effect of recovering the differing grammars; for simplicity we use the Karp-Rabin fingerprint of the whole grammar to encode the binary 0 and 1 distinctly.
See Section~\ref{s-binencoding} for the details of our encoding and Section~\ref{s-edsketch} for details of the sketch for edit distance.

\smallskip
To design a rolling sketch for edit distance where we can extend the represented string by a new symbol or repeatedly remove the first symbol
of the represented string 
we will employ our decomposition technique together with the rolling sketch for Hamming distance of Clifford, Kociumaka, and Porat~\cite{rollinghashSODA2019}.
We will argue that appending a new symbol to a string affects only some fixed number of grammars in the decomposition of a string.
There is a certain threshold $T$ so that except for the last $T$ grammars the decomposition of a string stays the same regardless of how many 
other symbols are appended. 
Hence, we will keep a buffer of at most $T$ {\em active} grammars corresponding to the recently added symbols, 
and upon addition of a new symbol we will only update those grammars. 
We are guaranteed that the grammars before this threshold will stay the same forever, so we can {\em commit} them into the rolling Hamming sketch
(in the form of their binary encoding.)
Similarly, we will keep a buffer of up-to $T$ {\em active} grammars that capture the symbols that were deleted from the sketch most recently. 
Once they become ``mature'' enough we can commit them by removing their binary encoding from the rolling Hamming sketch.
(See Fig.~\ref{fig:rolling_sketch} for an illustration.)
This allows to maintain a rolling sketch for edit distance.

Evaluation of an edit distance query on two rolling sketches will use their Hamming sketch to recover differing committed grammars. 
Together with the active grammars of inserted and deleted symbols this provides enough information for evaluating the edit distance query.
Technical details are explained in Section~\ref{s-rolling}.
In Section~\ref{s-param} we give a table of parameters used throughout the paper.

\section{Notations and preliminaries}
For any string $x = x_1x_2x_2\dots x_n$ and integers $p,q$, $x[p]$ denotes $x_p$, $x[p,q]$ represents substring $x' = x_p\dots x_q$ of $x$, and $x[p,q)=x[p,q-1]$.
If $q<p$, then $x[p,q]$ is the empty string $\eps$. $x[p,\dots]$ represents $x[p,|x|]$, where $|x|$ is the length of $x$.
"$\cdot$"-operator is used to denote concatenation, e.g $x\cdot y$ is the concatenation of two strings $x$ and $y$. $\Dict(x) = \{x[i,i+1], i \in [n-1]\}$, is the dictionary of string $x$, which stores all pairs of consecutive symbols that appear in $x$.
For strings $x$ and $y$, $\ED(x,y)$ is the minimum number of modifications ({\em edit operations}) required to change $x$ into $y$, where a single modification can be adding a character, deleting a character or substituting a character in $x$.
All logarithms are based-2 unless stated otherwise.
For integers $p>q$, $\sum_{i=p}^{q} a_i=0$ by definition regardless of $a_i$'s.

\subsection{Grammars}\label{s-grammars}

Let  $\Sigma \subseteq \Gamma$ be two alphabets and $\# \not\in \Gamma$. A {\em grammar} $G$ is a set of {\em rules} of the type $c \rightarrow ab$
or $c \rightarrow a^r$, where $c \in (\Gamma \cup \{\#\}) \setminus \Sigma$, $a,b \in \Gamma$ and $r\in \N$. 
$c$ is the {\em left hand side} of the rule, and $ab$ or $a^r$ is the {\em right hand side} of the rule. 
$\#$ is the starting symbol. The size $|G|$ of the grammar is the number of rules in $G$.  
We only consider grammars where each $a \in \Gamma \cup \{\#\}$ appears on the left hand side of at most one rule of $G$, we call such grammars {\em deterministic}. 
(We assume that rules of the form $c \rightarrow a^r$ are stored in implicit (compressed) form.) 
The $\eval(G)$ is the string from $\Sigma^*$ obtained from $\#$ by iterative rewriting of the intermediate results by the rules from $G$. 
If the rewriting process never stops or stops with a string not from $\Sigma^*$, $\eval(G)$  is undefined.
Observe, that we can replace each rule of the type $c \rightarrow a^r$ by a collection of at most $2\lceil \log r \rceil$ new rules of the other type using some auxiliary symbols. 
Hence, for each grammar $G$ there is another grammar $G'$ using only the first type of the rules such that $\eval(G)=\eval(G')$ and $|G'| \le |G| \cdot 2 \lceil  \log |\eval(G)| \rceil$. 
Using a depth-first traversal of a deterministic grammar $G$ we can calculate its {\em evaluation size $|\eval(G)|$} in time $O(|G|)$. 
Given a deterministic grammar $G$ and an integer $m$ less or equal to its evaluation size, we can construct in time $O(|G|)$
another grammar $G'$ of size $O(|G|)$ such that $\eval(G')=\eval(G)[m,\dots]$. $G'$ will use some new auxiliary symbols.
Given a deterministic grammar $G$, using a depth-first traversal on symbols reachable from the starting symbol $\#$
we can identify in time $O(|G|)$ the smallest sub-grammar $G'\subseteq G$ with the same evaluation.

We will use the following observation of Ganesh, Kociumaka, Lincoln and Saha \cite{ED_compressed_string_Soda22}:

\begin{proposition}[\cite{ED_compressed_string_Soda22}]\label{p-edgrammar}
There is an algorithm that on input of two grammars $G_x$ and $G_y$ of size at most $m$ computes the edit distance $k$ of $\eval(G_x)$ and $\eval(G_y)$ in time $O( (m + k^2) \cdot \poly(\log m+n))$, where $n=|\eval(G_x)|+|\eval(G_y)|$.
\end{proposition}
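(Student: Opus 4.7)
The plan is to reconstruct the result by lifting the classical Landau--Myers--Schmidt algorithm, which computes $\ED$ of two \emph{uncompressed} strings in time $O(n+k^2)$, to the grammar-compressed setting. That algorithm maintains, for each of the $O(k)$ relevant diagonals of the dynamic-programming table and each of the $k+1$ possible error budgets, the furthest column reached along the diagonal. Advancing one error level on a diagonal amounts to a single longest-common-extension (LCE) query on the two strings: given positions $i$ in $\eval(G_x)$ and $j$ in $\eval(G_y)$, return the largest $\ell$ with $\eval(G_x)[i, i+\ell) = \eval(G_y)[j, j+\ell)$. Overall only $O(k^2)$ such queries are issued, and the final edit distance is read off directly from the furthest-reach table.

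The technical heart of the proof is implementing LCE over grammar-compressed strings in polylogarithmic time per query after near-linear preprocessing. For a deterministic grammar $G$ of size $m$ evaluating to a string of length $n$, I would process nonterminals bottom-up to annotate each one with (i) its evaluation length $|\eval(c)|$ and (ii) a Karp--Rabin fingerprint of $\eval(c)$ under a random modulus, combining the children's fingerprints using the standard shift-and-add identity (with rules of the form $c\to a^r$ handled in $O(\log r)$ time via repeated squaring). This takes $O(m\cdot\poly(\log n))$ time. To obtain the fingerprint of an arbitrary substring $\eval(G)[p,q)$, descend the parse tree from $\#$ toward the two endpoints; the substring decomposes into $O(\log n)$ maximal subtrees whose precomputed fingerprints can be concatenated in $\poly(\log n)$ time. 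An LCE query then reduces by exponential search plus binary search to $O(\log n)$ fingerprint-equality tests, so each query costs $\poly(\log m + \log n)$.

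Putting the pieces together gives preprocessing cost $O(m\cdot\poly(\log m + \log n))$ plus $O(k^2)$ LCE queries each at cost $\poly(\log m + \log n)$, for a total of $O((m+k^2)\cdot\poly(\log m + \log n))$, matching the claimed bound. Correctness of the Landau--Myers--Schmidt skeleton is unaffected because it uses LCE purely as a black box. The one genuine obstacle is controlling fingerprint collisions: since only $O(k^2 \log n)$ fingerprint comparisons are ever performed, choosing the Karp--Rabin modulus uniformly at random from a range of size $\poly(m+n)$ makes the probability of any spurious equality at most $1/\poly(m+n)$ by a union bound, which can be absorbed into the failure probability of the overall algorithm. (Alternatively, one may verify each asserted equality by a grammar-level structural comparison at the top of the decomposition, trading an extra log factor for determinism.) All remaining ingredients, such as computing $|\eval(G)|$ and constructing grammars for suffixes $\eval(G)[m,\dots]$, were already noted in Section~\ref{s-grammars} and slot in directly.
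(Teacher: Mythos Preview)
The paper does not prove this proposition at all; it is quoted verbatim as a black-box result of Ganesh, Kociumaka, Lincoln and Saha~\cite{ED_compressed_string_Soda22}. Your reconstruction nonetheless follows the right architecture---Landau--Vishkin-style furthest-reaching diagonals reduced to $O(k^2)$ LCE queries, with a Karp--Rabin fingerprint structure built bottom-up over the grammar---and this is indeed the strategy behind the cited result.

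There is, however, one genuine gap. You assert that descending the parse tree from $\#$ toward the two endpoints decomposes an arbitrary substring into $O(\log n)$ maximal subtrees, so that its fingerprint can be assembled in $\poly(\log n)$ time. This is true only when the parse tree has depth $O(\log n)$. For a general deterministic grammar of size $m$ the depth can be $\Theta(m)$ (e.g.\ a chain $A_i \rightarrow A_{i-1}\,a$), and then random access, substring fingerprints, and LCE queries each cost $\Theta(m)$ rather than $\poly(\log n)$; your final bound degrades to $O(mk^2\cdot\poly(\log))$. The repair is standard but must be stated: first \emph{balance} the grammar so that its depth becomes $O(\log n)$ while its size grows by at most an $O(\log n)$ factor (Rytter's AVL-grammar construction, or the later Ganardi--Je\.{z}--Lohrey balancing), and only then build the fingerprint structure. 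The cited paper handles this step explicitly. Once you insert it, your argument goes through with the claimed $O((m+k^2)\cdot\poly(\log m + \log n))$ bound. (It is worth noting that the specific grammars produced elsewhere in the present paper already have depth $O(L)=O(\log n)$, so for \emph{those} inputs your argument is fine as written; but the proposition is stated for arbitrary grammars.)
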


\subsection{Rolling Hamming distance sketch}\label{s-hammin}

For two strings $x$ and $y$ of the same length, we define their {\em mismatch information} 
$\MI(x,y)=\{(i,x[i],y[i]);$ $i \in \{1,\dots,|x|\} \textit{ and }x[i]\neq y[i]\}$. 
The Hamming distance of $x$ and $y$ is $\Ham(x,y)=|\MI(x,y)|$.

There exist various sketches for Hamming distance, which allow to compute Hamming distance with low error probability \cite{hamming_sketch2,hamming_sketch3}. Moreover, \cite{porat2007,rollinghashSODA2019} also allow to retrieve the mismatch information. 
For our purposes we will use the sketch given by Clifford, Kociumaka, and Porat~\cite{rollinghashSODA2019}.

Let $k\le n$ be integers and $p \ge n^3$ be a prime. 
\cite{rollinghashSODA2019}  give a randomized sketch for Hamming distance 
$\SKHAM_{n,k,p} : \{1,\dots,p-1\}^* \rightarrow \{0,\dots,p-1\}^{k+4}$ computable in time $\OO(n)$ with the following properties.\footnote{Clifford, Kociumaka and Porat have the sketch size only $k+3$ elements but we include as an extra item the randomness of the sketch, which is a single element from $\{0,\dots,p-1\}$
used to compute Karp-Rabin fingerprint.}

\begin{proposition}[\cite{rollinghashSODA2019}]\label{p-skham}
    There is a randomized algorithm working in time $O(k \log^3 p)$ that given sketches $\SKHAM_{n,k,p}(x)$
    and $\SKHAM_{n,k,p}(y)$ of two strings $x$ and $y$ of length $\ell \le n$ constructed using the same randomness decides whether $\Ham(x,y)\le k$, and if so returns $\MI(x,y)$,
    with probability of error at most $1/n$ over the randomness of the sketches and the internal randomness of the algorithm.
\end{proposition}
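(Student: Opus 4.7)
The plan is to mimic the classical $k$-mismatch sketch of Porat--Porat and its rolling refinement in \cite{rollinghashSODA2019}, which is built around Karp--Rabin fingerprinting augmented with moment fingerprints that support sparse recovery. Fix a prime $p\ge n^3$ and sample one random $\alpha\in\mathbb{F}_p^*$; this $\alpha$ is the randomness stored as the extra sketch element. For a string $x$ of length $\ell\le n$ over $\{1,\dots,p-1\}$, the sketch stores the value $\ell$, the classical fingerprint $\phi_0(x)=\sum_{i=1}^{\ell} x[i]\,\alpha^{i-1}\bmod p$, and $k+2$ further ``moment'' fingerprints of the form $\phi_j(x)=\sum_{i=1}^{\ell} i^j\, x[i]\,\alpha^{i-1}\bmod p$ for $j=1,\dots,k+2$. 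All of this can be computed in a single pass in time $\widetilde O(n)$.

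The comparison algorithm subtracts the two sketches coordinatewise and obtains, for $|x|=|y|=\ell$, the values $\Delta_j=\sum_{i\in M}(x[i]-y[i])\, i^j\, \alpha^{i-1}\bmod p$, where $M\subseteq\{1,\dots,\ell\}$ is the mismatch set. If $|M|\le k$, the sequence $\Delta_0,\Delta_1,\dots,\Delta_{2k+1}$ is a linear combination of $|M|$ geometric-like terms with ``frequencies'' indexed by $i\in M$, so Prony's method (equivalently, Berlekamp--Massey on the $\Delta_j$'s) recovers the positions $i\in M$ and the corresponding differences $x[i]-y[i]$ in time $\widetilde O(k)$; the absolute values $x[i]$, $y[i]$ are then read off using auxiliary fingerprints weighted by another random $\beta$ applied to symbol values. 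If the recovered list is inconsistent or exceeds $k$ mismatches, output that $\Ham(x,y)>k$.

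The rolling property is inherited from the algebraic form of $\phi_j$: appending a symbol $a$ amounts to $\phi_j(xa)=\phi_j(x)+a\,(\ell+1)^j\,\alpha^{\ell}$, and removing a front symbol $a$ from $ax$ uses $\phi_j(x)=\alpha^{-1}\bigl(\phi_j(ax)-a\bigr)$ after re-indexing $i\mapsto i-1$, which by the binomial expansion of $(i-1)^j$ is a fixed linear combination of $\phi_0(ax),\dots,\phi_j(ax)$ that the algorithm can apply in $O(k^2)$ time (or $O(k)$ with precomputed binomial tables). Thus appends and removes preserve the sketch shape and cost $\widetilde O(k)$ per operation.

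For correctness, the algorithm only fails if $\alpha$ is a root of some auxiliary polynomial that the recovery step implicitly assumes to be non-vanishing (e.g.\ distinct $\alpha^{i-1}$ values for $i\in M$, or non-degenerate Vandermonde systems). All such polynomials have degree $O(n)$, so by Schwartz--Zippel the failure probability is $O(n/p)=O(1/n^2)\le 1/n$. The main obstacle is not the fingerprinting but the sparse-recovery step: one must verify that the $\phi_j$ moments give a decoding problem equivalent to a Vandermonde system whose unknowns are exactly the mismatch positions, and that solving this system in $\widetilde O(k)$ is compatible with the rolling updates; carefully, this forces the particular choice of $i^j\alpha^{i-1}$ weights (rather than $\alpha^{ji}$), since only the former are cleanly additive under append/remove and simultaneously admit Berlekamp--Massey style recovery after a single polynomial change of variables.
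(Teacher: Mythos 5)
This proposition is not proved in the paper at all: it is imported verbatim from Clifford, Kociumaka and Porat \cite{rollinghashSODA2019} and used as a black box (the paper only adjusts the sketch length from $k+3$ to $k+4$ to account for the stored randomness). So there is no in-paper argument to match; what you have written is a from-scratch reconstruction of the cited result, and it has to be judged on its own.

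As written, the reconstruction has a concrete quantitative gap between the sketch you store and the decoder you invoke. You store the moments $\phi_j(x)$ only for $j=0,1,\dots,k+2$, i.e.\ $k+3$ field elements, which is consistent with the claimed sketch size $k+4$. But the recovery step then runs Prony/Berlekamp--Massey on $\Delta_0,\Delta_1,\dots,\Delta_{2k+1}$, which requires $2k+2$ syndromes that the sketch does not contain. This is not a cosmetic slip: a $d$-sparse vector with unknown support has $2d$ unknowns (positions and amplitudes), so $k+O(1)$ power sums cannot determine it once $d$ exceeds roughly $k/2$, and no variant of Berlekamp--Massey circumvents this counting obstruction. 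To make your decoder work you would have to store $2k+O(1)$ moments --- which would still give an $O(k)$-size sketch and would suffice for every application in this paper, but it does not prove the proposition as stated with $k+4$ elements; matching the stated size requires the different (and more delicate) recovery mechanism of \cite{rollinghashSODA2019}. A secondary gap: $\MI(x,y)$ as defined in the paper requires both symbols $x[i]$ and $y[i]$ at each mismatch, whereas subtracting the sketches only yields the differences $x[i]-y[i]$; your appeal to ``auxiliary fingerprints weighted by another random $\beta$'' introduces unspecified extra sketch coordinates that are not accounted for in the size bound and are not shown to support the append/remove updates. The rolling-update and Schwartz--Zippel portions of your argument are fine, but they belong to the neighbouring Lemma 2.3 rather than to the comparison statement under review.
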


They also construct the following update procedures for their sketch. We will use them to construct a rolling sketch for edit distance.

\begin{proposition}[Lemma 2.3 of \cite{rollinghashSODA2019}]
For $x\in \{1,\dots, p\}^*$ of length less than $n$ and $a \in \{1,\dots, p\}$, in time $O(k \log p)$ we can compute:
\begin{enumerate}
    \item $\SKHAM_{n,k,p}(xa)$ and $\SKHAM_{n,k,p}(ax)$, given $\SKHAM_{n,k,p}(x)$ and $a$.
    \item $\SKHAM_{n,k,p}(x)$ given $\SKHAM_{n,k,p}(xa)$ or $\SKHAM_{n,k,p}(ax)$, and $a$.
\end{enumerate}
\end{proposition}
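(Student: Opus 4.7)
The plan is to argue that the CKP sketch is, by construction, a vector of $k+4$ field elements each of which is a simple closed-form function of $x$ whose value can be maintained under single-symbol edits by arithmetic in $\mathbb{F}_p$. Concretely, the sketch stores (i) the length $\ell=|x|$, (ii) the random seed, (iii) a Karp-Rabin fingerprint $\mathrm{KR}(x)=\sum_{i=1}^{\ell} x_i r^{i} \bmod p$ for a random $r$, and (iv) a block of $k+1$ power-sum components of the form $P_t(x)=\sum_{i=1}^{\ell} x_i\cdot \phi(i)^{t} \bmod p$ for $t=0,\dots,k$, where $\phi:\mathbb{N}\to\mathbb{F}_p^{*}$ is a pseudorandomly chosen injection. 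Each component is therefore an additive function of the position/symbol pairs, which is exactly what will let updates touch only one position at a time.

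For $\Append$, given $\SKHAM_{n,k,p}(x)$ and $a$, I would update the length counter to $\ell+1$, set $\mathrm{KR}(xa)=\mathrm{KR}(x)+a\cdot r^{\ell+1}$, and set $P_t(xa)=P_t(x)+a\cdot\phi(\ell+1)^{t}$ for each $t$. The value $r^{\ell+1}$ is obtained from the stored $r^{\ell}$ by one multiplication, $\phi(\ell+1)$ is evaluated from the seed in $O(\log p)$ time, and the $k+1$ powers $\phi(\ell+1)^{t}$ are accumulated by iterated multiplication; the total cost is $O(k\log p)$. The inverse operation for $\Remove$ of the last symbol subtracts the same quantities, which is again $O(k\log p)$.

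For prepending a symbol $a$, the positions of all existing symbols shift by one, so the update is multiplicative rather than purely additive. For the fingerprint we use $\mathrm{KR}(ax)=a\cdot r + r\cdot \mathrm{KR}(x)$, a single multiplication and addition. For the power sums the clean version is obtained when $\phi$ is taken to be a geometric progression $\phi(i)=\alpha^{i}$ with $\alpha$ derived from the seed: then $P_t(ax)=a\cdot \alpha^{t}+\alpha^{t}\cdot P_t(x)$, so each of the $k+1$ components updates in $O(\log p)$. For the inverse (removing the first symbol) we additionally multiply by $\alpha^{-t}$ and by $r^{-1}$; since $p$ is prime these inverses are precomputed once from $\alpha^{-1}$ and $r^{-1}$ in $O(\log p)$ time and then reused, so we still pay only $O(k\log p)$.

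The main obstacle is reconciling the update-friendly form with what Proposition \ref{p-skham} demands of the sketch: $k$-mismatch decoding. If $\phi(i)=\alpha^{i}$ is used, then $(P_0,\dots,P_k)$ applied to $x-y$ is a Prony/BCH-like syndrome whose support of size $\le k$ can be recovered by standard Berlekamp--Massey-style decoding, and the stored $\mathrm{KR}$ is used to certify with error $O(1/n)$ that no spurious mismatch was introduced; this is the reason the sketch tolerates both ends of the tradeoff. Once this structural compatibility is established, the update procedures above are routine, and the cost bound $O(k\log p)$ follows by counting the $k+1$ scalar multiplications/additions in $\mathbb{F}_p$.
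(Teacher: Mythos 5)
This proposition is imported verbatim from Clifford--Kociumaka--Porat (their Lemma~2.3); the paper you are working from gives no proof of it, so there is nothing internal to compare against. Your reconstruction of the sketch and of the update rules is essentially the right one, and it is the same in spirit as the actual CKP construction: the sketch is a tuple of $\mathbb{F}_p$-linear functionals of the position/symbol pairs (a Karp--Rabin fingerprint plus Reed--Solomon-type syndromes), and the decisive observation --- which you make explicitly --- is that the coefficient of position $i$ in the $t$-th component must be geometric in $i$ (i.e.\ $\alpha^{it}$ rather than $i^{t}$), since only then does shifting all positions by one act as multiplication by $\alpha^{t}$ on the $t$-th component, making prepending (and its inverse) cost $O(1)$ field operations per component instead of the $O(k)$ binomial convolution that the monomial basis would force. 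With that choice, your accounting --- one modular exponentiation $\alpha^{|x|}$ in $O(\log p)$ field operations, then $k+1$ multiply-adds --- gives the claimed $O(k\log p)$ bound for all four operations, and the component count $k+4$ matches the sketch size stated in the paper.

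One caveat, which does not affect the validity of your proof of \emph{this} lemma: your closing paragraph asserts that the $k+1$ syndromes $(P_0,\dots,P_k)$ applied to $x-y$ can be decoded by Berlekamp--Massey to recover up to $k$ mismatches. The classical Prony/BCH argument needs $2k$ syndromes to uniquely determine a $k$-sparse error vector (the locator polynomial has $k$ unknown coefficients and the error magnitudes another $k$), so recovery from only $k+O(1)$ field elements is genuinely more delicate than you suggest and is the content of Proposition~\ref{p-skham}, not of the present lemma. Since the statement you were asked to prove concerns only the update operations, this digression is harmless here, but you should not rely on the ``standard Berlekamp--Massey'' justification if you ever need to prove the recovery guarantee itself.
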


Corollary 2.5 of \cite{rollinghashSODA2019} states that appending a character to a sketch of $x$ can be done even faster namely in amortized time $O(\log p)$.

\subsection{Locally consistent coloring}\label{s-locally_consistent_parsing}

The following color reduction procedure allows for locally consistent parsing of strings. The technique was originally proposed by Cole and Vishkin \cite{cole1986deterministic} and further studied by Linial \cite{linial1987distributive,linial1992locality}.

\begin{proposition}[\cite{cole1986deterministic,linial1987distributive,linial1992locality}]\label{thm:color_reduction}
    There exists a function $\FL:\Gamma^* \rightarrow \{1,2,3\}^*$ with the following properties. Let $R= \log^* |\Gamma| + 20$. For each string $x \in \Gamma^*$ in which no two consecutive symbols are the same:
    \begin{enumerate}
        \item $|\FL(x)|=|x|$ and $\FL(x)$ can be computed in time $O(R\cdot |x|)$.
        \item For $i\in \{1,\dots, |x|\}$, the $i$-th symbol of $\FL(x)$ is a function of symbols of $x$ only in positions $\{i-R,i-R+1\dots,i+R\}$.
        \item No two consecutive symbols of $\FL(x)$ are the same.
        \item Out of every three consecutive symbols of $\FL(x)$ at least one of them is 1.
        \item If $|x|=1$ then $\FL(x)=3$, and otherwise $\FL(x)$ starts by 1 and ends by either 2 or 3.  
    \end{enumerate}
\end{proposition}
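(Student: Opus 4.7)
The plan is to follow the classical Cole--Vishkin--Linial $\log^{*}$-reduction to shrink the alphabet down to six colors, then apply a constant number of additional local passes to reach three colors with the extra structural guarantees (items 4 and 5).

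First I would view $x$ as a proper coloring of the path on $|x|$ vertices by at most $|\Gamma|$ colors (properness comes from the hypothesis that adjacent symbols differ). Then I iterate the standard one-sided bit-reduction: each position $i\ge 2$ replaces its color by a pair $(j,b)$, where $j$ is the lowest index on which the binary representation of $x[i]$ differs from that of $x[i-1]$ and $b$ is the value of bit $j$ in $x[i]$. This preserves the ``adjacent symbols differ'' property and shrinks a $c$-color palette to one of size at most $2\lceil \log c\rceil$. Iterating $\log^{*}|\Gamma|+O(1)$ times drives the palette down to at most $6$ colors. Since each round consults only the current position and its left neighbor, after $t$ rounds the $i$-th symbol depends only on $x[i-t..i]$; a convention for position $1$ (pair it with a virtual sentinel) only affects an absolute-constant prefix.

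Next I would reduce from $6$ to $3$ colors and then enforce item 4. For the palette reduction I run three sweeps, one per color $c\in\{4,5,6\}$: replace every occurrence of $c$ by the unique element of $\{1,2,3\}$ distinct from both of its current neighbors. This is well defined because the string is properly colored, and each sweep uses only radius-$1$ information while preserving properness. To enforce that every three consecutive symbols contain a $1$, observe that after the previous step any $1$-free length-$3$ window must be $232$ or $323$; a final constant-radius pass rewrites the middle entry of each such window to $1$. Two such windows cannot share the position being rewritten, so the pass is well defined and preserves properness. Item 5 is handled by special-casing $|x|=1$ to output $3$ and by one further constant-radius touch-up of the first and last two symbols so that the output begins with $1$ and ends in $\{2,3\}$.

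Assembling the rounds gives dependence radius $R=\log^{*}|\Gamma|+O(1)\le \log^{*}|\Gamma|+20$ and per-symbol work $O(R)$, for total time $O(R\cdot|x|)$, matching the statement. The main obstacle I expect is not any single step but rather orchestrating the post-processing: I must verify by a short case analysis on length-$O(1)$ neighborhoods that the $6$-to-$3$ collapse, the ``$1$-in-every-window'' pass, and the boundary fix can be layered so that each pass leaves intact the invariants assumed by the later ones (and, in particular, that the boundary fixes do not violate items 3 or 4 near the endpoints). Once this case analysis is checked, items 1--5 follow directly.
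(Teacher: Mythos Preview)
Your approach is essentially the paper's: cite Cole--Vishkin/Linial for items 1--3 and post-process for items 4 and 5. There is, however, a concrete bug in your item-4 pass. You propose a single parallel pass that rewrites the middle of every $232$ or $323$ window to $1$, and you assert this ``preserves properness''. It does not: on the proper $3$-coloring $2323$, position $2$ is the middle of the window $232$ and position $3$ is the middle of the window $323$, so your pass outputs $2113$, violating item 3. The paper fixes this by sequencing the two replacements: first replace every $232$ by $212$ in parallel, \emph{then} replace every remaining $323$ by $313$. In each pass the symbol being overwritten by $1$ has both neighbors equal to the other non-$1$ symbol (so no $11$ arises), and since each pass only turns non-$1$'s into $1$'s it cannot create new $232$ or $323$ blocks; after both passes every length-$3$ window contains a $1$.

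For item 5 the paper also hard-codes outputs for $|x|\le 4$ and gives explicit rewrite rules on a prefix/suffix of length up to four when $|x|>4$. Touching only ``the first and last two symbols'', as you suggest, is not always enough: for instance if after the item-4 pass the string begins $31231\cdots$, any rewrite of only the first two symbols to $1c$ with $c\neq 1$ and $c\neq 2$ forces $c=3$, yielding $13231\cdots$, whose window $323$ violates item 4. The paper's rule rewrites the first four symbols to $1212$ in this case. This is exactly the finite boundary case analysis you flagged as the main remaining obstacle, but be aware the radius is a bit larger than two.
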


The first three items are standard for $R= \log^* |\Gamma| + 10$. The other two can be obtained by a simple modification of the output of the standard function. 
In the output, replace first in parallel each sequence 232 by 212, and then each sequence 323 by 313. This guarantees the fourth condition.
To satisfy the fifth condition, if $|x|=1$, set $\FL(x)=3$, if $|x|=2$, set $\FL(x)=12$, if $|x|=3$, set $\FL(x)=123$, and if $|x|=4$, set $\FL(x)=1212$.
If $|x|>4$ then
replace the sequence at the beginning of the output as follows: if it starts by a word from  $\{2,3\}\{2,3\}1$ replace it by 121,
if it starts by  $\{2,3\}1\{2,3\}\{2,3\}$ replace it by 1212, if it starts by  $\{2,3\}1\{2,3\}1$ replace it by 1231. Then at the end of the sequence, replace $1\{2,3\}1$ by 123,
and $1\{2,3\}\{2,3\}1$ by 1212. This will increase the local dependency to at most $R= \log^* |\Gamma| + 20$.


\subsection{Random hash functions}

For sets $U$ and $V$, we say that $\mathcal{H} = \{h:U\rightarrow V\}$ is a {\em pair-wise independent hash system}
if for all $u,u'\in U$ and $v,v'\in V$ if $u\neq u'$ then $\Pr_{h\in \mathcal{H}}[h(u)=v\;\&\;h(u')=v']=\frac{1}{|V|^2}$,
where $h$ is chosen uniformly at random from $\mathcal{H}$.

\begin{proposition}
Let $\mathcal{H} = \{h:U\rightarrow V\}$ be a  pair-wise independent hash system. Let $U' \subseteq U$ where $|U'|=|V|$.
Then for any $v\in V$, $\Pr_{h\in\mathcal{H}}[\exists u\in U', h(u)=v] \ge 1/2$.
\end{proposition}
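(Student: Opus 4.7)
The plan is to apply the second moment method. Let $n=|V|=|U'|$, fix $v \in V$, and for each $u \in U'$ let $X_u$ be the indicator of the event $h(u)=v$. Set $X = \sum_{u \in U'} X_u$, so that $\{\exists u \in U', h(u)=v\}$ is exactly the event $\{X \ge 1\}$.

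First I would compute the mean. Pair-wise independence, as defined in the excerpt, gives uniform marginals: summing $\Pr[h(u)=v \;\&\; h(u')=v']=1/|V|^2$ over $v'\in V$ yields $\Pr[h(u)=v]=1/|V|=1/n$ for any fixed $u,u'$ with $u\neq u'$. Hence $E[X_u]=1/n$ for every $u\in U'$, and $E[X]=n\cdot(1/n)=1$.

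Next I would bound the variance. Since $X_u\in\{0,1\}$, we have $\mathrm{Var}(X_u)=E[X_u]-E[X_u]^2 = 1/n-1/n^2 \le 1/n$. Pair-wise independence gives $E[X_u X_{u'}]=\Pr[h(u)=v \;\&\; h(u')=v']\cdot$ (with $v'=v$) $=1/n^2 = E[X_u]E[X_{u'}]$ for $u\neq u'$, so $\mathrm{Cov}(X_u,X_{u'})=0$. Summing, $\mathrm{Var}(X)\le n\cdot 1/n=1$, and therefore $E[X^2]=\mathrm{Var}(X)+(E[X])^2\le 2$.

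Finally I would invoke the Paley–Zygmund inequality (equivalently, Cauchy–Schwarz applied to $X\cdot \mathbf{1}_{X>0}$) to conclude
\[
\Pr[X\ge 1]=\Pr[X>0]\ge \frac{(E[X])^2}{E[X^2]}\ge \frac{1}{2}.
\]
There is no real obstacle here: the entire argument is the standard second moment calculation for pair-wise independent families. The only point that needs a line of justification is that pair-wise independence as stated in the paper's definition implies uniform marginals, which follows immediately by marginalizing the joint distribution over the second coordinate.
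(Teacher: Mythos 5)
Your proof is correct, but it goes through a different standard inequality than the paper does. The paper lower-bounds $\Pr[\exists u\in U',\,h(u)=v]$ directly via the degree-two Bonferroni (truncated inclusion--exclusion) bound $\Pr[\bigcup_u A_u]\ge \sum_u\Pr[A_u]-\sum_{\{u,u'\}}\Pr[A_u\cap A_{u'}] = 1-\binom{|V|}{2}\cdot\frac{1}{|V|^2}\ge \frac12$, whereas you set up the counting random variable $X=\sum_u X_u$ and apply Paley--Zygmund, i.e.\ Cauchy--Schwarz, getting $\Pr[X>0]\ge (E[X])^2/E[X^2]\ge 1/2$. Both arguments use exactly the same two inputs from pair-wise independence (uniform marginals, so $E[X]=1$, and the exact value of the pairwise joint probabilities), and both are tight enough to clear the $1/2$ threshold: Bonferroni gives $\frac12+\frac{1}{2n}$ and Paley--Zygmund gives $\frac{n}{2n-1}=\frac12+\frac{1}{2(2n-1)}$, so neither buys anything quantitative over the other here. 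Your side remark that uniform marginals must be derived by marginalizing the joint distribution is a fair point---and note the paper's proof silently relies on the same fact in the step $\sum_{u\in U'}\Pr[h(u)=v]=1$; the derivation needs some $u'\neq u$ to exist in $U$, which is guaranteed except in the trivial case $|V|=1$ where the claim holds vacuously since $h(u)$ must equal $v$.
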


\begin{proof}
     $\Pr_{h\in\mathcal{H}}[\exists u\in U', h(u)=v] \ge  \sum_{u \in U'} \Pr_{h\in\mathcal{H}}[h(u)=v] - \sum_{ \{u,u'\} \subseteq U'} \Pr_{h\in\mathcal{H}}[h(u)=v\;\&\; h(u')=v] = 1 - \binom{|V|}{2} \cdot \frac{1}{|V|^2} \ge \frac{1}{2}.$
\end{proof}

We will use the following class of randomly selected hash functions to chose the splitting points instead of a fully random function $H$ from $\Gamma^2$ to $\{0,\dots,D-1\}$. 
For integral parameters $D$ and $\ell$, we say that {\em $H : \Gamma^2 \rightarrow \{0,1\}$ is $(D,\ell)$-iterated pair-wise independent function}
if $H$ is obtained by selecting independently at random functions $h_1,\dots,h_\ell : \Gamma^2 \rightarrow \{0,\dots, \ell D -1 \}$
from a pair-wise independent hash system and for each $ab \in \Gamma^2$, $H(ab)$ is set to 0 if $\prod_{i=1}^\ell h_i(ab) = 0$,
and $H(ab)$ is set to 1 otherwise.

Such a hash function $H$ has several useful properties for us: it can be described using $O(\ell \cdot (\log \Gamma + \log D + \log \ell))$ bits,  
at any point it can be evaluated in time polynomial in the bit length of the description of $H$ (so for $\ell = O(\log n)$ and $D$ and $\Gamma$ polynomial in $n$ in time $\OO(1)$),
for any pair of symbols $ab\in \Gamma^2$, the probability that $H(ab)=0$ is roughly $1/D$, and for any sufficiently large set $S \subseteq \Gamma^2$, the image of $S$ under $H$ will contain $0$ with high probability. 
In particular we will use the following simple facts.

\begin{proposition}\label{p-ipw}
Let $H : \Gamma^2 \rightarrow \{0,1\}$ be distributed as $(D,\ell)$-iterated pair-wise independent function.
For any $ab\in \Gamma^2$, $\frac{1}{2D} \le \Pr_H[H(ab)=0]\le \frac{1}{D}$.
Furthermore, for any $S\subseteq \Gamma^2$ where $|S|=\ell D$, $\Pr_H[\forall ab\in S, H(ab)\neq 0]\le 1/2^{\ell}$.
\end{proposition}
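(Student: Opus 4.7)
The plan is to prove the two claims separately, exploiting the definition $H(ab)=0 \iff \prod_i h_i(ab) = 0 \iff \exists i,\, h_i(ab)=0$, together with the independence of the $h_i$'s and the pair-wise independence of each $h_i$ individually.

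For the first claim, I will fix $ab \in \Gamma^2$ and reason about the single coordinate $ab$ under $h_1,\dots,h_\ell$. Since each $h_i$ is drawn from a pair-wise independent family into a range of size $\ell D$, marginally $\Pr[h_i(ab)=0] = 1/(\ell D)$, and the $h_i$ are independent across $i$. Thus $\Pr_H[H(ab)\neq 0] = (1 - 1/(\ell D))^{\ell}$. The upper bound $\Pr_H[H(ab)=0]\le 1/D$ then follows immediately from Bernoulli's inequality $(1-x)^{\ell}\ge 1 - x\ell$ with $x = 1/(\ell D)$. For the lower bound $\Pr_H[H(ab)=0] \ge 1/(2D)$, I will use the standard Bonferroni (inclusion-exclusion) lower bound applied to the $\ell$ events $\{h_i(ab)=0\}$: since these are independent in $i$, the second-order term is $\binom{\ell}{2}/(\ell D)^2$, giving
\[
\Pr_H[H(ab)=0] \ge \frac{\ell}{\ell D} - \binom{\ell}{2}\frac{1}{(\ell D)^2} \ge \frac{1}{D} - \frac{1}{2D^2} \ge \frac{1}{2D},
\]
valid for $D\ge 1$.

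For the second claim, I will rewrite the event $\{\forall ab\in S,\, H(ab)\neq 0\}$ as $\{\forall ab\in S,\, \forall i,\, h_i(ab)\neq 0\}$, which by independence of $h_1,\dots,h_\ell$ factors as $\prod_{i=1}^{\ell}\Pr[\forall ab\in S,\, h_i(ab)\neq 0]$. For each factor, I will invoke the preceding proposition with $U=\Gamma^2$, $V=\{0,\dots,\ell D-1\}$, and $U'=S$, whose hypothesis $|U'|=|V|=\ell D$ is exactly our assumption. That proposition gives $\Pr[\exists ab\in S,\, h_i(ab)=0]\ge 1/2$, hence each factor is at most $1/2$, and the product is at most $1/2^{\ell}$.

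The reasoning is elementary and I don't expect a substantial obstacle; the only points that need care are (i) checking that Bonferroni's second-order lower bound really lands above $1/(2D)$ for all relevant $D$, and (ii) making sure the independence across the $\ell$ coordinates is invoked correctly so that the probability in the second claim genuinely factors before applying the previous proposition to each $h_i$ separately.
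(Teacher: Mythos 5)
Your proof is correct, and its overall structure matches the paper's: both use the marginal uniformity of each $h_i$ (which indeed follows from the paper's definition of pair-wise independence by summing over $v'$), the independence across $i$, and, for the second claim, exactly the same factorization $\prod_{i=1}^{\ell}\Pr[\forall ab\in S,\ h_i(ab)\neq 0]\le 1/2^{\ell}$ via the preceding proposition with $U'=S$ and $|V|=\ell D$. The only genuine divergence is in the lower bound $\Pr_H[H(ab)=0]\ge 1/(2D)$: the paper bounds $\Pr_H[H(ab)\neq 0]=(1-\tfrac{1}{\ell D})^{\ell}\le e^{-1/D}\le 1-\tfrac{1}{2D}$, relying on the (slightly non-obvious, but valid) inequality $e^{-z}\le 1-z/2$ for $z\in[0,1]$, whereas you apply the second-order Bonferroni bound to the events $\{h_i(ab)=0\}$ and get $\tfrac{1}{D}-\binom{\ell}{2}\tfrac{1}{(\ell D)^2}\ge \tfrac{1}{D}-\tfrac{1}{2D^2}\ge \tfrac{1}{2D}$ for $D\ge 1$. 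Both computations are correct; yours avoids the exponential-function inequality entirely and is arguably more self-contained, while the paper's is shorter to state. Your upper bound via Bernoulli's inequality is equivalent in strength to the paper's union bound (the union bound has the cosmetic advantage of not needing independence across $i$, but that independence is part of the definition anyway).
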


\begin{proof}
$\Pr_H[H(ab)=0] = \Pr_{h_1,\dots,h_\ell}[\prod_{i=1}^\ell h_i(ab)=0] \le \sum_{i=1}^\ell \Pr_{h_i}[h_i(ab)=0] \le \ell \cdot \frac{1}{\ell D} = \frac{1}{D}$.
Furthermore,  $\Pr_H[H(ab)\neq 0] = (1-\frac{1}{\ell D})^\ell \le e^{-1/D} \le 1-\frac{1}{2D}$ where we use for $z\in [0,1]$ the inequality $e^{-z} \le 1-\frac{z}{2}$.
For the other claim, by the previous proposition, for a pair-wise independent $h_i$, $\Pr_{h_i} [ \exists ab \in S, h_i(ab)=0] \ge 1/2$.
So $\Pr_{h_i} [ \forall ab \in S, h_i(ab)\neq 0] \le 1/2$. If for all $ab\in S$, $H(ab)\neq 0$ then for all $i\in \{1,\dots,\ell\}$,
for all $ab\in S$, $h_i(ab)\neq 0$. Hence by the independence of $h_1,\dots,h_\ell$,
$\Pr_H[\forall ab\in S, H(ab)\neq 0] \le \Pr_{h_1,\dots,h_\ell} [\bigwedge_{i=1}^\ell \forall ab\in S, h_i(ab)\neq 0]
\le \prod_{i=1}^{\ell}\Pr_{h_i} [\forall ab\in S, h_i(ab)\neq 0] \le 1/2^\ell.$
\end{proof}

\section{Decomposition algorithm}\label{s-decomposition}


In this section we describe our main technical tool that we have developed. It is a randomized procedure that splits a string $x$ into blocks $B^x_1,B^x_2,\dots, B^x_s$ and for each block it produces a grammar of size at most $S=\OO(k)$. Furthermore, if $B^x_1,B^x_2,\dots, B^x_s$ is the decomposition 
for a string $x$ and $B^y_1,B^y_2,\dots, B^x_{s'}$ is the decomposition for a string $y$, obtained using the same randomness, where $\ED(x,y)\le k$ then with good probability, $s=s'$ and $B^x_i=B^y_i$ for all but $k$ indices $i$. The edit distance of $x$ and $y$ can be calculated
as $\ED(x,y)=\sum_i \ED(B^x_i,B^y_i)$ where $i$ ranges over the differing blocks.

First we provide an overview of the algorithm, specific details are given in the next sub-section.
The decomposition procedure proceeds in $O(\log n)$ rounds. 
In each round, the algorithm maintains a decomposition of $x$ into {\em compressed} blocks. In each round each block of size at least two is first {\em compressed} and then {\em split}. 
The compression is done by compressing pairs of consecutive symbols into one using a randomly chosen pair-wise independent hash function $C_\ell : \Gamma^2 \rightarrow \Gamma$, where $\ell$ is the round number ({\em level}). 
Non-overlapping pairs of symbols are chosen for compression using a {\em locally consistent coloring} so that every three symbols shrink to at most two. Prior to the compression of pairs we replace each repeated sequence $a^r$ of a symbol $a$, $r\ge 2$, by a special character $\rr_{a,r}$.

The splitting procedure uses a $(D,O(\log n))$-iterated pair-wise independent hash function $H_\ell : \Gamma^2 \rightarrow \{0, 1\}$ to select places where to subdivide each block into sub-blocks, where $D=\OO(k)$ is a suitable parameter. 
We start a new block at each consecutive pair of symbols $ab$, where $H_\ell(ab)=0$.
$H$ is chosen so that for each $ab\in \Gamma^2$, $H(ab)$ happens with probability roughly $1/D$.

After $O(\log n)$ rounds, each block is compressed into at most two symbols and we output a grammar that can generate the block.

For the correctness of the algorithm we will need to establish several properties of the algorithm. Some of these properties are related to behaviour on a single string $x$, others analyze the behaviour of the procedure on a pair of strings $x$ and $y$ of edit distance at most $k$.

The properties we want from the algorithm when it runs on $x$ are the following: In each round, each block should be compressed by factor at least $2/3$ while the size of the required grammar capturing the compression should be $\OO(k)$. 
The former is achieved by the design of the compression procedure. 
The latter goal is provided by the property of the splitting procedure which makes sure that each block $B=b_1b_2\cdot b_m$ resulting from a split has small dictionary $\Dict(B)=\{b_ib_{i+1}, i=1,\dots,m-1\}$. In particular, we require
$|\Dict(B)| = \OO(k)$. The grammar size will be proportional to this dictionary.

For the compression procedure we require that it preserves information so the function $C_\ell$ is one-to-one on each $\Dict(B)$. Since the total size of all dictionaries is bounded by $\OO(n)$ this can be easily achieved by picking $C_\ell$ at random provided that its range size is $\Omega(n^3)$.

Additionally, we need the following property to hold on a pair of strings $x$ and $y$ of edit distance at most $k$ with good probability: The splitting procedure should never split $x$ or $y$ in a {\em region} which is affected by edit operations that transform $x$ to $y$ (for some canonical choice of those operations.) The total size of those regions will be again $\OO(k)$ so we can satisfy this property if each pair of symbols has probability at most $1/\OO(k)$ to start a new block. This constrains the choice of the parameters for the splitting function $H_\ell$.

In the next section we describe the decomposition algorithm fully, and then we establish its properties.

\subsection{Algorithm description}

Let $n$ be an upper bound on the length of the input string and $k\le n$ be given. 
Set $L=\lceil \log_{3/2} n \rceil+3$ to be an upper bound on the decomposition depth.
Let $\Sigma$ be an input alphabet of size at most $n^3$, $\Sigma_c=\{\cc_1,\cc_2,\dots,\cc_{L n^4}\}$ and 
$\Sigma_r=\{\rr_{a,r}, a \in \Sigma \cup \Sigma_c, r \in \{2,3,\dots, n\} \}$ be auxiliary pair-wise disjoint alphabets. 
Let $\Gamma = \Sigma \cup \Sigma_c \cup \Sigma_r$ be the working alphabet, and $\#$ be a symbol not in $\Gamma$. Notice $|\Gamma| = O(n^5 \log n + |\Sigma|)$. 
We call symbols from $\Sigma^0_c=\Sigma$ {\em level-0 compression symbols}, and for $\ell\ge 1$, symbols from $\Sigma_c^\ell=\{\cc_i,\, (\ell-1) n^4 < i \le \ell n^4 \}$ are {\em level-$\ell$ compression symbols}. 
Additionally, symbols from $\Sigma_r^\ell = \{ \rr_{a,r}\in \Sigma_r,$ $a$ is a level-$(\ell-1)$ compression symbol$\}$ are also {\em level-$\ell$ compression symbols}.

Let $R= \log^* |\Gamma| + 20$, $D= 110 R (L+1) k$ and $S=15 DL \log n + 3$ be parameters. 
The algorithm is a recursive algorithm of depth at most $L$. 
It starts by selecting at random several hash functions:
For $\ell=1,\dots, L$, it selects at random a compression hash function $C_\ell:\Gamma^2 \rightarrow \Sigma^\ell_c$ from a pair-wise independent hash family, and for $\ell=0,\dots, L$, it selects at random a splitting function $H_\ell:\Gamma^2 \rightarrow \{0,1\}$ to be a $(D,5\log n)$-iterated pair-wise independent hash function. 

Main building blocks of the algorithm are two functions, $\Compress$ and $\Split$. The first one compresses strings by a factor of $2/3$, and the other splits strings at random points. Their pseudo-code is provided as Algorithm 1 and 2. We describe them next.

$\Compress$. The function $\Compress(B,\ell)$ takes as input a string $B$ over alphabet $\Gamma$ of length at least two, and an integer $\ell \ge 1$, which denotes the level number. Divide $B$ into minimum number of blocks $B_1, \dots, B_m$, $B=B_1B_2B_3\dots B_m$, so that in each $B_i$ either all the characters are the same, i.e. $B_i = a^r$ for some $a\in\Gamma$ and $r\ge 2$, or no two adjacent characters are the same. The first step is to compress the $B_{i}$'s which contain repeated characters by simply replacing the whole $B_i$ with the symbol $\rr_{a,|B_i|}$, where $a$ is the repeated character. Then for the remaining blocks, the following compression is applied: Let $B_i$ be an uncompressed block. Each character of $B_i$ is colored by applying $\FL(B_i)$. Divide $B_i$ into blocks $B_i = B'_1B'_2\dots B'_s$, such that for each $B'_j$ only the first character is colored 1. Now, according to Proposition~\ref{thm:color_reduction}, length of each $B'_j$ is either 2 or 3. If $B'_j = ab$, replace it with $C_{\ell}(ab)$ else if $B'_j = abc$, replace it with $C_{\ell}(ab)\cdot c$, where $a,b,c \in \Gamma$. The actual pseudo-code given below performs the compression of blocks of repeats in two stages, where in the first stage we replace the repeated sequence $a^r$ by $\rr_{a,r} \cdot \#$, and then in the next stage we remove the extra symbol $\#$. This simplifies analysis in Lemma~\ref{l-compress}. Assuming that $C_\ell$ can be evaluated in time $O(1)$, the running time of $\Compress(B,\ell)$ is dominated by the time needed to compute $\FL$-coloring of blocks which is $O(R \cdot |B|)$ in total.

\begin{algorithm}[H]
\label{algo:compress}
   \caption{$\Compress(B,\ell)$}\label{alg-compress}
   \KwIn{String $B$ over alphabet $\Gamma$ of length at least two, and level number $\ell$.}
   \KwOut{String $B''$ over alphabet $\Gamma$.}
   
   \vspace{1mm}
   \hrule\vspace{1mm}

    Divide $B=B_1B_2B_3\dots B_m$ into minimum number of blocks so that each maximal subword $a^r$ of $B$, for $a\in\Gamma$ and $r\ge 2$, is one of the blocks.
    
    \For{each $i\in \{1,\dots, m\}$}{
      \lIf{$B_i=a^r$, where $r\ge 2$}{Set $B'_i = \rr_{a,r} \cdot \#$ and color $\rr_{a,r}$ by 1 and $\#$ by 2.\footnotemark}
      \lElse{Set $B'_i = B_i$ and color each symbol of $B'_i$ according to $\FL(B_i)$}
    }
    
    Set $B'=B'_1B'_2\cdots B'_m$, $B''=\eps$, and $i = 1$.

    \While{$i<|B'|$}{

        \lIf{ $B'[i+1] = \#$}{ $B'' = B'' \cdot B'[i]$}
        \lElse{ $B'' = B'' \cdot C_\ell(B'[i,i+1])$}

        $i = i+2$.

        \lIf{$i\le |B'|$ and $B'[i]$ is not colored 1}{
            $B'' = B'' \cdot B'[i]$, $i = i+1$    
        }
    }

    Return $B''$. 

\end{algorithm}
\footnotetext{If $a=\rr_{b,s}$ for some $b\in \Gamma$ and $s\in \N$, then set $B'_i = \rr_{b,{rs}} \cdot \#$. However, such a situation should never happen during the execution of the algorithm as level-$\ell$ compression symbol can be introduced only at level $\ell$.}

$\Split$. The function takes as input a string $B$ over alphabet $\Gamma$ of length at least two, and an integer $\ell \ge 1$. The function splits the string $B$ into smaller blocks.
The algorithm works as follows: 
For each $i  \in \{2,\dots, |B|-1\}$, if $H_\ell(B[i,i+1])=0$, start a new block at position $i$. 
The running time of $\Split(B,\ell)$ is dominated by the time to evaluate $H_\ell$ at $|B|-2$ points.

\begin{algorithm}[H]
   \caption{$\Split(B,\ell)$}\label{alg-split}
   \KwIn{String $B$ over alphabet $\Gamma$ of length at least two, and level number $\ell$.}
   \KwOut{A sequence of strings $(B_0,B_1,\dots,B_s)$ over alphabet $\Gamma$.}
   
   \vspace{1mm}
   \hrule\vspace{1mm}

    Let $i_1< \dots <i_s$ be all $i\in \{2,\dots, |B|-1\}$ where $H_\ell(B[i,i+1])=0$. Set $s=0$ if no such $i$ exists.
    
    Let $i_0=1$ and $i_{s+1}=|B|+1$.
    
    For $j=0,\dots, s$, set $B_j = B[i_j,i_{j+1})$.

    Return $(B_0,B_1,\dots,B_s)$.

\end{algorithm}

The main recursive step of the algorithm is encompassed in function $\Process$. The function gets a block $B\in \Gamma^*$ as its input. 
The block might have already been compressed previously, so the function also gets partial grammars that allow decompression of the block. 
If the block is already of length at most two, then the function outputs the block. 
Otherwise it compresses the block $B$ using $\Compress$, then it subdivides the compressed block using $\Split$, and invokes itself recursively on each sub-block.
For the output, each block is represented by a grammar. The grammar is reconstructed from the compressed block and its partial grammars by a simple bread-first search algorithm provided in the function $\Grammar$.

\begin{algorithm}[H]
   \caption{$\Process(B,(D_1,D_2,\dots,D_{\ell-1}),\ell)$}\label{alg-process}
   \KwIn{String $B \in \Gamma^*$, a sequence of partial grammars $D_i$ over $\Gamma$ for decompressing $B$, and level number $\ell$.}
   \KwOut{A sequence of blocks of $B$ each encoded by a grammar.}
   
   \vspace{1mm}
   \hrule\vspace{1mm}

   \lIf{$|B|\le 2$}{
       Output $\Grammar(B,(D_1,D_2,\dots,D_{\ell-1}),\ell-1)$ and return
   }

   $A=\Compress(B,\ell)$.

   $(B_0,B_1,\dots,B_s) = \Split(A,\ell)$.

   $D_\ell = \{C_\ell(ab)\rightarrow ab;\; ab\in \Dict(B)\}$
   
   For $i=0,\dots, s$, $\Process(B_i,(D_1,\dots,D_{\ell-1},D_\ell),\ell+1)$.
\end{algorithm}

To decompose an input string $x$ into blocks, we first apply function $\Split(x,0)$ to $x$ and then invoke $\Process(B,(),1)$ on each of the obtained blocks $B$. 
Breaking the string $x$ into sub-blocks guarantees that each block passed to $\Process$ has small dictionary whereas the dictionary of $x$ could have been arbitrarily large.

\begin{algorithm}[H]
\label{alg-prune}
   \caption{$\Grammar(B,(D_1,D_2,\dots,D_{\ell}),\ell)$}
    \KwIn{String $B \in \Gamma^*$, a sequence of partial grammars $D_i$ over $\Gamma$ for decompressing $B$.}
   \KwOut{The smallest grammar $G$ for $B$ based on the grammars $D_i$.}
   
   \vspace{1mm}
   \hrule\vspace{1mm}

    Let $C = \{c \in \Sigma_c: c \text{ appears in }B\text{ or }\rr_{c,r}  \text{ appears in }B \text{ for some }r\}$. {\em \hfil // Symbols needed to decompress $B$}
    
    $G = \{ \# \rightarrow B\}$.
    
    \For{ $j=\ell,\dots, 1$}
    {
        \For{each $c \in C$}{
            \lIf{$c\rightarrow ab \in D_j$ for some $ab\in \Gamma^2$}{
                $G =G \cup \{ c \rightarrow ab\}$,

                \Indp$C = C \cup \{c' \in \Sigma_c; c' \in \{a,b\} \text{ or }\rr_{c',r} \in \{a,b\} \text{ for some }r\}$
            }
        }
    }

    For each $\rr_{a,r}$ appearing in any of the rules in $G$, add $\rr_{a,r} \rightarrow a^r$ to $G$.

    Return $G$.

\end{algorithm}

\subsection{Correctness of the decomposition algorithm}\label{s-correctness}

Our goal is to establish the following theorem which is a stronger version of Theorem~\ref{t-main1}:

\begin{theorem}\label{t-decomposition}
    Let $k\le n$ be integers.
    Let $x$ and $y$ be a pair of strings of length at most $n$ with $\ED(x,y) \le k$. Let $G^x_1,\dots,G^x_s$ and  $G^y_1,\dots,G^y_{s'}$ 
    be the sequence of grammars output by the decomposition algorithm on input $x$ and $y$ respectively, using the same choice of random functions
    $C_1,\dots,C_L$ and $H_0,\dots,H_L$. The following is true for $n$ large enough:
    \begin{enumerate}
        \item With probability at least $1-2/n$, $x=\eval(G^x_1)\cdots \eval(G^x_s)$ and $y=\eval(G^y_1)\cdots \eval(G^y_{s'})$.
        \item With probability at least $1-2/n$, for all $i \in \{1,\dots,s\}$ and $j\in \{1,\dots, s'\}$, $|G^x_i|,|G^y_j| \le S$.
        \item With probability at least $9/10$, $s=s'$, $G^x_i=G^y_i$, for all $i\in \{1,\dots,s\}$ except for at most $k$ indices $i$, and
         $\ED(x,y)=\sum_i \ED(\eval(G^x_i),\eval(G^y_i))$.
    \end{enumerate}
\end{theorem}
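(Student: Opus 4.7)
I would prove each of the three items as a separate high-probability event and union-bound the failure probabilities at the end.

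For item (1), the decomposition is faithful unless some $C_\ell$ collides two distinct dictionary pairs; granted injectivity of every $C_\ell$ on every dictionary, a routine induction on recursion depth verifies $x=\eval(G^x_1)\cdots\eval(G^x_s)$ because $\Grammar$ collects exactly the rules needed for decompression.  Each $C_\ell$ is pair-wise independent into $|\Sigma_c^\ell|=n^4$ values, and the total number of distinct dictionary pairs summed over $x$, $y$ and all $L$ levels is $O(n)$ (post-compression lengths shrink by a factor of $2/3$ per level), so a union bound over $O(n^2 L)$ candidate collisions gives failure probability $O(\log n/n^2)$, comfortably within the $2/n$ budget.

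For item (2), I would prove the stronger invariant that every block passed to $\Process$ has length at most $5D\log n$, which then yields $|G^x_i|\le 15DL\log n+3=S$ because $\Grammar$ emits at most one rule per dictionary pair per level plus one rule per run-symbol $\rr_{a,r}$.  After the $\Compress$ step no two consecutive symbols are equal, so any window of length $5D\log n$ contains at least $\ell D$ distinct consecutive pairs for $\ell=5\log n$.  The second clause of Proposition~\ref{p-ipw} then bounds the probability that \emph{none} of them hashes to $0$ under $H_\ell$ by $2^{-5\log n}=n^{-5}$, and a union bound over candidate windows, the $L$ levels, and both strings keeps the total failure probability at most $2/n$.

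For item (3), fix a canonical optimal edit script turning $x$ into $y$ with $k'\le k$ operations.  At each level $\ell$ I would define the \emph{fragile region} as the positions in the level-$\ell$ string whose symbols depend on an edit-affected position; since $\FL$, $\Compress$, and $\Split$ all use windows of width $O(R)$, and compression shrinks strings by $2/3$ while adding $O(R)$ around each affected island, the fragile region has size $O(Rk)$ per level and $O(LRk)$ summed over levels.  By Proposition~\ref{p-ipw} each pair is a split point with probability at most $1/D$, so the expected number of splits inside fragile regions is $O(LRk)/D\le 1/20$ for the choice $D=110R(L+1)k$, and Markov gives probability at least $19/20$ that no fragile split occurs.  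On that event $x$ and $y$ are cut at identical non-fragile positions at every level, so $s=s'$ and the non-fragile blocks coincide verbatim, at most $k$ block-indices intersect a fragile region (charging each to the edit that caused it), and the edit-distance additivity $\ED(x,y)=\sum_i \ED(\eval(G^x_i),\eval(G^y_i))$ follows because an optimal alignment may be chosen not to cross identical non-fragile blocks.

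\textbf{Main obstacle.}  The delicate part is item (3): defining the fragile region precisely and tracking how it evolves across rounds — in particular how the $2/3$ multiplicative shrinkage interacts with the additive $O(R)$ blow-up from $\FL$ and the run-compression symbols $\rr_{a,r}$ — so that the summed size over all $L$ levels is truly $O(LRk)$, and then justifying both the $k$-block bound and the claim that an optimal edit script can be realigned to respect the decomposition.  Items (1) and (2) are essentially pair-wise-independence calculations plus a union bound; the constant $110$ in $D=110R(L+1)k$ is calibrated precisely so that item (3) survives with slack, after which union-bounding with items (1) and (2) still leaves probability at least $9/10$.
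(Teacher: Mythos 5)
Your items (1) and (3) follow essentially the paper's route: item (1) is the same injectivity-of-$C_\ell$ union bound, and your ``fragile region'' for item (3) is exactly the paper's notion of an \emph{undesirable split} into the $u_j$'s/$v_j$'s and the boundary symbols of the $w_j$'s, with the same $O(RLk)/D$ calculation. Note, however, that the entire technical content of item (3) is the claim you defer to your ``main obstacle'': that the fragile islands stay of size $O(R)$ \emph{per island} at every level rather than accumulating. This is the paper's Lemma~\ref{l-compress}, whose proof requires carefully reassigning runs of repeated symbols to neighbouring islands and tracking the $\FL$-coloring boundary effects; without it the per-level bound $O(Rk)$ is unproven, so your item (3) is a correct plan but not yet a proof.

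Item (2) contains a genuine error. The invariant you propose --- every block passed to $\Process$ has \emph{length} at most $5D\log n$ --- is false. Take $x=(abc)^{n/3}$: after $\Split(x,0)$ the dictionary consists of only the three pairs $ab$, $bc$, $ca$, so with probability at least $(1-1/D)^3$ none of them hashes to $0$ and the entire string of length $n$ remains a single block. The supporting claim is also false: ``no two consecutive symbols are equal'' does not imply that a window of length $5D\log n$ contains $5D\log n$ \emph{distinct} consecutive pairs (again, $ababab\cdots$ has only two distinct pairs). Proposition~\ref{p-ipw} needs a set $S$ of $\ell D$ \emph{distinct} pairs, so your application of it does not go through. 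The correct statement, and the one the paper proves (Lemma~\ref{l-decomposition1}), bounds the \emph{dictionary size} $|\Dict(B_j)|$ of each block by $5D\log n$: if some block had a larger dictionary, there would be a sub-interval $B[r,t]$ whose dictionary has exactly $5D\log n$ distinct pairs and which contains no split point, an event of probability at most $n^{-5}$ per interval. Block lengths can be $\Omega(n)$; it is the dictionary size, not the length, that controls the grammar size, and your deduction $|G^x_i|\le S$ must be routed through that quantity.
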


By union bound, all three parts happen simultaneously with probability at least $9/10-4/n$ which is $\ge 4/5$ for $n$ large enough.

 To prove the theorem we make some simple observations about the algorithm, first.

\begin{lemma}
    For any string $B$ of length at least two, and $\ell \ge 1$,   $|\Compress(B,\ell)| \le \frac{2}{3}|B|+1$ and $|\Compress(B,\ell)|<|B|$.
\end{lemma}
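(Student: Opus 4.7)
The plan is to bound $|\Compress(B,\ell)|$ in two stages by tracking the length through the intermediate string $B'$ that is constructed inside $\Compress$ (namely, the string obtained after every repeat run $a^r$ with $r\ge 2$ has been replaced by $\rr_{a,r}\cdot\#$, but before the pair-compression scan). I would first show $|B'|\le|B|$, and then $|B''|\le\frac{2}{3}|B'|$, where $B''$ is the returned string.

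For $|B'|\le|B|$, I would inspect the block decomposition $B=B_1\cdots B_m$ that the algorithm uses. For a repeat block $B_i=a^r$ with $r\ge 2$ we have $|B'_i|=2\le r=|B_i|$; for a non-repeat block $B'_i=B_i$, so $|B'_i|=|B_i|$. Summing over $i$ gives the claim.

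For $|B''|\le\frac{2}{3}|B'|$, I would count while-loop iterations by their output size. Let $p$ be the number of iterations that emit only the pair symbol (advancing $i$ by $2$), and $t$ the number that additionally emit the leftover symbol (advancing $i$ by one more, hence $3$ in total). Each iteration thus consumes either $2$ or $3$ positions of $B'$ and emits either $1$ or $2$ symbols, so $2p+3t\le|B'|$ and $|B''|=p+2t$. The inequality $p+2t\le\frac{2}{3}(2p+3t)$ reduces to $p\ge 0$, so $|B''|\le\frac{2}{3}(2p+3t)\le\frac{2}{3}|B'|$.

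Composing the two bounds yields $|B''|\le\frac{2}{3}|B|\le\frac{2}{3}|B|+1$, and the strict bound $|B''|<|B|$ follows because $|B|\ge 2$ forces $\frac{2}{3}|B|<|B|$. The main thing to justify carefully is the counting inequality used in the second step, namely that each iteration of the loop in Algorithm~\ref{alg-compress} advances $i$ by exactly $2$ (pair only) or $3$ (pair plus leftover) and emits exactly $1$ or $2$ symbols to $B''$ respectively; this is immediate from the pseudocode. Note that the argument does not require the loop to consume \emph{all} of $B'$: the bound $2p+3t\le|B'|$ suffices, and any residue left unprocessed only makes the inequality slacker, so the length bound holds robustly.
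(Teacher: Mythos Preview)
Your argument is correct and in fact proves the slightly stronger bound $|\Compress(B,\ell)|\le\frac{2}{3}|B|$ (without the $+1$). The key observations you use --- that $|B'_i|\le|B_i|$ block by block, and that every while-loop iteration advances $i$ by exactly $2$ or $3$ while emitting exactly $1$ or $2$ symbols respectively --- are both immediate from the pseudocode, and the inequality $p+2t\le\frac{2}{3}(2p+3t)$ is trivially equivalent to $p\ge 0$. Your remark that $2p+3t\le|B'|$ (rather than equality) is also right: the final value of $i$ is at most $|B'|+1$, so the total advance $2p+3t=i_{\text{final}}-1\le|B'|$.

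This is a genuinely different route from the paper's own proof. The paper reasons semantically about each block $B_i$ in the decomposition $B=B_1\cdots B_m$: repeat blocks $a^r$ shrink to one symbol, non-repeat blocks of length $\ge 2$ shrink by a factor $\le 2/3$ via the $\FL$-coloring, and singleton blocks do not shrink at all but must be sandwiched between repeat blocks. The paper then identifies the worst case as an alternation of singletons and length-$2$ repeat blocks, which yields the bound $\lfloor\frac{2}{3}|B|\rfloor+1$. Your approach sidesteps this case analysis entirely by factoring through the intermediate string $B'$ and doing a uniform counting argument on the loop, which is both shorter and sharper. The paper's approach, on the other hand, makes the role of the $\FL$-coloring and the sandwich structure more transparent, which is conceptually closer to how the compression is later analyzed in Lemma~\ref{l-compress}.
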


\begin{proof}
Let $B=B_1B_2B_3\dots B_m$ be as in the procedure. 
Every block $B_i$ that equals to $a^r$, for some $a$ and $r\ge 2$, is reduced to one symbol by the compression. 
The other blocks are colored using $\FL(\cdot)$ and compressed. 
Unless a block $B_i$ is of size one, the coloring induces division of the block $B_i$ into subwords of size two or three, 
where the former is compressed into one symbol and the latter into two symbols.
Hence, each such a block is compressed to at most $2/3$ of its size.
So the only blocks $B_i$ that do not shrink are of size one, and are sandwiched between blocks of repeated symbols 
(that shrink by a factor of at least two). 
The worst-case situation is when $m$ is odd, blocks $B_i$  are of size one for odd $i$,
and of size two for even $i$. In that case the original string $B$ shrinks to size $\lfloor \frac{2}{3}|B| \rfloor + 1$. 
This proves the first inequality.
The second inequality is also clear from the analysis above: The only time the string does not shrink is if it is of size one.
\end{proof}

\begin{corollary}\label{c-decomposition1}
   On a string $B$ of length at most $n$, the depth of the recursive calls of $\Process$ is at most $L$.
\end{corollary}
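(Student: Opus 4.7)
The plan is to establish the depth bound by tracking how the maximum block size shrinks with each level of recursion. The only way $\Process$ recurses is on the blocks returned by $\Split(\Compress(B,\ell),\ell)$, and each such sub-block is a substring of $\Compress(B,\ell)$, so its length is at most $|\Compress(B,\ell)|$. By the preceding lemma, $|\Compress(B,\ell)| \le \tfrac{2}{3}|B|+1$, and strictly less than $|B|$. Moreover, the base case of $\Process$ is triggered as soon as the block has length at most $2$.

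Let $T_\ell$ denote the supremum of block lengths that can appear at recursion depth $\ell$ of $\Process$ (where the top-level invocation is depth $1$). Since the very first step of the decomposition algorithm calls $\Split(x,0)$ and then invokes $\Process$ on sub-blocks, we have $T_1 \le n$. By the two observations above, whenever a recursive call at depth $\ell$ actually occurs (i.e.\ the block has length $>2$), we have $T_{\ell+1} \le \tfrac{2}{3} T_\ell + 1$. Solving this recurrence by the substitution $U_\ell = T_\ell - 3$ yields $U_{\ell+1} \le \tfrac{2}{3} U_\ell$, hence $T_\ell \le 3 + (2/3)^{\ell-1}(n-3)$.

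For $\ell - 1 > \log_{3/2}(n-3)$ we get $(2/3)^{\ell-1}(n-3) < 1$, and since block lengths are integers this forces $T_\ell \le 3$. One more compression step then uses the strict-decrease clause $|\Compress(B,\ell)| < |B|$ to bring the length down to at most $2$, at which point $\Process$ enters its base case and stops recursing. Therefore the total recursion depth is bounded by $\lceil \log_{3/2}(n-3)\rceil + 2 \le \lceil \log_{3/2} n \rceil + 3 = L$.

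The calculation is elementary; the only minor subtlety is that the naive bound $T_\ell \le (2/3)^\ell n + O(1)$ saturates at the fixed point $3$ rather than $0$ or $2$, so one must invoke the strict-decrease part of the lemma to cross the last gap from size $3$ to size $\le 2$. This is the only non-automatic step, and the slack built into the definition $L = \lceil \log_{3/2} n \rceil + 3$ absorbs it comfortably.
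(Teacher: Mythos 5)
Your proof is correct and follows essentially the same route as the paper, which simply iterates the bound $|\Compress(B,\ell)|\le \tfrac{2}{3}|B|+1$ to conclude that blocks at depth $\ell$ have size at most $(2/3)^\ell n+3$ and that $L$ levels suffice. Your version is a more careful rendering of the same argument: you make explicit the fixed point at $3$ and the use of the strict-decrease clause $|\Compress(B,\ell)|<|B|$ to cross from size $3$ to size $\le 2$, a step the paper leaves implicit.
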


Indeed, from the previous lemma it follows that each block after $\ell$ compressions and splits is of size at most $(2/3)^\ell |B| + 3$.
Hence, after $L=\lceil \log_{3/2} n \rceil+3$ recursive calls $\Process$ must stop the recursion.

\begin{lemma}\label{l-decomposition1}
    Let $B\in \Gamma^*$ be of length at most $n$, and $\ell \in \{0,\dots,L\}$. Let $(B_0,B_1,\dots, B_s)=\Split(B,\ell)$ where $H_\ell : \Gamma^2 \rightarrow \{0,\dots,D-1\}$ is chosen to be a random $(D, 5\log n)$-iterated pair-wise independent hash function. Then with probability at least $1-1/n^3$, for all $j\in \{0,\dots,s\}$, $|\Dict(B_j)| \le 5D \log n$.
\end{lemma}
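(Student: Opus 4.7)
The plan is to reduce the lemma to a union bound over at most $n$ deterministic ``witness events'', each of which has low probability by Proposition~\ref{p-ipw}. First I would fix, as a function of $B$ alone, a canonical witness at every starting position: for each $p \in \{1, \dots, |B|\}$ let $q_p$ denote the smallest index $q \ge p$ such that $|\Dict(B[p,q])| \ge 5D \log n$, or $q_p = \infty$ if no such $q$ exists. Note that $q_p$ depends only on $B$, not on the random choice of $H_\ell$.

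The next step is to show that if some block $B_j = B[i_j, i_{j+1})$ violates $|\Dict(B_j)| \le 5D \log n$, then one of these witnesses is ``trapped inside'' $B_j$ in a way that contradicts $H_\ell$. Concretely, set $p = i_j + 1$ when $j > 0$ and $p = i_j = 1$ when $j = 0$. Dropping the first symbol of $B_j$ removes at most one pair from the dictionary, so $|\Dict(B[p, i_{j+1}-1])| \ge 5D \log n$, which forces $q_p \le i_{j+1}-1$. Now the key observation on the $\Split$ procedure is that for every $i \in \{i_j+1, \dots, i_{j+1}-2\}$ we have $H_\ell(B[i,i+1]) \neq 0$, since otherwise a new split would have been started at position $i$. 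In particular $H_\ell(ab) \neq 0$ for every $ab \in \Dict(B[p, q_p])$, and this set has size at least $\ell D$ for $\ell = 5 \log n$.

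For each fixed $p$, the set $\Dict(B[p, q_p])$ is determined by $B$, so Proposition~\ref{p-ipw} applies: taking any subset of size exactly $\ell D$, the probability that $H_\ell$ avoids zero on all of them is at most $2^{-\ell} = 1/n^5$. A union bound over the at most $n$ starting positions $p$ shows that the probability that some block has dictionary larger than $5D \log n$ is at most $n \cdot n^{-5} = 1/n^4 \le 1/n^3$, as required.

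The only non-routine step is the ``shift by one'' in the choice of $p$, which is necessary because the first pair of each block $B_j$ with $j > 0$ is itself a split pair (so $H_\ell = 0$ there), and would break the proposition's hypothesis if naively included. Once this shift is in place, everything else is a clean union bound, and the parameter $\ell = 5\log n$ is tuned so that $2^{-\ell}$ beats the union-bound factor $n$ by a comfortable margin.
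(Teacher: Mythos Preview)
Your approach is essentially the paper's: find a witness interval with dictionary of size $5D\log n$ on which $H_\ell$ avoids $0$, then apply Proposition~\ref{p-ipw} and a union bound. By fixing a canonical endpoint $q_p$ for each start $p$ you union-bound over $n$ positions rather than the paper's $n^2$ pairs $(r,t)$, which even sharpens the bound to $1/n^4$.

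One small glitch: for $j=0$ you take $p=1$, but the $\Split$ procedure never tests $i=1$, so $H_\ell(B[1,2])$ may well equal $0$; then the pair $B[1,2]\in\Dict(B[1,q_1])$ violates the hypothesis of Proposition~\ref{p-ipw}. The fix is to set $p=2$ for $j=0$ as well --- your own ``drop one symbol, lose at most one dictionary pair'' argument already handles this uniformly, and the paper does exactly this (it takes $r$ to be the position of the \emph{second} symbol of $B_j$ for every $j$).
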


\begin{proof}
If for some $j\in \{0,\dots,s\}$, $|\Dict(B_j)| > 5D \log n$, then there exists $1 < r < t \le |B|$ such that $|\Dict(B[r,t])| = 5D \log n$
and for all $i\in \{r,\dots,t-1\}$, $H_\ell(B[i,i+1]) \neq 0$. 
(Pick $r$ to be the position in $B$ of the second symbol of $B_j$ and $r$ some later position in $B_j$.) 
For a fixed $r$ and $t$ with $|\Dict(B[r,t])| = 5D \log n$, $\Pr_{H_\ell}[ \forall i \in \{r,\dots,t-1\}, H_\ell(B[i,i+1])\neq 0] \le 2^{-5\log n}$ by Proposition~\ref{p-ipw}. 
Hence, $\Pr_{H_\ell}[ \exists 1<r<t \le |B|, |\Dict(B[r,t])| = 5D\log n \text{ and } \forall i \in \{r,\dots,t-1\}, H_\ell(B[i,i+1])\neq 0] \le |B|^2 \frac{1}{n^5} \le n^2/ n^{5} \le 1/n^3.$
\end{proof}

\begin{lemma}
    For $B\in \Gamma^*$, $\ell \le L$, $D_1,\dots,D_\ell$ be partial grammars over $\Gamma$, $\Grammar(B,(D_1,\dots,D_\ell),\ell)$ outputs a grammar $G$ of size at most $1+|B|+ 3\sum_i |D_i|$, and runs in time $\OO(\ell(|B|+|G|))$.
\end{lemma}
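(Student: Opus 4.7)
The plan is to prove the two bounds by straightforward bookkeeping: first I count rules in $G$ by their origin within the algorithm, and then I trace the algorithm's work in the same way to bound its running time.

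For the size bound, I will group the rules of $G$ into three families according to where they are inserted. (i) The initial rule $\# \rightarrow B$ contributes exactly $1$ rule. (ii) In the main double loop, each iteration over $c \in C$ adds at most one rule $c \rightarrow ab$ drawn from $D_j$; since each rule of each $D_j$ is copied at most once, these contribute at most $\sum_{j=1}^{\ell} |D_j|$ rules. (iii) The final loop adds one rule $\rr_{a,r} \rightarrow a^r$ for each distinct $\rr$-symbol appearing on some right-hand side currently in $G$. The $\rr$-symbols that can trigger step (iii) live on at most two kinds of right-hand sides: the RHS of $\# \rightarrow B$, which has length $|B|$ and thus contains at most $|B|$ distinct $\rr$-symbols; and the RHS of each rule copied in step (ii), each of which is a pair and so contributes at most $2$ $\rr$-symbols, for a total of at most $2 \sum_j |D_j|$. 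Summing over the three families yields $|G| \le 1 + |B| + \sum_j |D_j| + 2\sum_j |D_j| = 1 + |B| + 3\sum_j |D_j|$, which is the claimed bound.

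For the running time, I treat each $D_j$ as stored in a hash table keyed on its left-hand sides, so that the query ``is there a rule $c \rightarrow ab \in D_j$?'' takes $\OO(1)$ time. Building the initial set $C$ requires a single pass over $B$ and costs $O(|B|)$. The outer loop performs $\ell$ iterations, and in each iteration the inner loop scans all current $c \in C$, performs an $\OO(1)$ lookup in $D_j$, and possibly appends symbols to $C$. Since $|C|$ never exceeds $|G|$ (each symbol ever added to $C$ witnesses a rule added or about to be added to $G$), one outer iteration runs in $\OO(|G|)$ time, giving $\OO(\ell \cdot |G|)$ overall for the main loop. The final pass that adds the $\rr_{a,r} \rightarrow a^r$ rules takes $O(|G|)$ time. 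Adding these contributions gives total running time $\OO(\ell(|B|+|G|))$.

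The only genuinely subtle point — and the most likely source of an off-by-a-constant error — is the counting of $\rr$-symbols in part (iii) of the size argument: one must notice that a single rule of $D_j$ has a length-$2$ right-hand side and hence contributes at most two new $\rr$-rules, yielding the factor of $3$ rather than $2$ in $3\sum_j |D_j|$. Once this is carefully tracked, both statements follow essentially by inspection of the pseudocode.
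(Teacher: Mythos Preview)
Your proof is correct and follows essentially the same approach as the paper's: count the starting rule, the rules imported from the $D_j$, and the $\rr$-rules separately for the size bound, then bound the running time via the size of $C$ with hash-table lookups into the $D_j$. One minor imprecision: your claim that $|C|$ never exceeds $|G|$ is not quite justified for arbitrary partial grammars $D_j$ (symbols initially placed in $C$ from $B$ need not all find rules), but since the initial $C$ has size at most $|B|$ and each subsequent addition to $C$ is triggered by a rule just added to $G$, one has $|C|=O(|B|+|G|)$, which still yields the stated time bound.
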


\begin{proof}
First we add the starting rule to $G$.
Then in each iteration of the main loop we can add a rule of the type $c \rightarrow ab$ to $G$ from some $D_j$.
Hence, the number of such rules in $G$ is at most $1+\sum_j |D_j|$. 
Last, we add to $G$ rules for symbols from $\Sigma_r$ that appear on right hand sides of rules in $G$.
This increases the size of $G$ by at most $|B|+2\sum_j |D_j|$.
If $D_j$'s are stored using some efficient data structure such as binary search trees or hash tables indexed by the left hand side of rules, finding and adding each new rule to $G$ takes time $\OO(1)$.
The size of $C$ is bounded by $O(|B|+|G|)$ so the nested loops make at most $O(\ell(|B|+|G|))$ iterations in total.    
Hence, the total running time is bounded as claimed.
\end{proof}

During processing of a string $x$, there are at most $Ln$ calls to the function $\Split$. 
(The actual number of calls is $O(n)$ as the strings shrink exponentially but our simple upper bound suffices.) 
The probability that any one of them would produce a block with dictionary larger than $5D \log n$ is at most $Ln/n^3$. 
If all dictionaries are of size at most $5D \log n$ then so are all the partial grammars produced by $\Process$.
We can conclude the next corollary which implies the second item of Theorem~\ref{t-decomposition}.

\begin{corollary}\label{c-part2}
For $n$ large enough, on a string $x$ of length at most $n$, 
processing the string $x$ produces a sequence of grammars each of size at most $S= 15DL \log n + 3$ 
with probability at least $1-1/n$.
\end{corollary}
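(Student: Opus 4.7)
The plan is to derive the corollary as a union-bound consequence of Lemma~\ref{l-decomposition1} combined with the size bound on the output of $\Grammar$. The key observation is that every partial grammar $D_\ell$ built inside $\Process$ is nothing but $\{C_\ell(ab)\rightarrow ab : ab \in \Dict(B)\}$ for the block $B$ currently being compressed, so its size equals the dictionary size of $B$. Hence if we can guarantee that every block ever handed to $\Process$ throughout the whole recursion has dictionary size at most $5D\log n$, then every $D_\ell$ in every recursive call has size at most $5D\log n$ as well.

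First I would count the total number of invocations of $\Split$ across the entire execution on input $x$. By Corollary~\ref{c-decomposition1} the recursion depth of $\Process$ is at most $L$, and at each fixed depth the blocks form a partition of (a compressed version of) a substring of $x$, so the number of blocks at any depth is at most $n$. Thus the total number of $\Split$ calls during the processing of $x$ is at most $Ln$ (this crude bound is enough; one could sharpen it using the factor-$2/3$ shrinkage, but there is no need). For each individual call, Lemma~\ref{l-decomposition1} gives failure probability at most $1/n^3$, so by a union bound the probability that some call produces a block whose dictionary exceeds $5D\log n$ is at most $Ln/n^3 = L/n^2$. Since $L = O(\log n)$, this is below $1/n$ once $n$ is large enough.

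Conditioned on this good event, consider any output grammar $G$ produced by $\Grammar(B,(D_1,\dots,D_\ell),\ell)$ at a leaf of the recursion. The leaf condition $|B|\le 2$ ensures that the starting rule contributes length at most $2$ to the size bound, while $\ell\le L$ and $|D_i|\le 5D\log n$ for every $i$. Plugging into the lemma that bounds $|\Grammar(\cdot)|$ by $1+|B|+3\sum_i |D_i|$ gives
\[
|G| \;\le\; 1 + 2 + 3L\cdot 5D\log n \;=\; 15DL\log n + 3 \;=\; S,
\]
which is exactly the claimed size bound. The main (and essentially only) subtlety is making sure the union bound is taken over \emph{all} $\Split$ calls rather than just the top-level one; once the crude count of $Ln$ is in hand, the rest is routine.
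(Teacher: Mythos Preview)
Your proof is correct and follows essentially the same approach as the paper: a union bound over the at most $Ln$ calls to $\Split$, each failing with probability at most $1/n^3$ by Lemma~\ref{l-decomposition1}, combined with the size bound $1+|B|+3\sum_i|D_i|$ on the output of $\Grammar$. You spell out the final arithmetic $1+2+3L\cdot 5D\log n = S$ more explicitly than the paper does, but the argument is the same.
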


For the grammars produced by the algorithm to be deterministic, we need that each $C_\ell$ is one-to-one on $\Dict(B)$
for each block $B$ on which $\Compress(B,\ell)$ is invoked. That will happen with high probability by a standard argument:

\begin{lemma}\label{l-onetoone}
    Let $B\in \Gamma^*$ be of length at most $n$ and $\ell \in \{1,\dots,L\}$. 
    Let $C_\ell:\Gamma^2 \rightarrow \{\cc_i,\, (\ell-1) n^4 < i \le \ell n^4 \}$ be chosen at random from a pair-wise independent family of hash functions. 
    Then with probability at least $1-|B|/n^3$, $C_\ell$ is one-to-one on $\Dict(B)$.
\end{lemma}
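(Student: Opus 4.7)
The plan is a direct union bound on collisions using the pair-wise independence of $C_\ell$. The range of $C_\ell$ is $\{\cc_i : (\ell-1) n^4 < i \le \ell n^4\}$, which has exactly $n^4$ elements.

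First I would bound $|\Dict(B)|$. Since $\Dict(B) = \{B[i,i+1] : i \in \{1,\dots,|B|-1\}\}$, we have $|\Dict(B)| \le |B|-1 \le n$. Then I would fix any two distinct pairs $ab,a'b' \in \Dict(B)$ with $ab \neq a'b'$. By pair-wise independence of the hash family,
\[
\Pr_{C_\ell}[C_\ell(ab) = C_\ell(a'b')] \;=\; \frac{1}{n^4}.
\]

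Next, I would apply a union bound over all unordered pairs of distinct elements of $\Dict(B)$. The number of such pairs is at most $\binom{|\Dict(B)|}{2} \le |B|^2/2$, so
\[
\Pr_{C_\ell}[C_\ell \text{ is not one-to-one on } \Dict(B)] \;\le\; \frac{|B|^2}{2 n^4} \;\le\; \frac{|B|}{n^3},
\]
where in the last inequality I use $|B| \le n$ to write $|B|^2 \le |B| \cdot n$. Taking the complement gives the stated bound.

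There is no real obstacle here; the only thing to be careful about is that pair-wise independence is exactly the hypothesis needed to get the collision probability $1/n^4$ uniformly over any fixed pair of distinct inputs, and that the range size was chosen as $n^4$ precisely to make this union bound succeed with room to spare.
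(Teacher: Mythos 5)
Your proof is correct and is essentially the same argument as the paper's: a union bound over pairs of distinct elements of $\Dict(B)$, using pair-wise independence to bound each collision probability by $1/n^4$ and $|\Dict(B)|\le|B|\le n$ to conclude. The only cosmetic difference is that you count unordered pairs via $\binom{|\Dict(B)|}{2}$ where the paper uses the cruder bound $|\Dict(B)|^2$; both yield the stated $|B|/n^3$.
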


\begin{proof}
For two distinct elements from $\Dict(B)$, the probability of a collision for randomly chosen $C_\ell$ is at most $1/n^4$.
By the union bound, the probability that $C_\ell$ is not one-to-one on $\Dict(B_j)$ is at most $|\Dict(B)|^2/n^4 \le |B|/n^3$ as $|\Dict(B)|\le |B|\le n$.
\end{proof}

During processing of a string $x$, there are at most $Ln$ calls to the function $\Compress$. 
For a fixed level $\ell \in \{1,\dots,L\}$, the total size of blocks $B$ for which $\Compress(B,\ell)$ is invoked
is at most $n$. 
By the previous lemma and the union bound, the probability that during any of those calls $\Compress(B,\ell)$ uses 
a function $C_\ell$ that is not one-to-one on $\Dict(B)$ is at most $1/n^2$.
If all the hash functions $C_1, C_2,\dots, C_L$ that are used to compress blocks of $x$ are one-to-one on their respective blocks then the grammars that $\Grammar$ produces will be deterministic, and they will evaluate to their respective blocks of $x$.
(We can actually conclude a stronger statement that each $C_\ell$ will be one-to-one on the union of all blocks at level $\ell$ with high probability.)
We can conclude the next corollary which implies the first item of Theorem~\ref{t-decomposition}.

\begin{corollary}\label{c-part1}
For $n$ large enough, on a string $x$ of length at most $n$, with probability at least $1-L/n^2$,
processing the string $x$ produces a sequence of grammars $G_1,G_2,\dots, G_s$ 
such that $x=\eval(G_1) \cdots \eval(G_s)$.
\end{corollary}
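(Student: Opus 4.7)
The plan is to establish the claim in two steps: bound the probability of the ``bad'' event that some $C_\ell$ collides on a processed block's dictionary, and then argue that the grammars evaluate correctly conditional on the good event. The first step is almost already carried out in the paragraph preceding the corollary: for each fixed level $\ell \in \{1,\ldots,L\}$, the blocks $B$ on which $\Compress(B,\ell)$ is invoked are disjoint substrings of the compressed version of $x$ at level $\ell-1$, so their total size is at most $n$. By Lemma~\ref{l-onetoone} and a union bound over those blocks, the probability that $C_\ell$ fails to be one-to-one on some $\Dict(B)$ at level $\ell$ is at most $n/n^3 = 1/n^2$. A further union bound across the $L$ levels gives failure probability at most $L/n^2$.

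Conditional on the good event $\mathcal{E}$ (every $C_\ell$ is one-to-one on every $\Dict(B)$ for which $\Compress(B,\ell)$ is called), I would prove by induction on the recursion depth that for every invocation $\Process(B, (D_1,\ldots,D_{\ell-1}), \ell)$, the grammars produced have evaluations whose concatenation equals the \emph{full decompression} of $B$ with respect to $(D_1,\ldots,D_{\ell-1})$, defined as the unique string in $\Sigma^*$ obtained from $B$ by iteratively applying the rules $c \to ab$ in $\bigcup_j D_j$ together with the rules $\rr_{a,r} \to a^r$. Under $\mathcal{E}$ every left-hand side in $\bigcup_j D_j$ is unique, so this decompression is well-defined. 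In the base case $|B| \le 2$, the function $\Grammar$ assembles the deterministic grammar containing $\# \to B$, every reachable rule from $D_1,\ldots,D_{\ell-1}$, and the corresponding $\rr_{a,r} \to a^r$ rules; by construction its evaluation is precisely the full decompression of $B$.

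For the inductive step, $\Process$ forms $A = \Compress(B,\ell)$ and splits it as $A = B_0 B_1 \cdots B_s$. Applied to each $B_i$, the inductive hypothesis says the corresponding output grammars have evaluations concatenating to the full decompression of $B_i$ with respect to $(D_1,\ldots,D_\ell)$. Summing over $i$ yields the full decompression of $A$ with respect to $(D_1,\ldots,D_\ell)$. The crux is the subclaim that this coincides with the full decompression of $B$ with respect to $(D_1,\ldots,D_{\ell-1})$: this holds because $A$ is obtained from $B$ by replacing maximal repeats $a^r$ by $\rr_{a,r}$ and selected non-overlapping pairs $ab$ by $C_\ell(ab)$, and the newly added rules in $D_\ell$ together with the $\rr_{a,r} \to a^r$ rules exactly invert these replacements. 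I expect this subclaim to be the main obstacle — not because it is deep, but because it requires carefully tracking which pairs $\Compress$ actually compresses (the non-overlapping ones chosen by the locally consistent coloring, after repeats have been replaced), and verifying that each such pair lies in $\Dict(B)$ so that the inverse rule is indeed in $D_\ell$ and uniquely determined under $\mathcal{E}$.

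Finally, I would apply the invariant to the initial calls $\Process(B,(),1)$, where $B$ ranges over the blocks produced by $\Split(x,0)$. Since each such initial $B$ lies in $\Sigma^*$ and the starting tuple of partial grammars is empty, its full decompression is simply $B$ itself. Concatenating over all initial blocks (which tile $x$ by the definition of $\Split$) gives that the concatenation of evaluations of all output grammars equals $x$, completing the proof of the corollary.
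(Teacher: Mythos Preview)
Your proposal is correct and follows the same approach as the paper. The paper's argument is essentially the paragraph preceding the corollary: the $L/n^2$ probability bound via Lemma~\ref{l-onetoone} and a union bound over levels, followed by the one-sentence assertion that if every $C_\ell$ is one-to-one on the relevant dictionaries then the resulting grammars are deterministic and evaluate to their blocks of $x$. You reproduce the probability bound identically and then supply the inductive argument that the paper omits, correctly identifying the only nontrivial verification (that every pair actually compressed by $\Compress(B,\ell)$ lies in $\Dict(B)$, so its inverse rule appears in $D_\ell$ and is unique under $\mathcal{E}$).
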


At this point we can estimate the running time of the decomposition algorithm. 
We can let the algorithm fail, and produce some trivial decomposition of $x$, 
whenever $\Split$ produces a block with dictionary larger than $5D \log n$. 
If it does not fail, then all grammars are of size at most $S$ which is $\OO(k)$. 
There are at most $|x|$ of them and their total size is at most $2|x|$ as each of the grammars $G$ produces a string of size at least $|G|/2$.
So time spent in $\Grammar(\dots)$ is bounded by $\OO(|x|)$. 
The total time spent in $\Compress(\dots)$ is proportional to the sum of sizes of all non-trivial blocks over all levels
of recursion which is $O(|x|L)=\OO(|x|)$. 
(A more accurate estimate on the total size of blocks is $O(|x|)$ since the blocks are shrinking geometrically in each iteration.)
This means that the time to execute all calls to $\Compress$ is $O(|x|L R) = \OO(|x|)$.
The time spent in $\Split(\dots)$ is dominated by the time needed to evaluate $H_\ell$. 
The number of evaluation points at a given level $\ell$ is proportional to the total size of all blocks at that level.
Since $H_\ell$ can be evaluated at a single point in time $\OO(1)$, we get an upper bound $O(|x|L)=\OO(|x|)$ on time spent in $\Split$.
Hence, in total the decomposition procedure runs in time $\OO(|x|)$.

\begin{proposition}
Given $k\le n$, the running time of the decomposition algorithm on a string $x$ of length at most $n$ is $\OO(|x|)$ with probability at least $1-1/n$. 
\end{proposition}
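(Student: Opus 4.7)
The plan is to show two things: first, with probability at least $1-1/n$, no invocation of $\Split$ during the processing of $x$ produces a sub-block whose dictionary exceeds $5D\log n$ (call this the \emph{good event}); second, conditioned on the good event, a direct level-by-level accounting shows the total work is $\OO(|x|)$. Whenever the good event fails the algorithm simply aborts and outputs a trivial decomposition, so it suffices to bound the time under the good event.

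For the probabilistic part, I would invoke Lemma~\ref{l-decomposition1} together with a union bound. A single call $\Split(B,\ell)$ violates the $5D\log n$ dictionary bound with probability at most $1/n^3$, and there are at most $Ln$ such calls throughout the execution of the algorithm on $x$ (since the recursion has depth at most $L=O(\log n)$ by Corollary~\ref{c-decomposition1}, and the combined length of blocks at any level is at most $n$). Hence the probability of the bad event is at most $Ln/n^3 = O(\log n / n^2) \le 1/n$ for $n$ large enough.

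Conditioned on the good event, I would bound the work of the three building blocks separately, leveraging the fact that each call to $\Compress$ shrinks a block by a factor of at least $2/3$ (up to an additive one), so the total length of all blocks at level $\ell$ is at most $(2/3)^{\ell-1}|x|+O(n)$ summed geometrically to $O(|x|)$ across all levels. The $\Compress$ routine at level $\ell$ on a block $B$ runs in time $O(R|B|)$ dominated by the computation of $\FL$; summing this over all blocks and all levels yields $O(|x|R)=\OO(|x|)$. Each $\Split(B,\ell)$ call needs $|B|-2$ evaluations of $H_\ell$, each in time $\OO(1)$ by the bit-length description of a $(D,5\log n)$-iterated pair-wise independent hash function; this again sums to $\OO(|x|)$ across all levels. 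Finally, under the good event every partial grammar $D_\ell$ has size at most $5D\log n$, so by the earlier bound on $\Grammar$ each of the at most $|x|$ output grammars has size $\OO(k)$ and is built in time $\OO(L\cdot S)=\OO(k)$; summing over the output grammars yields $\OO(|x|)$.

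The main obstacle, and the step to be careful about, is the level-by-level accounting: one has to verify that the total size of all blocks summed across the $L$ recursion levels really is $O(|x|)$ rather than $O(|x|L)$, using the geometric shrinkage guaranteed by $\Compress$ (modulo the additive constant that arises from the $+1$ term and from splitting a small block). Once this is in place, combining the three time bounds and adding the failure probability gives the claimed $\OO(|x|)$ running time with probability at least $1-1/n$.
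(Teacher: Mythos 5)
Your overall structure matches the paper's: condition on the good event that no call to $\Split$ produces a block with dictionary larger than $5D\log n$ (probability of failure at most $Ln/n^3\le 1/n$ by Lemma~\ref{l-decomposition1} and a union bound, with the algorithm aborting to a trivial decomposition otherwise), and then account for $\Compress$, $\Split$ and $\Grammar$ separately. The bounds for $\Compress$ and $\Split$ are fine. Note also that the "main obstacle" you identify --- showing the total block size across levels is $O(|x|)$ rather than $O(|x|L)$ --- is not actually needed: since $\OO(\cdot)$ hides factors poly-logarithmic in $n$ and $L=O(\log n)$, the crude bound $O(|x|L)=\OO(|x|)$ already suffices, which is exactly what the paper uses (mentioning the geometric refinement only parenthetically).

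The genuine gap is in your accounting for $\Grammar$. You bound each of the at most $|x|$ calls by $\OO(L\cdot S)=\OO(k)$ and then claim the sum is $\OO(|x|)$; but $|x|$ calls at $\OO(k)$ each gives only $\OO(k\,|x|)$, and $\OO(\cdot)$ does not absorb the factor $k$. The worst-case per-grammar bound is too lossy here. The paper instead uses that the \emph{total} size of all output grammars is at most $2|x|$, because each (pruned, deterministic) grammar $G$ evaluates to a block of length at least $|G|/2$ and these blocks concatenate to $x$; plugging this into the per-call time bound $\OO(\ell(|B|+|G|))$ from the lemma on $\Grammar$ and summing over all calls yields $\OO(L(\sum|B|+\sum|G|))=\OO(|x|)$. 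With that substitution your argument goes through.
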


It remains to address the properties of the algorithm run on a pair of strings $x$ and $y$ of edit distance at most $k$ to establish Theorem~\ref{t-decomposition}.
For the pair of strings $x$ and $y$ we fix a {\em canonical decomposition of  $x$ and $y$} to be a sequence of words $w_0,w_1,\dots,w_k, u_i,\dots, u_k, v_1, \dots, v_k \in \Gamma^*$ such that $x=w_0 u_1 w_1 u_2 w_2\cdots u_k w_k$, $y=w_0 v_1 w_1 v_2 w_2 \cdots v_k w_k$
and $|u_i|,|v_i| \le 1$ for all $i$. By the definition of edit distance such a decomposition exists: each pair $(u_i, v_i)$ represents one edit operation, and we fix one such decomposition to be {\em canonical}.
Observe, if we now partition $x$ into blocks $B^x_1,\dots,B^x_s$ so that each $B^x_i$ starts within one of the $w_j$'s, 
and we partition $y$ into blocks $B^y_1,\dots,B^y_s$ so that each block $B^y_i$ starts at the corresponding location in $w_j$ as $B^x_i$, 
then $\ED(x,y)=\sum_i \ED(B^x_i,B^y_i)$.

We need to understand what happens with the decomposition of $x$ and $y$ when we apply the $\Compress$ function.
Let $x=uwv$ and $x'=\Compress(x,\ell)=u'w'v'$, for some $u,w,v,u'w'v' \in \Gamma^*$. 
We say that a symbol $c$ in $w'$ {\em comes from the compression of $w$} 
if either it is directly copied from $w$ by $\Compress$, or 
it is the image $c=C_\ell(ab)$ of a pair of symbols $ab$ where $a$ belongs to $w$, or 
$c=\rr_{a,r}$ replaced a block $a^r$ where the first symbol of $a^r$ belongs to $w$.  
{\em $w'$ is the compression of $w$} if it consists precisely of the symbols that come from the compression of $w$. 
Furthermore, we say a symbol $c$ in $w'$ {\em comes weakly from the compression of $w$} 
if either it is directly copied from $w$ by $\Compress$, or 
it is the image $c=C_\ell(ab)$ of a pair of symbols $ab$ where $a$ or $b$ belong to $w$, or 
$c=\rr_{a,r}$ replaced a block $a^r$ where some symbol of $a^r$ belongs to $w$.
{\em $w'$ is the weak compression of $w$} if it consists precisely of the symbols that come weakly from the compression of $w$. 
Notice, a weak compression of $w$ might contain and extra symbol at the beginning compared to the compression of $w$.

The following lemma captures what compression does to the canonical decomposition of $x$ and $y$. (See Fig.~\ref{fig:compression} for illustration.)

\begin{lemma}\label{l-compress}
Let $x$ and $y$ be strings over $\Gamma$, and let $x'=\Compress(x,\ell)$ and $y'=\Compress(y,\ell)$.
Let $x=w_0 u_1 w_1 u_2 w_2\cdots u_q w_q$ and $y=w_0 v_1 w_1 v_2 w_2 \cdots v_q w_q$ for some strings $w_i$, $u_i$ and $v_i$ where for $i\in \{1,\dots,q\}$, $|u_i|,|v_i| \le 4R + 24$. 
Then there are  $w'_0,w'_1, \dots, w'_q,u'_1, \dots, u'_q,v'_1, \dots, v'_q \in \Gamma^*$ such that 
for $i\in \{1,\dots,q\}$, $|u'_i|,|v'_i| \le 4R + 24$, $x'=w'_0 u'_1 w'_1 u'_2 w'_2\cdots u'_q w'_q$ 
and $y'=w'_0 v'_1 w'_1 v'_2 w'_2\cdots v'_q w'_q$.
Moreover, each $w'_i$ is the compression of the same subword of $w_i$ in both $x$ and $y$.
\end{lemma}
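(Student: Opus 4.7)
The plan is to prove the lemma by showing that for each $i \in \{1,\dots,q\}$ the edit $(u_i, v_i)$ affects only a bounded neighborhood of $\Compress(x, \ell)$ and $\Compress(y, \ell)$. I will then take $u'_i$ and $v'_i$ to be the compressions of these neighborhoods, and take $w'_i$ to be the compression of the untouched middle of $w_i$. The key observation is that $\Compress$ acts locally at every position: block membership depends only on the maximal run of repeats through that position, the $\FL$-coloring at position $p$ depends only on $x[p-R, p+R]$ by Proposition~\ref{thm:color_reduction}, and the pair-compression at $p$ uses only the $\FL$-colors of $p$ and its $O(1)$ neighbors. Hence any position whose $R+O(1)$-window lies entirely inside some $w_j$ and whose enclosing maximal run does not reach an adjacent edit region is compressed identically in $x$ and $y$.

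Concretely, for each edit I would extend $u_i$ into the neighboring $w$-factors. Let $\alpha_i$ be the length of the maximal single-symbol suffix-run of $w_{i-1}$ and $\beta_i$ the length of the maximal prefix-run of $w_i$; extend $u_i$ by $\alpha_i + R + c$ symbols on the left and $\beta_i + R + c$ symbols on the right, where $c = O(1)$ accounts for pair-compression lookahead and for aligning to the nearest position colored $1$ (which costs at most $2$ extra symbols by item 4 of Proposition~\ref{thm:color_reduction}). Because $w_{i-1}$ and $w_i$ are identical in $x$ and $y$, the same prefix and suffix are removed from each $w_j$ in both strings; denote the surviving middle by $\tilde w_j$. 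Set $U_i$ (resp.\ $V_i$) to be the extended region around $u_i$ in $x$ (resp.\ $v_i$ in $y$). By the local argument, $\Compress(\tilde w_j, \ell)$ is a contiguous block of both $x'$ and $y'$; take $w'_j$ to be this common string and take $u'_i, v'_i$ to be the intervening blocks of $x'$ and $y'$.

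For the length bound, $|U_i| \le |u_i| + \alpha_i + \beta_i + 2R + O(1)$, but the two maximal runs each collapse to a single $\rr_{a,\cdot}$ symbol regardless of $\alpha_i, \beta_i$, and the remainder of $U_i$ (of length $|u_i| + 2R + O(1) \le 6R + O(1)$) compresses by factor at most $2/3$ up to $+O(1)$. Summing gives $|u'_i| \le (2/3)(6R + O(1)) + 2 + O(1) \le 4R + 24$ with the constants of the lemma, and likewise for $|v'_i|$. The main technical obstacle is boundary bookkeeping: a run of repeats that straddles an edit may merge or split between $x$ and $y$, altering the exponent $r$ in $\rr_{a,r}$, and the parity of pair-compression just outside the buffer may shift. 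The care needed is to verify that all such perturbations are fully contained in $U_i$ and $V_i$ and that the first untouched symbol of $\tilde w_j$ starts a fresh pair-compression block in both $x$ and $y$, so that $\Compress(\tilde w_j, \ell)$ really is a contiguous block shared by $x'$ and $y'$.
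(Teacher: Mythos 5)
Your proposal is correct and follows essentially the same route as the paper's proof: absorb the boundary runs of each $w_j$ into the edit regions (so that $\rr_{a,r}$ symbols with differing exponents never land in $w'_j$), pad by an $R+O(1)$ buffer to neutralize the locality of the $\FL$-coloring, cut at a position colored $1$ (guaranteed within $3$ symbols by item 4 of Proposition~\ref{thm:color_reduction}) so the retained middle starts a fresh pair-compression group in both strings, and bound $|u'_i|$ by collapsing the two runs to single symbols and applying the $2/3$ shrinkage to the remaining $\le 6R+O(1)$ symbols. The paper carries out the same plan with explicit bookkeeping (the staged rewriting $w_i\to w_i^{(1)}\to w_i^{(2)}\to w_i^{(3)}$ and the cut points $s_i,t_i$), including the corner cases you flag, such as a run spanning several edit regions and $w_i$ shorter than the buffers.
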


For each $x=w_0 u_1 w_1 u_2 w_2\cdots u_q w_q$, $y=w_0 v_1 w_1 v_2 w_2 \cdots v_q w_q$ and $\ell$ we fix one choice of $w'_0,\dots, w'_q,$ $u'_0, \dots, u'_q,$ $v'_0,\dots, v'_q$ satisfying the lemma. We will refer to it as the {\em canonical decomposition} of $x'$ and $y'$ induced by the decomposition of $x$ and $y$ as given by the lemma.

\begin{proof}
The first stage of $\Compress$ replaces maximal blocks of repeated symbols by shortcuts. To simplify our analysis first we will reassign blocks of repeated symbols
among neighboring blocks of $w_i$, $u_i$ and $v_i$, resp., so each maximal block of symbols in $x$ and $y$ is fully contained in one of the words $w_i$, $u_i$ or $v_i$.

For $i=1,\dots, q-1$ we define words $w^{(1)}_i$ and parameters $a_i, b_i \in \Gamma$ and $k_i, k'_i \in \N$ as follows: If $w_i$ contains at least two distinct symbols 
let $w_i=a_i^{k_i} w^{(1)}_i b_i^{k'_i}$ so that $k_i$ and $k'_i$ are maximum possible, otherwise $w_i=a_i^{k_i}$ for some $a_i$ and $k_i$ ($k_i$ might be zero), and we set $w^{(1)}_i = \eps$, $b_i=a_i$ and $k'_i=0$. 
Let $w_0=w^{(1)}_0 b_0^{k'_0}$ for maximum possible $k'_0$ and some symbol $b_0$.
Let $w_q= a_q^{k_q} w^{(1)}_q$ for maximum possible $k_q$ and some symbol $a_q$. 
For $i=1,\dots,q$, we let $u^{(1)}_i = b^{k'_{i-1}}_{i-1} u_i a^{k_i}_i$.
Similarly,  $v^{(1)}_i = b^{k'_{i-1}}_{i-1} v_i a^{k_i}_i$. Hence, $x=w^{(1)}_0 u^{(1)}_1 w^{(1)}_1 \cdots u^{(1)}_q w^{(1)}_q$ and $y=w^{(1)}_0 v^{(1)}_1 w^{(1)}_1  \cdots v^{(1)}_q w^{(1)}_q$. 

Next, if there is a maximal block of symbols $a^r$ contained in $u^{(1)}_s w^{(1)}_s \cdots u^{(1)}_t$ starting in $u^{(1)}_s$ and ending in $u^{(1)}_t$, $s\neq t$, we add all the symbols of the $a^r$ to the end of $u^{(1)}_s$ and remove them from the other $u^{(1)}_i$, $i=s+1,\dots,t$. (Notice, $w^{(1)}_i=\eps$ for $s<i<t$ because of the definition of $w^{(1)}_i$, and $u^{(1)}_i$ will become empty for $s<i<t$.) We do this for all maximal blocks of repeated symbols that span multiple $u^{(1)}_i$. We perform similar moves on $v^{(1)}_i$'s. 
After all of those moves we denote the resulting subwords by $w^{(2)}_i$, $u^{(2)}_i$, and $v^{(2)}_i$. (Notice, $w^{(2)}_i=w^{(1)}_i$ for all $i$.)
We have: $x=w^{(2)}_0 u^{(2)}_1 w^{(2)}_1 \cdots u^{(2)}_q w^{(2)}_q$ and $y=w^{(2)}_0 v^{(2)}_1 w^{(2)}_1  \cdots v^{(2)}_q w^{(2)}_q$.
At this stage, each maximal block of repeated symbols in $x$ or $y$ is contained in one of the subwords $w^{(2)}_i$, $u^{(2)}_i$, and $v^{(2)}_i$.

The first stage of $\Compress$ replaces each maximal block $a^r$, $r\ge 2$, by a sequence $\rr_{a,r} \#$, 
and we apply this procedure on each subword $w^{(2)}_i$, $u^{(2)}_i$, and $v^{(2)}_i$ to obtain corresponding subwords $w^{(3)}_i$, $u^{(3)}_i$, 
and $v^{(3)}_i$. 
Observe, for $i=1,\dots,q$, $|u^{(3)}_i|,|v^{(3)}_i| \le 4R + 28$. 
This is because every $u_i$ is transformed into $u^{(3)}_i$ by appending or prepending possibly empty block of repeated symbols, i.e., $u^{(3)}_i =  a^r u_i b^{r'}$ for some $a,b,r,r'$, or removing its content entirely. 
Each block of repeats is reduced to two symbols so each $u^{(3)}_i$ is longer than the original by at most 4 symbols. Similarly for $v^{(3)}_i$.

Next, coloring function $\FL$ is used on parts of $x$ and $y$ that are not obtained from repeated symbols; the two symbols replacing each repeated block are  colored by $1$ and $2$, resp. 
We refer to this as $\{1,2,3\}$-coloring. At most $R$ first and last symbols of each $w^{(3)}_i$ might be colored differently in $x$ and $y$ as the color of each symbol depends on the context of  at most $R$ symbols on either side of the symbol, and that context might differ in $x$ and $y$. 
Hence, only symbols near the border of $w^{(3)}_i$ that are in vicinity of $u^{(3)}_i$'s and $v^{(3)}_i$'s, resp., might get different colors. 
All the other symbols of $w^{(3)}_i$ are colored the same in both $x$ and $y$. 
The coloring is then used to make decisions on which pairs of symbols are compressed into one.

We will let $u'_i$ be the symbols that come from the compression of symbols in $u^{(3)}_i$, 
the first up-to $R+2$ symbols of $w^{(3)}_i$, and the last up-to $R+3$ symbols of $w^{(3)}_{i-1}$. 
Next we specify precisely which symbols of  $w^{(3)}_i$ and $w^{(3)}_{i-1}$ are considered to be compressed into symbols belonging to $u'_i$.   
For $i=0,\dots, q$, if $|w^{(3)}_i| \ge R+3$, let $s^x_i$ be the position of the first symbol in $w^{(3)}_i$ among positions $R+1,R+2,R+3$ which is colored 1 in $x$ by the $\{1,2,3\}$-coloring. 
If  $|w^{(3)}_i| < R+3$, let $s^x_i=1$. 
Next, if $|w^{(3)}_i| \ge 2R+3$ set $t^x_i$ to be the first position from left colored 1 among the symbols of $w^{(3)}_i$ at positions $R+1, R+2, R+3$ counting from right. 
If $|w^{(3)}_i| < 2R+3$, set $t^x_i$ to be equal to $s^x_i$.
For $i=0$, if $|w^{(3)}_0| \ge R+3$ then redefine $s^x_0=1$.
For $i=q$, redefine $t^x_q=|w^{(3)}_q|+1$ and if $|w^{(3)}_q| < R+3$ then redefine $s^x_q$ to $t^x_q$.
Similarly, define $s^y_i$ and $t^y_i$ based on the $\{1,2,3\}$-coloring of $y$. 

Notice, $s^x_i\neq t^x_i$ iff  $s^y_i\neq t^y_i$. 
Furthermore, if  $s^x_i\neq t^x_i$ then either $i \in \{q,0\}$ or $|w^{(3)}_i| \ge 2R+3$ so $s^x_i=s^y_i$ and $t^x_i=t^y_i$ as the symbols $R$-away from either end of $w^{(3)}_i$ are colored the same in $x$ and $y$.
We let $u'_i$ to be the compression of  $w^{(3)}_{i-1}[t^x_{i-1},|w^{(3)}_{i-1}|] \cdot u^{(3)}_i \cdot w^{(3)}_i[1,s^x_i)$ and 
similarly, $v'_i$ to be the compression of  $w^{(3)}_{i-1}[t^y_{i-1},|w^{(3)}_{i-1}|] \cdot v^{(3)}_i \cdot w^{(3)}_i[1,s^y_i)$. 
We let $w'_i$ be the compression of $w^{(3)}_i[s^y_i,t^y_i)$. 

Hence, $u'_i$ comes from the compression of at most $|u^{(3)}_i| + 2R+5 \le 6R + 33$ symbols. 
Since each symbol after a symbol colored 1 is {\em removed} by the compression, and each consecutive triple of symbols contains at least one symbol colored by 1, the at most $6R + 27$ symbols are compressed into at most $(6R + 33)\cdot 2/3 + 2 =4R+24$ symbols. So $u'_i$ is of length at most $4R+24$. Similarly for $v'_i$.
\end{proof}

The following generalization of the previous lemma will be useful to design a rolling sketch. 
It considers situation where $x$ and $y$ are prefixed by some strings $u$ and $v$, resp., that we want to ignore from the analysis. The proof of the lemma is a straightforward modification of the above proof.

\begin{lemma}\label{l-compress-rolling}
Let $x,y,u,v\in \Gamma^*$, and let $u'x'=\Compress(ux,\ell)$ and $v'y'=\Compress(vy,\ell)$, where
$x'$ is the weak compression of $x$, and $y'$ is the weak compression of $y$.
Let $x=u_0w_0 u_1 w_1 u_2 w_2\cdots u_q w_q$ and $y=v_0w_0 v_1 w_1 v_2 w_2 \cdots v_q w_q$ for some strings $w_i$, $u_i$ and $v_i$ where for $i\in \{0,\dots,q\}$, $|u_i|,|v_i| \le 4R + 24$. 
Then there are  $w'_0,w'_1, \dots, w'_q,u'_0,u'_1, \dots, u'_q,v'_0,v'_1, \dots, v'_q \in \Gamma^*$ such that 
for $i\in \{0,\dots,q\}$, $|u'_i|,|v'_i| \le 4R + 24$, $x'=u'_0w'_0 u'_1 w'_1 u'_2 w'_2\cdots u'_q w'_q$ 
and $y'=v_0'w'_0 v'_1 w'_1 v'_2 w'_2\cdots v'_q w'_q$. 
Moreover, each $w'_i$ is the compression of the same subword of $w_i$ in both $x$ and $y$.
\end{lemma}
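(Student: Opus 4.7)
The plan is to follow the proof of Lemma~\ref{l-compress} almost verbatim, treating the boundary between the ignored prefix $u$ (resp.\ $v$) and the leading buffer $u_0$ (resp.\ $v_0$) exactly like an interior $u_i$-boundary in the earlier proof. In Lemma~\ref{l-compress}, $w_0$ was the very first block of $x$ so we could set $s^x_0 = 1$; here we instead choose $s^x_0, s^y_0$ using the same three-position window rule as for $i \ge 1$, and let $u'_0, v'_0$ absorb the short prefix of the compression that depends on the $u/x$ interface.

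First, I would reassign maximal runs of repeated symbols in $ux$ (and $vy$) so that each such run is contained entirely inside one of $u, u_0, w_0, u_1, w_1,\dots$ (resp.\ $v, v_0, w_0, v_1, w_1,\dots$), and then apply the first stage of $\Compress$ that replaces every maximal run $a^r$ by $\rr_{a,r}\cdot\#$. As in the original argument, this grows each $|u_i|$ and $|v_i|$ by at most a constant number of symbols coming from shuffling boundary runs and from the replacements. Next, apply the $\FL$-based coloring used in the second stage; by the locality of $\FL$, only symbols within $R$ positions of the boundary of each $w^{(3)}_i$ can receive different colors in $ux$ versus $vy$, and the same holds at the new $u/x$ and $v/y$ boundaries.

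Define $s^x_i, t^x_i, s^y_i, t^y_i$ verbatim as in the earlier proof but using the interior rule uniformly for all $i = 0, 1, \dots, q$, so that $s^x_i = s^y_i$ and $t^x_i = t^y_i$ whenever they fall within the portion of $w^{(3)}_i$ on which the coloring is guaranteed to agree. Let $w'_i$ be the compression of $w^{(3)}_i[s^x_i, t^x_i)$, which is then the same subword of the compression of $w_i$ in both strings. For $i \ge 1$ set $u'_i$ to be the compression of $w^{(3)}_{i-1}[t^x_{i-1}, |w^{(3)}_{i-1}|] \cdot u^{(3)}_i \cdot w^{(3)}_i[1, s^x_i)$, and define $u'_0$ to be the \emph{weak} compression of $u^{(3)}_0 \cdot w^{(3)}_0[1, s^x_0)$, so that any pair $C_\ell(ab)$ in which $a$ is the last symbol of $u$ and $b$ is the first symbol of $x$ is credited to $u'_0$ rather than to $w'_0$. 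This matches exactly the definition of the weak compression of $x$, which is why the statement is phrased with weak compression; it guarantees $x' = u'_0 w'_0 u'_1 w'_1 \cdots u'_q w'_q$, and the analogous identity for $y'$.

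The length bound $|u'_i|, |v'_i| \le 4R + 24$ holds for $i \ge 1$ by the counting already done in Lemma~\ref{l-compress}, and for $i = 0$ by the same arithmetic: at most $R+3$ symbols carried across the boundary from $u$, the at most $4R+28$ symbols of $u^{(3)}_0$, and at most $R+2$ prefix symbols of $w^{(3)}_0$, giving at most $6R+33$ pre-compression symbols, which the $\FL$-based pair compression shrinks to at most $(6R+33)\cdot 2/3 + 2 \le 4R + 24$. The step I expect to be the main obstacle is the boundary bookkeeping itself: verifying that the notion of weak compression lines up precisely with how $\Compress$ treats the $u/x$ and $v/y$ interfaces in every corner case, in particular when the first symbol of $x$ extends a maximal repeat block whose other symbols lie in $u$, or when the prefix $w^{(3)}_0[1, s^x_0)$ has length strictly less than the worst case.
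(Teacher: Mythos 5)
Your proposal is correct and matches the paper's intent exactly: the paper gives no separate argument for this lemma beyond declaring it a straightforward modification of the proof of Lemma~\ref{l-compress}, and your write-up is precisely that modification — applying the interior three-position rule to choose $s^x_0=s^y_0$ (valid since these positions are at least $R+1$ symbols away from the $u/x$ and $v/y$ interfaces, so their colors agree), letting $u'_0,v'_0$ absorb the at most one boundary-straddling compressed symbol as dictated by the definition of weak compression, and rerunning the same $(6R+33)\cdot 2/3+2=4R+24$ arithmetic. No gaps.
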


Let $x\in \Sigma^*$. Let $H_0,H_1,\dots,H_L, C_1,C_2,\dots,C_L$ be chosen. 
We define inductively the {\em trace} of the algorithm on $x$ at level $\ell\ge 0$ to consist of sequences 
$B^{x}(\ell,1),\dots, B^{x}(\ell,s^{x}_{\ell}) \in \Gamma^*$, of auxiliary sequences $A^{x}(\ell,1),\dots, A^{x}(\ell,{s^{x}_{\ell}}) \in \Gamma^*$ and $t^{x}_{\ell,1},\dots, t^{x}_{\ell,{s^{x}_{\ell}+1}} \in \N$. 
Their meaning is: $B^x(\ell,i)$ is compressed into $A^x(\ell,i)$ and that is split into blocks $B^x(\ell+1,j)$ for $t^x_{\ell+1,i} \le j < t^x_{\ell+1,i+1}$. (See Fig.~\ref{fig:hierarchy} for illustration.)\footnote{To avoid double and triple indexes we use our notation $B^x(\ell,i)$ and $A^x(\ell,i)$ instead of the usual  $B^x_{\ell,i}$ and $A^x_{\ell,i}$.}

Set $$B^{x}(0,1),\dots, B^{x}(0,{s^x_0})=\Split(x,0).$$
For $\ell=1,\dots,L$ we define $B^{x}(\ell,1),\dots, B^{x}(\ell,s^x_\ell)$ inductively. Set $t^{x}_{\ell,1}=1$.
For $i=1,\dots, s^{x}_{\ell-1}$, if $|B^{x}(\ell-1,i)| \ge 2$, then
$$A^{x}(\ell-1,i) = \Compress(B^{x}(\ell-1,i),\ell),$$ 
and for $(B_0,B_1,\dots,B_s)= \Split(A^{x}(\ell-1,i),\ell)$ set
$$B^{x}(\ell,{t^{x}_{\ell,i}}) = B_0,\,\,\, B^{x}(\ell,{t^{x}_{\ell,i}+1}) = B_1,\,\,\, \dots,\,\,\, B^{x}(\ell,{t^{x}_{\ell,i}+s}) = B_s$$
and $t^{x}_{\ell,{i+1}} = t^{x}_{\ell,i} + s +1$.
If $|B^{x}(\ell-1,i)| < 3$, then set
$B^{x}(\ell,{t^{x}_{\ell,i}})$ and $A^{x}(\ell-1,i)$ to  $B^{x}(\ell-1,i),$ and $t^{x}_{\ell,{i+1}} = t^{x}_{\ell,i} + 1$.
For $j=s^{x}_{\ell-1}$, set $s^{x}_{\ell} = t^{x}_{\ell,j+1}$.

\begin{figure}[htp]
    \centering
    \includegraphics[width=\textwidth]{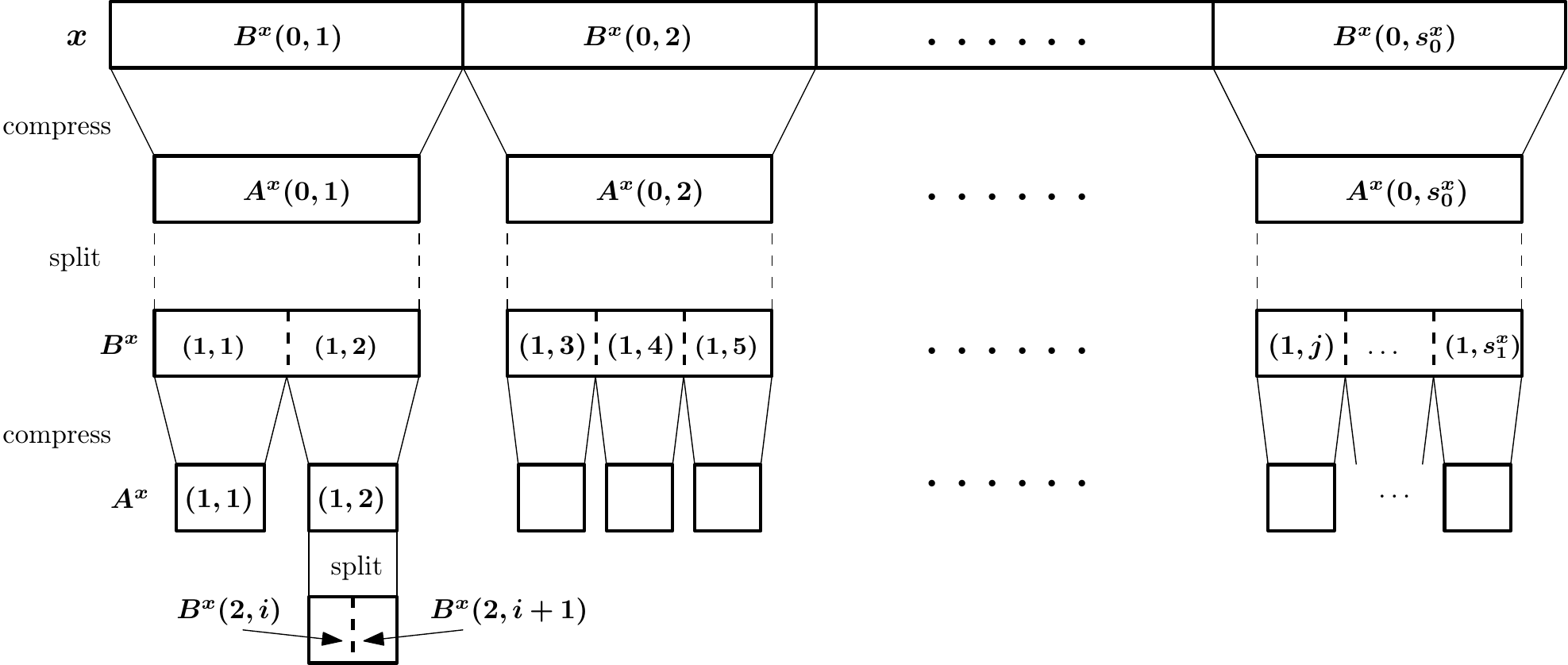}
    \caption{The hierachical decomposition of $x$.}
    \label{fig:hierarchy}
\end{figure}

\medskip
Furthermore, for $x$ and $y\in \Sigma^*$, $\ell, i\ge 0$, define a canonical decomposition of blocks $A^x(\ell,i),$ $B^x(\ell,i),$ $A^y(\ell,i),$ $B^y(\ell,i)$ inductively as follows.
Let $A^x(-1,1)=x$ and $A^y(-1,1)=y$. Let $t^x_{-1,1}=1, t^x_{-1,2}=2,$ $s^x_{-1}=1$, $t^y_{-1,1}=1,$  $t^y_{-1,2}=2$, and $s^y_{-1}=1$.
Let $$A^x(-1,1)=w_0 u_1 w_1 u_2 w_2\cdots u_k w_k \,\,\,\,\&\,\,\,\, 
A^y(-1,1)=w_0 v_1 w_1 v_2 w_2 \cdots v_k w_k$$ be the canonical decomposition of the pair $x$ and $y$. 

For $\ell \ge 0$ and $j\in \{1,\dots, s^{x}_{\ell}\}$, let $i$ be such that 
$t^x_{\ell-1,i} \le j < t^x_{\ell-1,i+1}$ and $m=j-t^x_{\ell-1,i}$.
Then $B^x(\ell,j)$ is the $m$-th block of $\Split(A^x(\ell-1,i),\ell)$.
If the decomposition of $A^x(\ell-1,i)$ is defined and is equal to $w_0 u_1 w_1 u_2 w_2\cdots u_q w_q$, for some $u_i,w_i\in \Gamma^*$, then 
the decomposition of $B^x(\ell,j)$ is the restriction of the decomposition of $A^x(\ell-1,i)$
to symbols of the $m$-th block of $\Split(A^x(\ell-1,i),\ell)$. Otherwise the decomposition of $B^x(\ell,j)$ is undefined. Similarly for $B^y(\ell,j)$. (See Fig.~\ref{fig:compression}.)

For $\ell \ge 0$ and $j\in \{1,\dots, s^{x}_{\ell}\}$, if $B^x(\ell,j)$ and $B^y(\ell,j)$ have defined decompositions $B^x(\ell,j)= w_0 u_1 w_1 u_2 w_2\cdots u_q w_q$ and $B^y(\ell,j)=w_0 v_1 w_1 v_2 w_2\cdots v_q w_q$ for some $u_i,v_i,w_i \in \Gamma^*$,
then we let $A^x(\ell,j) = w'_0 u'_1 w'_1 u'_2 \cdots w'_{q}$ 
and $A^y(\ell,j) = w'_0 v'_1 w'_1 v'_2 \cdots w'_{q}$ be their canonical decomposition induced by $B^x(\ell,j)$ and $B^x(\ell,j)$ as given by Lemma~\ref{l-compress}.

\begin{figure}[htp]
    \centering
    \includegraphics[width=\linewidth]{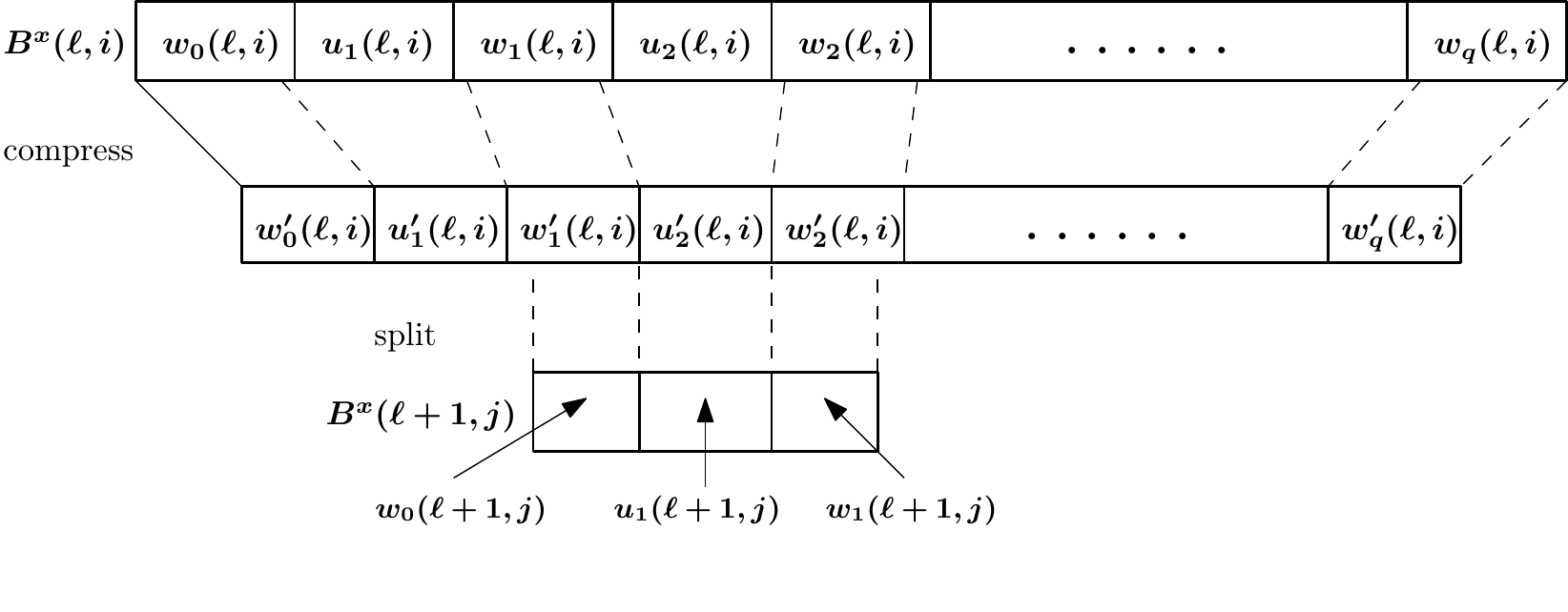}
    \caption{Decomposition of $B^x(\ell,i)$ after compression and split.}
    \label{fig:compression}
\end{figure}

\bigskip
To conclude item 3 of Theorem~\ref{t-decomposition} we want to argue that $x$ and $y$ are recursively split into sub-blocks that respect their canonical decomposition. So we want all splits of blocks to occur in matching parts of $x$ and $y$. For $A^x(\ell-1,i)$ with canonical decomposition
$w_0 u_1 w_1 u_2 w_2\cdots u_q w_q$ we say that $\Split(A^x(\ell-1,i), \ell)$ makes {\em undesirable split} if it starts a new block at a position $j$
that either belongs to one of the $u_1,u_2,\dots,u_q$ or is the first or last symbol of one of the $w_0,w_1,\dots,w_q$. 
Recall, $\Split(A^x(\ell-1,i), \ell)$ starts a new block at each position $j$ such that $H_\ell(A^x(\ell-1,i)[j,j+1])=0$. 
Since $H_\ell$ is chosen at random a given position starts a new block with probability $1/D$.

Similarly, for $A^y(\ell-1,i)$ with canonical decomposition
$w'_0 v_1 w'_1 v_2 \cdots v_{q'} w'_{q'}$ we say that $\Split(A^y(\ell-1,i), \ell)$ makes {\em undesirable split} 
if it starts a new block at a position $j$ that either belongs to one of the $v_1,v_2,\dots,v_{q'}$ or is the first or last symbol of one of 
the $w'_0,w'_1,\dots,w'_{q'}$. 
If $A^x(\ell-1,i)$ and $A^y(\ell-1,i)$ have {\em matching} canonical decomposition (that is $q=q'$ and each $w_j=w'_j$) 
and both $\Split(A^x(\ell-1,i), \ell)$ and $\Split(A^y(\ell-1,i), \ell)$ make no undesirable split then $A^x(\ell-1,i)$ and $A^y(\ell-1,i)$ 
are split in the same number of blocks with matching canonical decomposition as they are split at the same positions in the corresponding $w_j$'s.

For given $\ell \in \{0,\dots,L\}$, if no undesirable split happens during $\Split(A^x(\ell'-1,i), \ell')$ and $\Split(A^y(\ell'-1,i), \ell')$, for any $\ell' < \ell$ and $i$,
then for each $\ell' < \ell$, the number of blocks $B^x(\ell',i)$ and $B^y(\ell',i)$ will be the same, i.e., $s^x_{\ell'} = s^y_{\ell'}$, and blocks 
$B^x(\ell',i)$ and $B^y(\ell',i)$ will have matching canonical decomposition. 
The total number of $u_j$'s in canonical decomposition of all $B^x(\ell',i)$, $i=1,\dots,t^x_{\ell'}$, will be at most $k$, and similarly for $v_j$'s.
Thus, there will be at most $(4R+24+2)k+2$ positions where an undesirable split can happen in $\Split(A^x(\ell-1,i), \ell)$
for any $i$. Similarly, there are at most $(4R+26)k+2$ positions where an undesirable split can happen in $\Split(A^y(\ell-1,i), \ell)$.
By union bound, the probability that an undesirable split happens in some $\Split(A^y(\ell-1,i), \ell)$ or $\Split(A^y(\ell-1,i), \ell)$, 
for some $\ell$ and $i$, is at most $2(4R+28)k(L+1)/D \le 11 R k (L+1) / D \le 1/10$.

Thus, if no undesirable split happens there are at most $k$ indices $i$ for which the canonical decomposition of $B^x(\ell,i)$ contains some $u_j$.
All other blocks $B^x(\ell,i)$ have a canonical decomposition consisting of a single block $w_0$, for various $w_0$ depending on $\ell$ and $i$. 
Similarly, the canonical decomposition of $B^y(\ell,i)$ contains $v_j$ if and only if $B^x(\ell,i)$ contains $u_j$. 
Blocks $B^y(\ell,i)$ that do not contain $v_j$ are identical to $B^x(\ell,i)$ so they have the same grammar.

Hence, if no undesirable split happens, item 3 of Theorem~\ref{t-decomposition} will be satisfied.

\smallskip
The following theorem generalizes item 3 of Theorem~\ref{t-decomposition} and it will be useful to construct the rolling sketch in Section~\ref{s-rolling}. 

\begin{theorem}\label{t-decomposition-rolling}
    Let $u,v,x,y\in \Sigma^*$ be strings such that $|ux|,|vy|\le n$ and $\ED(x,y) \le k$. Let $G^x_1,\dots,G^x_s$ and  $G^y_1,\dots,G^y_{s'}$ 
    be the sequence of grammars output by the decomposition algorithm on input $ux$ and $vy$ respectively, using the same choice of random functions
    $C_1,\dots,C_L$ and $H_0,\dots,H_L$. With probability at least $1-1/5$ the following is true: There exist integers $r,r',t,t'$ such that $s-t=s'-t'$,
        $$x = \eval(G^x_t)[r,\dots] \cdot \eval(G^x_{t+1}) \cdots \eval(G^x_s) \;\;\;\&\;\;\;y=\eval(G^y_{t'})[r',\dots] \cdot \eval(G^y_{t'+1})\cdots \eval(G^y_{s'}),$$
     and $$\ED(x,y)=\ED(\eval(G^x_{t})[r,\dots], \eval(G^y_{t'})[r',\dots]) + \sum_{i>0} \ED(\eval(G^x_{t+i}),\eval(G^y_{t'+i})).$$
\end{theorem}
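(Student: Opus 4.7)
The plan is to repeat the argument for item~3 of Theorem~\ref{t-decomposition} while tracking the location of the seam between the unmatched prefixes $u,v$ and the matched suffixes $x,y$, using Lemma~\ref{l-compress-rolling} in place of Lemma~\ref{l-compress} whenever the seam lies inside a block being compressed. Extend the canonical decomposition to include the seam by writing $ux = u\cdot w_0 u_1 w_1 \cdots u_k w_k$ and $vy = v\cdot w_0 v_1 w_1 \cdots v_k w_k$, where the portion after the dot is the canonical decomposition of the pair $x,y$ coming from $\ED(x,y)\le k$.

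Next I propagate canonical decompositions along the recursion by induction on $\ell$. At every level there is a unique \emph{boundary block} whose image contains the seam; blocks of smaller index are pure prefix blocks and those of larger index are pure suffix blocks. For non-boundary blocks push the decomposition through $\Compress$ using Lemma~\ref{l-compress}. For the boundary block use Lemma~\ref{l-compress-rolling}: starting already from the first compression it guarantees that the canonical decomposition $u'_0 w'_0 u'_1 w'_1\cdots u'_q w'_q$ of the boundary block satisfies $|u'_i|\le 4R+24$ for every $i\ge 0$, in particular making the prefix-side fragment $u'_0$ short.

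Call a split of a block with canonical decomposition $u'_0 w'_0 u'_1 w'_1\cdots u'_q w'_q$ \emph{undesirable} if it lands inside some $u'_j$ with $j\ge 1$, at the first or last symbol of some $w'_j$ with $j\ge 1$, or at the last symbol of $w'_0$. All other splits are harmless: one inside $u'_0$ only further refines the prefix, while one in the interior or at the first symbol of $w'_0$ of the boundary block merely cleaves it into a pure prefix piece followed by a pure suffix piece whose subsequent splits still agree with those in $vy$ because the consecutive pairs inside the $w_j$'s are identical in the two runs. With this definition the bad-position count per level and per string is $O(Rk)$, exactly as in the proof of item~3 of Theorem~\ref{t-decomposition}; the possibly long prefixes $u,v$ contribute no bad positions at all. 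Proposition~\ref{p-ipw} and a union bound over the at most $L+1$ levels and both strings then bound the probability of any undesirable split by $1/5$ for our choice $D=110R(L+1)k$.

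Conditioning on this event, a short induction on $\ell$ shows that the number of blocks after the boundary matches in $ux$ and $vy$, the suffix-side blocks are paired between the two runs, and any two paired non-boundary blocks containing no $u_j,v_j$ with $j\ge 1$ are literally equal. Letting $t,t'$ be the leaf indices of the final boundary blocks and $r,r'$ the positions at which the $x$- and $y$-parts begin inside $\eval(G^x_t)$ and $\eval(G^y_{t'})$ immediately yields $s-t=s'-t'$ together with the two concatenation identities. Since each $w_j$ with $j\ge 1$ appears identically in both runs, the edit cost decomposes additively into the boundary-block contribution $\ED(\eval(G^x_t)[r,\dots],\eval(G^y_{t'})[r',\dots])$ plus the paired-block contributions, giving the claimed formula. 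The main technical point will be verifying that Lemma~\ref{l-compress-rolling}'s bound $|u'_0|\le 4R+24$ is available from level~$1$ onward and that the boundary block is consistently identified through each compression and split, so that the bad-position count is unaffected by how long $u$ and $v$ happen to be.
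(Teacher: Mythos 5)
Your overall architecture is the right one and matches the paper's: extend the canonical decomposition across the seam, push it through the recursion with Lemma~\ref{l-compress-rolling} at the boundary block and Lemma~\ref{l-compress} elsewhere, and bound the probability of a bad split. But there is a genuine gap in your classification of harmless splits. You declare that a split landing inside $u'_0$ ``only further refines the prefix.'' That is not what $u'_0$ is: in Lemma~\ref{l-compress-rolling} the string $u'_0$ is the leading part of the \emph{weak compression of $x$}, i.e.\ the contaminated fringe of symbols whose compression depends on the context supplied by $u$ (and hence may differ, in content and in length, from the corresponding fringe $v'_0$ in the run on $vy$), but whose decompressions nevertheless consist of symbols of $x$ (resp.\ $y$). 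A split at position $i\ge 2$ of $u'_0$ therefore creates a block boundary strictly inside $x$, at a position governed by contaminated symbols; since the pair fed to $H_\ell$ there generally differs from the pair at the corresponding place in $v'_0$, the split need not occur in the other run. When that happens, $s-t\ne s'-t'$ fails (the block counts to the right of the seam blocks no longer agree), and the block $G^x_{t+1}$ begins at a position of $w_0$ with no matching boundary in $y$, so the additive edit-distance formula breaks as well. These positions cannot be waved away; they must be counted as undesirable. The same confusion shows in your phrase ``the prefix-side fragment $u'_0$'': the long, genuinely ignorable prefix is $u'$ (the compression of $u$), not $u'_0$.

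The repair is exactly what the paper does: treat every position of $u'_0$ and $v'_0$ as a bad position. By Lemma~\ref{l-compress-rolling} there are at most $4R+24$ of them per level per string, so the bad-position count per level becomes roughly $2(4R+28)(k+1)$ rather than $2(4R+28)k$, and the union bound gives $2(4R+28)(k+1)(L+1)/D\le 22Rk(L+1)/D\le 1/5$ with $D=110R(L+1)k$. Note that this is precisely why the theorem's failure probability is $1/5$ rather than the $1/10$ of the non-rolling case; your claim that the count is ``exactly as in'' item~3 of Theorem~\ref{t-decomposition} while still quoting the bound $1/5$ is a sign that the extra $O(R)$ contaminated positions got lost. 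Your other deviations from the paper (allowing a split at the first symbol of $w'_0$ of the boundary block, and splits anywhere in the true prefix $u'$) are fine and only affect constants.
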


Its proof is a minor modification of the proof above. 
We start with the canonical decomposition of $x=w_0 u_1 w_1 \cdots u_k w_k$ and $y=w_0 v_1 w_1 \cdots v_k w_k$, 
form the decomposition $ux=uu_0w_0 u_1 w_1 \cdots u_k w_k$ and $vy=vv_0 w_0 v_1 w_1 \cdots v_k w_k$ where $u_0=v_0=\eps$, 
and follow the compression and split procedures. 
We want to argue that during each split operation, all splits occur either in $w_j$'s and are the same on $ux$ and $vy$, or they occur in $u$ or $v$ where we do not care for them.
Again we define a split to be {\em undesirable} if it starts a new block at a position $j$ that belongs to one of the 
$u_0,u_1,\dots,u_{k}$, $v_0,v_1,\dots,v_{k}$ or it is the position of the first or last symbol of $w_0,w_1,\dots$ or $w_{k}$.
Inductively we maintain that whenever a block $B^{ux}(\ell,i)$ contains a descendant of the compression of $u_j$, 
its corresponding block $B^{vy}(\ell,i')$ contains a descendant of the compression of $v_j$. 
(Here, the correspondence is counting from the highest index $i$ to the lowest and similarly for $i'$, 
so $B^{ux}(\ell,i)$ corresponds to $B^{vy}(\ell,i')$ if $i-i'=s^{ux}_\ell- s^{vy}_\ell$.)
If the blocks contain a descendant of $u_0$ and $v_0$, resp., then we apply Lemma~\ref{l-compress-rolling} to construct a descendant decomposition after their compression. 
For all other blocks that contain some $w_j, u_j$ or $v_j$ we use Lemma~\ref{l-compress} to construct its descendant decomposition. 
We do not care for decomposition of blocks $B^{ux}(\ell,i)$ that are descendants of $u$ but do not contain $u_0$, 
and similarly we do not care for decomposition of blocks $B^{vy}(\ell,i)$ that are descendants of $v$ but do not contain $v_0$.
(They might be decomposed arbitrarily so the number of blocks that are descendants of $u$ might differ from the number of blocks that 
are descendants of $v$.)
Inductively, there are at most $2(4R+28)(k+1)$ positions where an undesirable split can happen in blocks $B^{ux}(\ell,i)$ and $B^{vy}(\ell,i)$
for given level $\ell$. 
In total there are at most $2(4R+28)(k+1)(L+1)$ positions where an undesirable split can happen.
Thus, the probability of making an undesirable split during a run of the algorithm is bounded 
by $2(4R+28)(k+1)(L+1)/D \le 22 R k (L+1) / D \le 1/5$. 
If no undesirable split ever happens then the symbols that are weak compression of symbols from $x$ and $y$ are contained withing the corresponding blocks $B^{ux}(\ell,i)$ and $B^{yv}(\ell,i')$. For the blocks $B^{ux}(\ell,i)$ and $B^{vy}(\ell,i')$ that contain descendants of $u_0$ and $v_0$
it is fine if their prefixes that descend from $u$ and $v$, resp., which are to the left of the descendants of $u_0$ and $v_0$, are split differently
in $B^{ux}(\ell,i)$ and $B^{vy}(\ell,i')$. This does not affect the correspondence between blocks $B^{ux}(\ell,i)$ and $B^{vy}(\ell,i')$ that
weakly come from $x$ and $y$.
This concludes the proof of Theorem~\ref{t-decomposition-rolling}.

\subsection{Encoding a grammar}\label{s-binencoding}

We will set a parameter $N \ge n^{3}$ to be a suitable integer: Let $\FKR : \{0,1\}^* \rightarrow \{1,\dots,N\}$ be 
a hash function picked at random, such as Karp-Rabin fingerprint \cite{rabin_karp},
so for any two strings $u,v \in \{0,1\}^*$, if $u\neq v$ then $\Pr_{\FKR}[\FKR(u)=\FKR(v)] \le (|u|+|v|)/N$.

Set $M= 3 S \cdot \lceil 1 + \log |\Gamma| \rceil$. 
We will encode a grammar $G$ over $\Gamma$ of length at most $S$ given by our decomposition algorithm by a string $\Enc(G)$ over alphabet $\{1,\dots,2N\}$ of length $M$. 
The encoding is obtained as follows: 
First, order the rules of the grammar $G$ lexicographically. 
Then encode the rules in binary one by one using  $3 \cdot \lceil 1 + \log |\Gamma| \rceil$ bits for each rule. (The extra bit allows to mark unused symbols.)
This gives a binary string of length at most $M$, which we pad by zeros to the length precisely $M$. 
We call the resulting binary string $\Bin(G)$.
Compute $h_G = \FKR(\Bin(G))$.
We replace each 0 in $\Bin(G)$ by $h_G$, and each 1 in $\Bin(G)$ by $N+h_G$ to obtain the string $\Enc(G)$.
Clearly, $\Enc(G)$ is a string over alphabet $\{1,\dots,2N\}$ of length exactly $M$. 
The encoding can be computed in time $O(M)$.
For completeness, we encode any grammar $G$ of length more than $S$ or that uses rules with more than two symbols on the right as $\Enc(G)=1^M$.

By the property of $\FKR$ the following holds.

\begin{lemma}\label{l-binenc}
    Let $G,G'$ be two grammars of size at most $S$ output by our decomposition algorithm. Let $\FKR$ be chosen at random.
    \begin{enumerate}
        \item $\Enc(G) \in \{1,\dots, 2N\}^M$.
        \item If $G=G'$ then $\Enc(G)=\Enc(G')$.
        \item If $G\neq G'$ then with probability at least $1-(2M/\alpha)$, $\Ham(\Enc(G), \Enc(G'))=M$, that is the encodings differ in every symbol.
    \end{enumerate}
\end{lemma}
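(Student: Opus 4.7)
The plan is to handle the three items in order, with almost all of the work concentrated in item 3.

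For item 1, this is immediate from the construction: $\Bin(G)$ is padded to length exactly $M$, and each of its bits is replaced by an element of $\{1,\dots,N\}$ (for a $0$) or $\{N+1,\dots,2N\}$ (for a $1$), so $\Enc(G)$ is a string of length $M$ over the required alphabet. For item 2, observe that $\Bin(G)$ is a deterministic function of $G$ (lexicographic ordering of the rules followed by fixed-width binary encoding), so $G=G'$ forces $\Bin(G)=\Bin(G')$, hence $h_G = h_{G'}$, hence $\Enc(G)=\Enc(G')$.

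For item 3, I first need to observe that the map $G \mapsto \Bin(G)$ is injective on the set of grammars of size at most $S$ output by the decomposition algorithm. This holds because the lexicographic listing of rules together with a fixed-width binary encoding of left- and right-hand sides (plus the ``unused'' marker bit) can be uniquely inverted. Hence $G \neq G'$ implies $\Bin(G) \neq \Bin(G')$. Applying the Karp--Rabin collision bound with strings of length $M$ gives
\[
\Pr_{\FKR}\bigl[h_G = h_{G'}\bigr] = \Pr_{\FKR}\bigl[\FKR(\Bin(G)) = \FKR(\Bin(G'))\bigr] \le \frac{2M}{N}.
\]
Condition on the complementary event $h_G \neq h_{G'}$, which has probability at least $1 - 2M/N$.

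It remains to check that, under this condition, $\Enc(G)$ and $\Enc(G')$ disagree in every one of the $M$ positions. I split into cases according to $(\Bin(G)[i],\Bin(G')[i])$. If both bits are $0$, then $\Enc(G)[i]=h_G$ and $\Enc(G')[i]=h_{G'}$, which differ because $h_G\neq h_{G'}$; the same holds if both bits are $1$ (the values become $h_G+N$ and $h_{G'}+N$). If the bits differ, one of the two values lies in $\{1,\dots,N\}$ and the other in $\{N+1,\dots,2N\}$, so they differ automatically. Thus all $M$ positions disagree and $\Ham(\Enc(G),\Enc(G'))=M$, with the claimed failure probability (matching the statement's bound once one reads $\alpha$ as $N$).

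The only subtle step is the injectivity argument for $\Bin(\cdot)$; this is why the encoding was defined with the extra marker bit per symbol and with a fixed order of the rules. Beyond that, the argument is a short case analysis piggybacking on the standard Karp--Rabin fingerprint guarantee, so no new probabilistic tools are needed.
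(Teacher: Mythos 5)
Your proof is correct and is exactly the intended argument: the paper itself gives no explicit proof beyond the remark ``By the property of $\FKR$ the following holds,'' and your expansion (injectivity of $\Bin(\cdot)$, the Karp--Rabin collision bound $2M/N$ applied to the two length-$M$ binary strings, then the per-position case analysis showing every symbol differs once $h_G\neq h_{G'}$) is precisely what that remark is standing in for. You are also right that the $\alpha$ in the statement should be read as $N$.
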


\subsection{Edit distance sketch}\label{s-edsketch}

Let $n$ and $k\le n$ be two parameters, and $p\ge 2N+1$ be a prime such that $p\ge (nM)^3$.
For a string $x \in \Sigma^*$ of length at most $n$, we compute its sketch by running first the decomposition algorithm of Theorem~\ref{t-decomposition}
to get grammars $G_1,G_2,\dots, G_s$. Encode each grammar $G_i$ by encoding $\Enc(G_i)$ from Section~\ref{s-binencoding} using the same $\FKR$ picked at random.
Concatenate the encoding to get a string $w = \Enc(G_1)\cdot \Enc(G_2) \cdots \Enc(G_s)$. 
Calculate the Hamming sketch $\SKHAM_{n',m',p}(w)$ on $w$ for strings of length $n'=nM$ and Hamming distance at most $k'=kM$ from Section~\ref{s-hammin}.
Set the sketch $\SKED_{n,k}(x) = \SKHAM_{n',k',p}(w)$.
The calculation of $\SKED_{n,k}(x)$ can be done in time $\OO(nk)$ as the number of grammars is at most $n$ and each grammar requires $\OO(k)$ time to be encoded into binary. The Hamming sketch can be constructed in time $\OO(nk)$.
(We believe that on average we expect only $\OO(n/k)$ grammars to be produced for a given string $x$ so the actual running time should be $\OO(n)$ on average.)

\begin{theorem}
Let $x,y \in \Sigma^*$ be strings of length at most $n$ such that $\ED(x,y) \le k$. 
Let $\SKED_{n,k}(x)$ and $\SKED_{n,k}(y)$ be obtained using the same randomness for the decomposition algorithm and the same choice of $\FKR$.
With probability at least $2/3$, we can calculate $\ED(x,y)$ from $\SKED_{n,k}(x)$ and $\SKED_{n,k}(y)$.
\end{theorem}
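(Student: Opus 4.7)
The plan is to reduce the edit distance computation to a Hamming-distance reconstruction on the binary encodings of the grammar sequences produced by the decomposition algorithm. First I would apply Theorem~\ref{t-decomposition} to the inputs $x$ and $y$, using the shared randomness of the sketching procedure. With probability at least $4/5$, this yields sequences $G^x_1,\dots,G^x_s$ and $G^y_1,\dots,G^y_{s'}$ with $s=s'$, all grammars of size at most $S$, the identity $x=\eval(G^x_1)\cdots\eval(G^x_s)$ and $y=\eval(G^y_1)\cdots\eval(G^y_{s'})$, at most $k$ indices $i$ with $G^x_i\neq G^y_i$, and $\ED(x,y)=\sum_i \ED(\eval(G^x_i),\eval(G^y_i))$.

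Next I would use the encoding from Section~\ref{s-binencoding}, chosen with the same $\FKR$ for both $x$ and $y$, so that $w_x=\Enc(G^x_1)\cdots\Enc(G^x_s)$ and $w_y=\Enc(G^y_1)\cdots\Enc(G^y_{s'})$ are strings over $\{1,\dots,2N\}$ of equal length $sM\le nM=n'$. By Lemma~\ref{l-binenc}, identical grammars produce identical length-$M$ blocks, and any pair of distinct grammars among the (at most $2n$) total produce blocks that differ in every one of their $M$ positions with probability at least $1-2M/N \ge 1-O(1/n^2)$ per pair, so by a union bound all distinct pairs separate entirely with probability at least $1-O(1/n)$. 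Under these good events, $\Ham(w_x,w_y)\le kM=k'$ and the mismatched coordinates are exactly the symbols of the $\Enc$-blocks corresponding to the (at most $k$) differing grammar indices.

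Having embedded the problem into Hamming distance, I would invoke Proposition~\ref{p-skham} on $\SKHAM_{n',k',p}(w_x)$ and $\SKHAM_{n',k',p}(w_y)$, which with probability at least $1-1/n'$ recovers the full mismatch information $\MI(w_x,w_y)$. From each mismatch triple $(j,w_x[j],w_y[j])$, the index $i$ of the differing grammar and the position within its encoding are determined by $j$, and the underlying binary bit is recovered from the single threshold at $N$ (values in $\{1,\dots,N\}$ decode to $0$, values in $\{N{+}1,\dots,2N\}$ decode to $1$); concatenating these bits reconstitutes $\Bin(G^x_i)$ and $\Bin(G^y_i)$, from which the grammars themselves are parsed directly. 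Finally, for each recovered pair $(G^x_i,G^y_i)$ I would invoke Proposition~\ref{p-edgrammar} to compute $\ED(\eval(G^x_i),\eval(G^y_i))$, and report the sum as $\ED(x,y)$.

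The main obstacle is the probability accounting: the decomposition of Theorem~\ref{t-decomposition} may fail (at most $1/5$), the fingerprint $\FKR$ may fail to separate some differing grammar pair (at most $O(1/n)$), and the Hamming sketch may fail ($1/n'$). A single union bound shows the total failure probability is at most $1/5+O(1/n) \le 1/3$ for $n$ sufficiently large, giving the required $2/3$ success rate; for small $n$ the bound can be absorbed by adjusting constants. The remainder is bookkeeping: verifying that the block-alignment of $\Enc$-blocks lets us read off $i$ from $j$, that Proposition~\ref{p-edgrammar} applies to grammars of the form produced by $\Grammar$, and that all running-time estimates are within the claimed $\OO(k^2)$ bound for the comparison stage (at most $k$ calls to the grammar edit-distance routine on grammars of size $\OO(k)$).
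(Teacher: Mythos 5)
Your proposal is correct and follows essentially the same route as the paper: invoke Theorem~\ref{t-decomposition} (failure $\le 1/5$), use Lemma~\ref{l-binenc} so that differing grammars disagree in every encoded symbol, recover them via the mismatch information of Proposition~\ref{p-skham}, compute per-pair edit distances with Proposition~\ref{p-edgrammar}, and union-bound the three failure events to get error at most $1/3$. The only cosmetic difference is that the paper union-bounds the fingerprint failure only over the at most $k$ differing index pairs (giving $2kM/N < 1/n$) rather than over all pairs, but both accountings yield the same conclusion.
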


Assume that the output of the decomposition algorithm on $x$ and $y$ satisfies all the conclusions of Theorem~\ref{t-decomposition}.
In particular, for $x$ we get $\eval(G^x_1)\cdot \eval(G^x_2) \cdots \eval(G^x_s)$ 
and for $y$ we get $\eval(G^y_1) \cdots \eval(G^y_s)$, for some $s\le n$,
each of the grammars is of size at most $S$, $\ED(x,y)=\sum_i \ED(\eval(G^x_i),\eval(G^y_i))$,
and the number of pairs $G^x_i$ and $G^y_i$ where $G^x_i\neq G^y_i$ is at most $k$.
Assume that $\FKR$ is chosen so that $\Enc(G^x_i) \neq \Enc(G^y_i)$ for each of the pairs where  $G^x_i$ and $G^y_i$ differ. 

In order to determine $\ED(x,y)$, we recover the (Hamming) mismatch information between $\Enc(G^x_1)\cdot \Enc(G^x_2) \cdots \Enc(G^x_s)$
and $\Enc(G^y_1)\cdot \Enc(G^y_2) \cdots \Enc(G^y_s)$ from $\SKED_{n,k}(x)$ and $\SKED_{n,k}(y)$.
That gives grammars $G^x_i$ and $G^y_i$, for all $i$ where $G^x_i\neq G^y_i$. 
(Whenever the two grammars differ, their encoding differ in every symbol by Lemma~\ref{l-binenc} so we can recover them from the Hamming mismatch information.)
Calculating the edit distance of each of the pair of differing grammars using the algorithm from Proposition~\ref{p-edgrammar} we recover $\ED(x,y)$ as the sum of their edit distances.

The sum is correct unless some of the assumptions fail:
The probability that the grammar decomposition fails (does not have properties from Theorem~\ref{t-decomposition}) for the pair $x$ and $y$ is at most $1/5$ for $n$ large enough. 
The probability that the choice of $\FKR$ fails (two distinct grammars have the same encoding) is at most $2kM/N < 1/n$ by the choice of $N$. 
The probability that the Hamming distance sketch fails to recover the mismatch information between all the grammars is at most $1/n$.
So in total, the probability that the output of the algorithm is incorrect is at most $1/3$.

The running time of the comparison algorithm is $\OO(k^2)$: The Hamming mismatch information can be recovered in time $\OO(kM)=\OO(k^2)$ (Proposition~\ref{p-skham}), then we build the $\le k$ mismatched grammars in time $\OO(k^2)$, and run the edit distance computation
on the pairs of grammars in time $\sum_{i<k} \OO(k+k_i^2) \le \OO(k^2)$, where $k_i$ is the edit distance of the $i$-th pair of mismatched grammars.
(We interrupt the edit distance computation if it takes more time than $\OO(k^2)$ which would indicate $\ED(x,y)>k$.)

To decide whether $\ED(x,y) > k$ we note that on input $x$ and $y$, the Hamming sketch either outputs the correct mismatched places if their number
is $\le k'$ or it outputs $\infty$ if there are more mismatches than that or the sequences sketched by the Hamming sketch are of different length. 
(We assume that the Hamming sketch knows the number of symbols it is sketching.)
In the $\infty$-case we know that there are more than $k$ different pairs of grammars or the decomposition of $x$ and $y$ failed, and we can report  $\ED(x,y) > k$. In the other case we try to calculate the edit distance of the differing pairs of grammars. 
If we spend more than $\OO(k^2)$ time on it or we get a number larger than $k$ then we report $\ED(x,y)>k$.
This correctly decides whether $\ED(x,y)>k$ with probability at least $2/3$.

To prove Theorem~\ref{t-main2} we build a more robust sketch by taking $c\log n$ independent copies of the sketch $\SKED_{n,k}$.
To calculate the edit distance of two sketched strings we run the edit distance calculation on each of the corresponding pairs of copies,
and output the majority answer. A usual application of Chernoff bound shows that the probability of correct answer is at least $1-1/n$ for suitable
constant $c>0$.

\section{Rolling sketch for edit distance}\label{s-rolling}

In this section we will construct the rolling sketch of Theorem~\ref{t-main3}. We will use two claims that will be proved in Section~\ref{s-rollingproofs}. The first one addresses how much a compression of a string $w$ might change depending on what is appended to it.

\begin{lemma}\label{l-extension}
Let $\ell\in \{0,\dots,L\}$ and $v,u,w\in \Gamma^*$. Let $w'u'=\Compress(wu,\ell)$ and let $w''v'=\Compress(wv,\ell)$, where $w'$ is the compression
of $w$ when compressing $wu$ and $w''$ is the compression of $w$ when compressing $wv$. Let $t=|w'|-3(R+1)$ or $t=|w'u'|-|u|-3(R+1)$. Then $w'[1,t]=w''[1,t]$.
\end{lemma}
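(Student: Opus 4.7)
The plan is to trace how changing the suffix from $u$ to $v$ propagates through the two stages of $\Compress$, showing that any discrepancy between $w'$ and $w''$ is confined to a tail of length at most $3(R+1)$.

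First, I would analyze the first stage, which replaces each maximal repeated block $a^r$ by $\rr_{a,r}\cdot \#$. The only block whose identity can depend on the suffix is the maximal block of the input containing the last symbol of $w$, since this block may extend rightward into $u$ (resp. $v$). All blocks strictly to its left are determined by $w$ alone, so the first-stage outputs for $wu$ and $wv$ agree on a prefix covering everything up to this boundary block, with at most a $2$-symbol discrepancy from the $\rr_{a,r}\cdot\#$ encoding of the boundary block itself.

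Next, I would analyze the coloring and pairing. The coloring $\FL$ is applied per non-repeat block, and by Proposition~\ref{thm:color_reduction} a position's color depends only on symbols within distance $R$ of it inside its block. So even if the boundary non-repeat block extends into $u$, positions whose distance inside the block to the boundary exceeds $R$ receive identical colors, shrinking the coloring agreement by at most $R$ positions. The while loop inspects $B'[i], B'[i+1]$, and the color of $B'[i+2]$ per iteration, producing $1$ or $2$ output symbols while advancing $i$ by $2$ or $3$. Hence iterations whose $3$-symbol window lies within the coloring agreement produce identical output, and at most one further iteration contributes up to $2$ potentially-differing output symbols. Summing the three slacks (at most $2$ from the first-stage encoding, at most $R$ from $\FL$-locality, and a small constant from the pairing window) bounds the total discrepancy by $3(R+1)$, proving $w'[1,t] = w''[1,t]$ for $t = |w'| - 3(R+1)$.

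The alternative bound $t = |w'u'| - |u| - 3(R+1)$ follows immediately: each output symbol of $u'$ arises from at least one input symbol of $u$ (direct copy is $1$-to-$1$, pair compression is $2$-to-$1$, run compression with $r\ge 2$ is $r$-to-$1$), so $|u'| \le |u|$ and thus $|w'u'| - |u| \le |w'|$. The main obstacle will be the case analysis around the boundary block---whether it is a repeat or non-repeat, whether $w$'s final character extends into $u$ as a long run, and how the pairing phase shifts when the coloring shifts by up to $R$ positions---but all the quantitative input comes from the locality of $\FL$ guaranteed by Proposition~\ref{thm:color_reduction} and the fixed $3$-symbol window of the pairing loop.
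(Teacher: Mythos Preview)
Your proposal is correct in spirit and captures the essential ideas: the locality of $\FL$ (Proposition~\ref{thm:color_reduction}) confines coloring discrepancies to the last $R$ positions, the boundary block of the first stage is the only place the block decomposition can change, and $|u'|\le |u|$ gives the alternative value of $t$. A direct comparison of $\Compress(wu,\ell)$ with $\Compress(wv,\ell)$ along the lines you sketch can be made to work, though the case analysis (repeat vs.\ non-repeat boundary block, whether $u$ and $v$ begin with the last symbol of $w$, etc.) is a bit heavier than your summary suggests, and getting the constants to add up to exactly $3(R+1)$ requires care.

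The paper takes a slightly different route that simplifies this bookkeeping: it introduces an intermediate reference point $w'''=\Compress(w,\ell)$ (compression of $w$ with \emph{empty} suffix) and argues separately that $w'[1,|w'|-(R+1)]=w'''[1,|w'|-(R+1)]$ and $w''[1,|w''|-(R+1)]=w'''[1,|w''|-(R+1)]$, together with $\bigl||w'|-|w'''|\bigr|\le R+1$ and $\bigl||w''|-|w'''|\bigr|\le R+1$. The triangle inequality on lengths then yields $\bigl||w'|-|w''|\bigr|\le 2(R+1)$, and combining gives $w'[1,|w'|-3(R+1)]=w''[1,|w'|-3(R+1)]$. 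The advantage is that each individual comparison is between a string and a \emph{prefix} of that string, so only one suffix is in play at a time; this collapses your case analysis into three simple cases on the last block $B_m$ of $w$. Your direct approach and the paper's triangle approach both rest on the same locality property of $\FL$; the paper's version just factors the argument more cleanly.
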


The next lemma addresses how much the overall decomposition of a string $x$ might change if we append a suffix $z$ to it.

\begin{lemma}\label{l-grammarsuffix}
Let $x,z\in \Sigma^*$, $|xz|\le n$. Let $H_0,\dots, H_L, C_1,\dots, C_L$ be given. 
Let $G^x_1,G^x_2,\dots,G^x_s$ be the output of the decomposition algorithm on input $x$, and
 $G^{xz}_1,G^{xz}_2,\dots,G^{xz}_{s'}$ be the output of the decomposition algorithm on input $xz$ using the given hash functions.
Let $T=L(3R+6)$. 
\begin{enumerate}
    \item $G^x_i=G^{xz}_i$ for all $i=1\dots,s-T$.
    \item $|x| \le \sum_{i=1}^{\min(s+T,s')} |\eval(G^{xz}_i)| $.
\end{enumerate}
\end{lemma}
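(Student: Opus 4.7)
The plan is to prove Part 1 by tracking a single ``prefix-boundary'' block through the $L$ levels of recursion. At level $0$, $\Split(x,0)$ and $\Split(xz,0)$ both split at positions $i$ where $H_0$ evaluates to $0$ on the pair $\cdot[i,i+1]$; for $i\in\{2,\ldots,|x|-1\}$ these pairs are identical in $x$ and $xz$, so the first $s_0^x-1$ level-$0$ blocks match, and the last block $B^x(0, s_0^x)$ stands in a prefix relation with $B^{xz}(0, s_0^x)$ (one is a prefix of the other, depending on whether $H_0$ fires at position $|x|$ in $xz$). Thus all unmatched output grammars of $x$ must come from the recursive processing of this single block.

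Next I would define $U(\ell)$ as the maximum number of unmatched grammars output from processing a prefix-boundary block (a block $B = wu$ with counterpart $B' = wv$ sharing a common prefix $w$) at level $\ell$. By Lemma~\ref{l-extension}, $\Compress(B,\ell)$ and $\Compress(B',\ell)$ agree in the first $t = |\Compress(B,\ell)| - 3(R{+}1)$ positions, so splits at positions $\le t-1$ agree and yield matched sub-blocks. The remaining (unmatched) sub-blocks number at most $3R+4$, since the tail beyond position $t-1$ has length $\le 3(R{+}1)+1$. Of these, the first shares the common prefix $A[i_k,t]$ with its counterpart and is itself a prefix-boundary block at level $\ell+1$, contributing $\le U(\ell+1)$ unmatched grammars; the rest are ``fully unmatched'' with total level-$\ell$ length $\le 3R+4$, and collectively produce $\le 3R+4$ output grammars via the partition argument that any recursion subtree produces at most as many leaves as the length (in any fixed level's symbols) of its root, since each leaf has a distinct origin at that level. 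This gives $U(\ell) \le U(\ell+1) + (3R+4)$; with base case $U(L+1)\le 1$ (blocks of size $\le 2$ retire as a single grammar), induction yields $U(1) \le 1 + L(3R+4) \le L(3R+6) = T$, establishing Part 1.

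For Part 2 I would show that the number $R$ of $xz$-grammars covering at least one $x$-character satisfies $R\le s+T$; if $s+T \ge s'$ the bound is trivial since $|xz|\ge|x|$. I case on whether $xz$ has a split at position $|x|$. Without such a split, $B^{xz}(0, s_0^x)$ extends past position $|x|$ and all $x$-characters lie within the first $s_0^x$ level-$0$ blocks of $xz$; inside $B^{xz}(0, s_0^x)$'s processing, the $x$-characters correspond to the level-$1$ prefix of $\Compress(B^{xz}(0, s_0^x),1)$ of length $|w''|$, where $|w''|$ is the compressed length of $w = B^x(0,s_0^x)$ inside $\Compress(B^{xz}(0, s_0^x),1)$. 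A boundary analysis applying Lemma~\ref{l-extension} on each side gives $|w''| \le |A| + 3(R{+}1)$ where $A = \Compress(B^x(0, s_0^x), 1)$, so unmatched grammars covering $x$-characters have level-$1$ origin in a band of width $\le 6R+7$, bounding their count by $6R+7$ via the same partition argument. Thus $R \le (s-U_0)+(6R+7) \le s+T$ for $L\ge 2$. With a split at $|x|$, $B^{xz}(0, s_0^x)$ is a prefix of $B^x(0, s_0^x)$ missing only $x[|x|]$; a symmetric application of Part 1's recursion bounds the unmatched grammars from $B^{xz}(0, s_0^x)$'s processing by $\le T$, and one extra grammar from $B^{xz}(0, s_0^x{+}1)$ captures $x[|x|]$, so $R \le s+T$ as well. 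The main obstacle is handling the recursion in Part 1 without letting ``fully unmatched'' sub-blocks multiply uncontrollably across levels; the partition-based leaf bound is the key observation that tames them by charging each such grammar to a distinct level-$\ell$ origin symbol within a region of bounded length.
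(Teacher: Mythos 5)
Your overall strategy (track a boundary block level by level, apply Lemma~\ref{l-extension} at each level, and bound the number of leaf grammars by a length via the ``each subtree has at most as many leaves as the length of its root'' observation) is the same as the paper's, but there is a genuine gap in how you bound the disagreeing tail at each level. Lemma~\ref{l-extension} gives agreement of $\Compress(wu,\ell)$ and $\Compress(wv,\ell)$ only on the first $|w'|-3(R+1)$ positions, where $w'$ is the compression of the \emph{common prefix} $w$; equivalently $t=|w'u'|-|u'|-3(R+1)$, not $t=|\Compress(B,\ell)|-3(R+1)$ as you claim. The disagreeing tail of the $x$-side therefore has length $|u'|+3(R+1)$, where $u$ is the disagreeing suffix inherited from the previous level. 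Since the new boundary block can absorb the entire tail, $|u|$ is \emph{not} bounded by $O(R)$ uniformly in $\ell$ --- already at level $2$ the tail can be up to $(3R+4)+3(R+1)$ --- so your recurrence $U(\ell)\le U(\ell+1)+(3R+4)$ rests on a false premise. The paper avoids this by maintaining the cumulative invariant $\Delta_\ell+\sum_{i>i_\ell}|B^x(\ell,i)|\le \ell(3R+3)+1$, in which the inherited $\Delta_{\ell-1}$ is \emph{absorbed} (the new tail has length $\le \Delta_{\ell-1}+3R+3$ and the descendants of previously peeled-off blocks do not grow under compression), and only converts lengths to block counts once, at level $L$. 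Your per-level leaf count can in fact be repaired by a telescoping argument (the peel-offs at level $\ell$ have total level-$\ell$ length at most $\Delta_{\ell-1}-\Delta_\ell+3R+3$), but as written the justification is incorrect, and without the telescoping or the paper's invariant the naive sum of tail lengths over levels does not give $T$.

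Part 2 has a larger gap of the same nature: you compress the entire $L$-level analysis into a single level-$1$ ``band of width $\le 6R+7$.'' The boundary between symbols originating in $x$ and symbols originating in $z$ shifts by up to $3(R+1)$ at \emph{every} level (each compression can merge an $x$-origin symbol with a $z$-origin symbol, and the agreement with $x$'s own compression degrades by $3(R+1)$ per level), so the region of $xz$-blocks that contain $x$-symbols but are not among the matched blocks grows linearly in $\ell$. This is exactly what the paper's second invariant $r_\ell-p_\ell+\sum_{j=i_\ell}^{t_\ell-1}|B^{xz}(\ell,j)|\le \ell(3R+3)+1$ tracks, yielding $t_L\le s^x_L+L(3R+3)+3$; a fixed bound of $6R+7$ independent of $L$ is not justified and is not what the argument actually produces. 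Your second case (a split at position $|x|$) also quietly invokes ``a symmetric application of Part 1'' for \emph{deleting} the last character of a block, which is a different statement from Part 1 (appending a suffix) and would need its own proof; the paper's unified treatment via $t_\ell,r_\ell,p_\ell$ handles both situations at once.
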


The second part says that if $x$ is decomposed into $s$ grammars by itself, then it can be recovered from the first $s+T$ grammars for $xz$.
Hence, appending extra symbols to $x$ cannot increase the number of grammars that cover $x$ by more than $T$.

Let $m\ge k$ and $n\ge 10m^3$ be integers. A rolling sketch for a string obtained by up-to $m$ insertions (to the right end) and $m$ deletions (from the left end)
from an empty word consists of three data structures: {\em insertion buffer}, {\em deletion buffer} and a Hamming distance sketch $\SKHAM_{n',k',p}$, where $k'=(4T+1)(k+2)M$, $n'=nM$ and $p\ge n'^3$ is a chosen prime.

The insertion buffer maintains a buffer of {\em committed grammars} $G_{s-4T+1},G_{s-4T+2},\dots, G_{s}$ and a buffer of {\em active grammars} $G^i_1,\dots,G^i_t$,
$t \le T$. The deletion buffer is similar, it maintains a buffer of {\em committed grammars} $G_{r-4T+1},G_{r-4T+2},\dots, G_{r}$ and a buffer of {\em active grammars} $G^d_1,\dots,G^d_{t'}$, $t' \le T$. The Hamming sketch is a sketch of grammars $G_{r-2T+1},G_{r-2T+2},\dots, G_{s-2T}$, each encoded
as a string of length $M$ over the alphabet $\{1,\dots,2N\}$. 

In addition to that, the sketch keeps track of the current
value of $r$ and $s$, and remembers a collection of pair-wise independent hash functions $C_1,\dots, C_L$, a collection of $(5D \log n)$-wise independent hash functions $H_0,\dots, H_L$, and randomness for Karp-Rabin fingerprint to compute binary encoding of grammars.
The hash functions and the randomness of Karp-Rabin fingerprint are chosen at random when creating the sketch for empty string.
This extra information requires $\OO(k)$ bits to specify.

Initially, the committed grammars in the insertion and deletion buffers are all treated as empty sets, there are no active grammars in the insertion or deletion buffers so $t=t'=0$ and $s=r=0$.

For $u,x \in \Sigma^*$, if in total a string $ux$ was inserted into the sketch 
then $G_{1},\dots, G_{s}, G^i_1,\dots,G^i_t$ represents $ux$,
that is $ux$ is the concatenation of the evaluation of the grammars.
If in total the string $u$ was deleted from the sketch, then $G_{1},\dots, G_{r}, G^d_1,\dots,G^d_{t^d}$ represents $u$. (See Fig.~\ref{fig:rolling_sketch} for an illustration.)

\begin{figure}[htp]
    \centering
    \includegraphics[width=\textwidth]{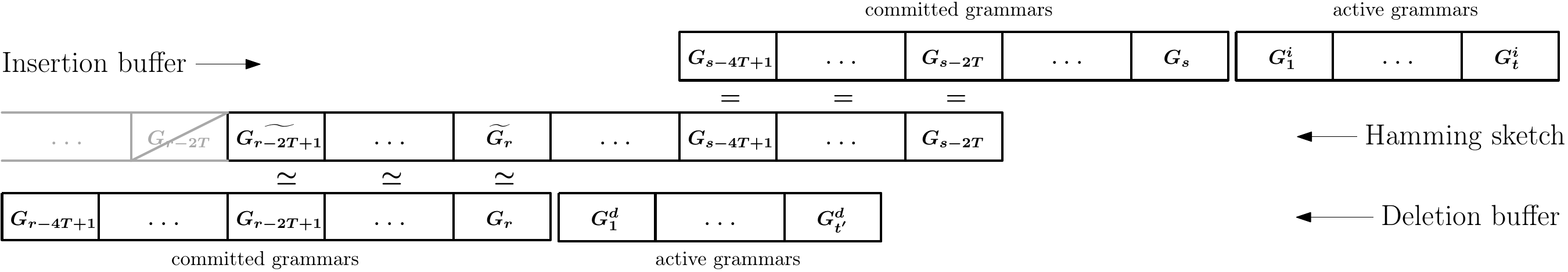}
    \caption{Rolling sketch.}
    \label{fig:rolling_sketch}
\end{figure}

\smallskip\noindent{\em Appending a symbol.}
When we append additional symbol $a$ to the sketch we modify input buffers as follows: 
We update the active grammars $G^i_1,\dots,G^i_t$ by appending $a$ as explained further below. Say the update produces grammars $G'^i_1,\dots,G'^i_{t'}$.
If $t'\le T$ then the produced grammars will become the active grammars, and no more changes are done to the sketch. 
Otherwise we commit the first $t'-T$ grammars $G'^i_1,\dots,G'^i_{t'-T}$ one-by-one into the committed buffer as grammars $G_{s+1},\cdots,G_{s+t'-T}$ and we keep the remaining grammars as the active grammars. 

Committing a grammar $G_{s+1}$ into the committed buffer 
will trigger addition of $G_{s-2T+1}$ into the Hamming sketch at the end of the represented sequence of grammars
(if $s-2T+1>0$), and removing the grammar $G_{s-4T+1}$
from the committed buffer. For insertion into the Hamming sketch, the grammar $G_{s-2T+1}$ is encoded into binary as in Section~\ref{s-binencoding}
and then the binary string is encoded using the Karp-Rabin fingerprint $\FKR$ of {\em all} the grammars $G_{s-4T+1},\dots, G_{s+1}$, instead of only the grammar $G_{s-2T+1}$. (Thus, a change in any of the neighboring grammars will trigger a recovery of also the grammar $G_{s-2T+1}$ when calculating a mismatch information
from the Hamming sketch.) We repeat this process for each grammar being committed.

By the second part of Lemma~\ref{l-grammarsuffix} $t'\le t+T \le 2T$ so we will commit at most $T=\OO(1)$ grammars. 
It takes time $O(MT)=\OO(k)$ to prepare the binary encoding of each of the committed grammars, and $\OO(k^2)$ to insert it into the Hamming sketch. 
The update of the active grammars takes $\OO(k)$ time as described below. So in total this step takes $\OO(k^2)$ time.

\smallskip\noindent{\em Removing a symbol.}
Deletion buffer works in manner similar to insertion buffer, we add the removed symbol $a$ to the active grammars, but when committing the grammar $G_{r+1}$, we use $\FKR$-fingerprint of all the grammars $G_{r-4T+1},\dots, G_{r+1}$
to encode grammar $G_{r-2T+1}$ which is then {\em removed} from the beginning of the sequence of grammars represented by the Hamming sketch (if $r-2T+1>0$), 
i.e., we update the Hamming sketch to reflect this removal. Similarly to appending a symbol, this step takes time $\OO(k^2)$.

\smallskip\noindent{\em Active grammar update.}
The update of active grammars $G^i_1,\dots,G^i_t$ when appending $a$ is done as follows. $G_{1},\dots, G_{s}, G^i_1,\dots,G^i_t$ represents $ux$
so we need to calculate the grammars for $uxa$. We claim that only the active grammars might change: At some point, $G_{s}$ became committed so at that time there was $T$ active grammars following it. If at that point the grammars together represented a string $z$, by appending more symbols to $z$ we cannot change grammars 
$G_{1},G_{1},\dots, G_{s}$ according to the first part of Lemma~\ref{l-grammarsuffix}. So appending $a$ to $ux$ will affect only the active grammars.

From the analysis in the proof of Lemma~\ref{l-grammarsuffix} it follows that for $\ell \in \{0,\dots,1\}$ 
if $B^{ux}(\ell,1),\dots,B^{ux}(\ell,s^{xy}_\ell)$ is the trace of the decomposition algorithm on $ux$ at level $\ell$, 
and $B^{uxa}(\ell,1),\dots,B^{uxa}(\ell,s^{xya}_\ell)$ is the trace on $uxa$, then their difference spans at most $\ell (3R+6)$ last symbols
of $B^{ux}(\ell,1) \cdots B^{ux}(\ell,s^{xy}_\ell)$.

So instead of decompressing the active grammars completely, adding $a$ and recompressing them back, 
we only decompress the necessary part of each trace $B^{ux}(\ell,1) \cdots B^{ux}(\ell,s^{xy}_\ell)$. 
Let $\# \rightarrow v_i$ be the starting rule of the active grammar $G_i$. 
Starting from the string $v_1 \cdot v_2 \cdots v_t$, for each $\ell=L,\dots,1$, 
we iteratively rewrite all level-$\ell$ symbols in the string using the appropriate grammars 
while only maintaining at most $T$ last symbols of the resulting string. 
(Care has to be taken to maintain information about any sequence $a^r$ stretching from those $T$ last symbols to the left.)

We add $a$ to the resulting string and re-apply compress and split procedures for levels $0,1,\dots, \ell-1$ to 
recompress only the part of the trace affected by modifications.
As we perform the compression of symbols we maintain a set $G$ of all grammar rules needed for decompression.
(We initialize $G$ with the union of all rules from the active grammars $G^i_1,\dots,G^i_t$ minus the starting rules, 
and we iteratively add new rules coming from the recompression.)
For the recompression we need to know the context of up-to $R+1$ symbols preceding the modified part of the trace.
On the other hand, the modification can affect the recompression of up-to $R+1$ symbols to the left from the left-most modified symbol in the trace.
Those $R+1$ symbols all happen to be within the decompressed suffix of the trace of size at most $T$.

Eventually, we get a new level-$L$ trace $B^{uxa}(L,s^{xya}_L-t'+1),\dots,B^{uxa}(L,s^{xya}_L)$, for some $t'$. 
Each new grammar $G'^i_j$ is obtained by taking the grammar $G \cup \{ \# \rightarrow B^{uxa}(L,s^{xya}_L-t'+j) \}$ and removing from it all useless rules. This can be done in time $O(|G|)$. (See Section~\ref{s-grammars}).

Overall the update of active grammars on insertion of a single symbol will require $O(LT)=\OO(1)$ evaluations of split hash functions $H_0,\dots,H_L$, 
$O(LT)=\OO(1)$ evaluations of compress hash functions $C_1,\dots,C_L$, and $O(T (LT + \sum_{j=1}^t |G^i_{j}|))$ time to produce the new grammars. 
As the total size of the grammars is $\OO(k)$ and the time to evaluate $H_\ell$ at a single point is $\OO(1)$, the overall time for the update of active grammars is $\OO(k)$. We provide a more detailed description of the update procedure in Section~\ref{s-update}.

\smallskip\noindent{\em Edit distance evaluation.}
Consider strings $x$ and $y$ of length at most $m$ and edit distance at most $k$. Consider the rolling sketch $\SKR_{m,k}(x)$ for $x$ obtained by
inserting symbols $ux$ and removing symbols $u$, for some $u\in \Sigma^*$ where $|ux|\le m$.
Consider also the rolling sketch for $y$ obtained by
inserting symbols $vy$ and removing symbols $v$, for some $v\in \Sigma^*$ where $|vy|\le m$. 
Both sketches should use the same randomness that is to start from the same sketch for empty string.

The rolling sketch for $x$ consists of the insertion buffer with committed grammars  $G^x_{s^x-4T+1},$ $G^x_{s^x-4T+2},$ $\dots,$ $G^x_{s^x}$ and with active grammars $G^{ix}_1,\dots,G^{ix}_{t^x}$, and the deletion buffer with committed grammars $G^x_{r^x-4T+1},$ $G^x_{r^x-4T+2},$ $\dots,$ $G^x_{r^x}$ and active grammars $G^{dx}_1,\dots,G^{dx}_{t'^x}$, $t'^x \le T$. Its Hamming sketch sketches the sequence of grammars $G^x_{r^x-2T+1},G^x_{r^x-2T+2},\dots, G^x_{s^x-2T}$. Similarly for $y$, we have the committed insertion grammars $G^y_{s^y-4T+1},G^y_{s^y-4T+2},\dots, G^y_{s^y}$, etc.

We extend the notation so for $j\in \{1,\dots,t^x\}$, we let $G^x_{s^x+j}$ denote the active grammar $G^{ix}_j$, and similarly for $y$.
Let $d^x = s^x + t^x - r^x$ and $d^y = s^y + t^y - r^y$.
We assume that the hash functions used to decompose $ux$ and $vy$ into grammars satisfy the probabilistic conclusion of Theorem~\ref{t-decomposition-rolling}. That means that grammars $G^x_{r},\dots$ and $G^y_{r},\dots$ can be aligned from the right
so $G^x_j$ corresponds to $G^y_{j-d^x + d^y}$, for $j\ge r^x$ (they might not be identical because of the edit operations). 
Without loss of generality we assume that $d^x \ge d^y$.

Before proceeding with the algorithm we first observe that $d^x-d^y < 2T$.
Let $p^x\ge r^x+1$ be the index of the grammar $G^x_{p^x}$ which produces the first symbol of $x$ when we evaluate all the grammars.
Similarly, $p^y\ge r^y+1$ is the index of $G^y_{p^y}$ which produces the first symbol of $y$. By Lemma~\ref{l-grammarsuffix}
applied on $x\leftarrow u$ and $z \leftarrow x$ we get that $p^x \le r^x + t'^x + T \le r^x + 2T$, and similarly $p^y \le r^y + 2T$.
By our assumption on success of Theorem~\ref{t-decomposition-rolling}, $s^x + t^x - p^x = s^y - t^y - p^y$.
Hence, $s^x + t^x - s^y - t^y = p^x - p^y \le r^x + 2T - r^y - 1 \le r^x - r^y + (2T-1)$.
Thus $d^x - d^y = s^x + t^x -r^x - s^y - t^y + r^y \le r^x - r^y + (2T-1) - r^x + r^y \le 2T-1$.

If $d^x < 10T$ then we can recover all the grammars $G^x_{r^x-2T+1},G^x_{r^x-2T+2},\dots, G^x_{s^x-2T}$ from their Hamming sketch by constructing an auxiliary {\em dummy} Hamming sketch $sk'$ for a sequence of $1$'s of length $(s^x-r^x)M$
and comparing the two sketches. ($M$ is the length of the encoding of each grammar.)
Their mismatch information reveals all the grammars $G^x_{r^x-2T+1},\dots, G^x_{s^x-2T}$
Since $d^y \le d^x$, we can similarly recover all the grammars $G^y_{r^y-2T+1},\dots, G^y_{s^y-2T}$ from their Hamming sketch.

Thus we know all grammars $G^x_{r^x+1},G^x_{r^x+2},\dots, G^x_{s^x+t^x}$ and $G^x_{r^y+1},G^y_{r^y+2},\dots, G^y_{s^y+t^y}$.
We also know grammars  $G^{dx}_1,\dots,G^{dx}_{t'^x}$ and  $G^{dy}_1,\dots,G^{dy}_{t'^y}$ that need to be {\em subtracted}
from our grammars. As noted in Section~\ref{s-grammars}, for each of the grammars we can calculate its evaluation size.
From that information we can easily identify $p^x$ and $p^y$, and shorten the grammars $G^x_{p^x}$ and $G^y_{p^y}$ to produce
only symbols of $x$ and $y$, respectively. 
We can combine all the grammars of $x$ into one grammar $G^x$, and all the grammars of $y$ into $G^y$, 
and run the algorithm of Ganesh, Kociumaka, Lincoln and Saha~\cite{ED_compressed_string_Soda22} to calculate the edit distance of $x$ and $y$. 
Since $T=\OO(1)$, that will take time  $\OO(|G^x|+|G^y| + k^2) = \OO(k^2)$.

If $d^x \ge 10T$ then we proceed as follows. Clearly, $d^y \ge 8T$, so $s^y - r^y \ge 7T$  and $s^x - r^x  \ge 9T$. 
Thus  $G^x_{r^x-2T+1},G^x_{r^x-2T+2},\dots, G^x_{s^x-2T}$ and  $G^y_{r^y-2T+1},G^y_{r^y-2T+2},\dots, G^y_{s^y-2T}$
consist of at least $7T$ grammars each, and those grammars are sketched by their Hamming sketches.
Although we assume that there is a correspondence between the grammar $G^x_j$, for $j\ge r^x$, and $G^y_{j-d^x+d^y}$
the sequences $G^x_{r^x-2T+1},\dots, G^x_{s^x-2T}$ and  $G^y_{r^y-2T+1},\dots, G^y_{s^y-2T}$ are misaligned in their Hamming sketches by $d^x-d^y$ grammars. 
To rectify this misalignment, we prepend $(d^x-d^y)M$ copies of symbol $1$ into the sketch for $G^y_{r^y-2T+1},\dots, G^y_{s^y-2T}$.
Furthermore, if $t^x < t^y$ then we append $(t^y - t^x)M$ ones into the sketch for $G^y_{r^y-2T+1},\dots, G^y_{s^y-2T}$,
to rectify the difference in the number of sketched grammars. 
Otherwise if $t^x > t^y$ then we append $(t^x - t^y)M$ ones into the sketch for $G^x_{r^x-2T+1},\dots, G^x_{s^x-2T}$.

Now we can calculate the mismatch information from the Hamming sketches to find out the pairs of grammars
$G^x_j$ and $G^y_{j-d^x+d^y}$, $j\ge r^x + 1$, that are different. 

If for some $j\in \{r^x + 1,\dots, r^x+2T\}$, $G^x_j$ and $G^y_{j-d^x+d^y}$ differ then because we use the Karp-Rabin
fingerprint of the two grammars to encode also the neighboring grammars up-to distance $2T$, we recover from the sketch
all the grammars $G^x_j$ and $G^y_{j-d^x+d^y}$, for $j=r^x + 1,\dots, r^x+2T$. 
By counting the evaluation size of each of those grammars and comparing it with the evaluation size of active grammars in deletion buffers of $x$ and $y$, resp., 
we identify $p^x$ and $p^y$, and how much the grammars $G^x_{p^x}$ and $G^y_{p^y}$ should be shortened to produce
only symbols of $x$ and $y$. 
After shortening $G^x_{p^x}$ and $G^y_{p^y}$ we calculate the edit distance of their evaluation.
We sum it up with the edit distance of evaluation of each pair of grammars $G^x_j$ and $G^y_{j-d^x+d^y}$, for $j>p^x$, 
that was identified as mismatch by the Hamming distance sketch or that belongs among the active grammars in insertion buffers of either $x$ or $y$. 
There will be at most $T$ mismatched pairs involving the active grammars, and $(4T+1)k$ pairs identified by the Hamming sketch. 

In the remaining case when $G^x_j$ and $G^y_{j-d^x+d^y}$ are identical for all $j\in \{r^x + 1,\dots, r^x+2T\}$,
we might not be able to recover all those grammars from the Hamming sketches,
and we might not be able to identify $p^x$ and $p^y$. 
However, since $G^x_{p^x}=G^y_{p^y}$, we know that the part of $x$ produced by $G^x_{p^x}$ is either a prefix or 
suffix of the part of $y$ produced by $G^y_{p^y}$. 
The difference in the size of the two parts is the edit distance of the two parts.
The difference is given by the difference between the total evaluation size of active grammars in the deletion buffer of $x$,
and the total evaluation size of active grammars in the deletion buffer of $y$ together with grammars $G^y_{r_y-j}$,
for $j=0,\dots, d^x-d^y-1$. 
The latter grammars are in the committed deletion buffer of $y$ and they agree with $G^x_{r^x+1},\dots, G^x_{r^x+d^x-d^y}$.
Hence, the edit distance of the parts of $x$ and $y$ coming from $G^x_{p^x}$ and $G^y_{p^y}$ can be determined.
All other mismatching pairs of grammars are identified by the Hamming sketch or are among active grammars of the insertion buffers. 
So we proceed as in the previous case to calculate their contribution to the edit distance of $x$ and $y$.
The edit distance of $x$ and $y$ is the sum of those edit distances.

We see that in both the cases we need the Hamming sketch to be able to recover at least $T$ mismatched grammars at the very end caused by
the dummy padding, $4T$ grammars at the beginning corresponding to $G^x_{r^x-2T+1},G^x_{r^x-2T+2},\dots, G^x_{r^x+2T}$, 
$2T$ neighbors of $G^x_{r^x+2T}$ to the right, 
and at most $(4T+1)k$ mismatched grammars caused by the edit operations between $x$ and $y$.
This is less than $M(4T+1)(k+2)$ which is the number of mismatches our Hamming sketch can recover.

The time needed to compare the sketched strings can be bounded as follows: 
In total the procedure generates at most $O(T k)$ pairs of grammars of total size $\OO(k^2)$ on which it runs edit distance computation from Proposition~\ref{p-edgrammar}.
If those edit distance computations take total time more than $\OO(k^2)$ we can terminate them as we know the overall edit distance is larger than $k$.
Recovering differing grammars from the Hamming distance sketch takes time $\OO(k')=\OO(k^2)$. 
Their follow-up processing such as counting their evaluation size and shortening them is proportional to their total size which is $\OO(k^2)$.
Hence, the time for comparing strings is $\OO(k^2)$.

\bigskip\noindent{\em Failure probability.}
The update operations can fail if the grammar decomposition produces large grammars or the grammars are not deterministic (because of a collision caused by compression hash functions). 
This happens with probability at most $2/n$ for each update. 
Since we perform at most $m$ updates, the failure probability of any update operations is at most $2m/n \le 1/5m^2$ by our choice of $m$ and $n$.

When comparing two sketches for strings $x$ and $y$ of edit distance at most $k$, 
Theorem~\ref{t-decomposition-rolling} might fail to align their grammar decomposition.
This happens with probability at most $1/5$. 
With probability at most $2/n$ the Hamming sketches might fail to recover the differing pairs of grammars.
There is no other source of failure for strings of edit distance at most $k$ so the probability of the compare operation failing is at most $1/3$.
To boost the success probability of comparison from $2/3$ to $1-1/2m$, 
we again form a more robust sketch by taking $c\log m$ independent copies of the rolling edit distance sketch
and operate on them simultaneously. For comparison we output the most frequent answer from the individual sketches.
This multiplies the failure probability of update operations by $c\log m$, so it is still at most $1/2m$ for $m$ large enough.
The comparison will fail with probability at most $1/2m$.

For strings of edit distance more than $k$ the comparison of an individual edit sketch will fail either 
because the Hamming sketch would need to recover more than $k$ pairs of differing grammars or 
because the total edit distance of the differing grammars is more than $k$. 
In both these failure cases we can always output $\infty$ to be on the safe side.



\subsection{Proofs of Lemma~\ref{l-extension} and \ref{l-grammarsuffix}}\label{s-rollingproofs}

Here we prove the remaining two lemmas.

\bigskip
\begin{proofof}{Lemma~\ref{l-extension}}
    For the simplicity of case analysis we first compare the compression of $wu$ and $w$. Consider the division of $w=B_1\dots B_m$
    when calling $\Compress(w,\ell)$, and the division $wu=B'_1\dots B'_{m'}$ when calling $\Compress(wu,\ell)$. Let $w'''=\Compress(w,\ell)$, from line.
    Let $a$ be the last symbol of $w$. We consider three cases. 
    
    If $B_m=a^r$, $r\ge 1$, then $B_i=B'_i$ for all $i=1\dots,m-1$, and $B_m$ is a prefix of $B'_m$, not necessarily proper. In this case,
    the compression of each $B_i$ and $B'_i$, $i=1,\dots, m-1$, is the same, so $w'''$ equals to $w'$ in all but possibly the last symbol.
    
    Otherwise, $B_m$ consists of at least two singleton symbols. If the first symbol of $u$ is $a$, then $B'_m=B_m[1,|B_m|-1]$, and $B'_{m+1}=a^r$, for some $r\ge 2$. $\FL$ will color the same all symbols of $B_m$ and $B'_m$ except for at most the last $R$ symbols. Hence, $B_m$ and $B'_m$ will be compressed the same except for at most the last $R$ symbols. The last $R$ symbols are compressed into at most $R$ symbols in $w'''$, and $B'_{m+1}$ will be compressed into a single symbol. In this case we conclude that $w'[1,|w'|-R-1]=w'''[1,|w'|-R-1]$. 

    If the first symbol of $u$ is not $a$ then $B_m$ is a prefix of $B'_m$, and the compression of $B_m$ and $B'_m[1,|B_m|]$ will differ in at most $R$ last symbols.  So $w'[1,|w'|-R]=w'''[1,|w'|-R]$. 

    Hence, in all three cases  $w'[1,|w'|-R-1]=w'''[1,|w'|-R-1]$. Moreover, $\left| |w'|-|w'''|\right| \le R+1 $. 
    
    A similar argument gives   $w''[1,|w''|-R-1]=w'''[1,|w''|-R-1]$, and $\left| |w''|-|w'''|\right| \le R+1$. By the triangle inequality, $\left| |w'|-|w''|\right| \le 2(R+1)$. Hence, $w'[1,|w'|-3(R+1)]=w''[1,|w'|-3(R+1)]$. Since $|u'|\le |u|$, $|w'u'|-|u| \le |w'|$. The claim follows.
\end{proofof}

\bigskip
\begin{proofof}{Lemma~\ref{l-grammarsuffix}}
    {\em Part 1.}
    Consider strings $B^x(\ell,i)$ from the trace of the algorithm on $x$ given the hash functions $H_0,\dots, H_L, C_1,\dots, C_L$. (See Section~\ref{s-correctness}) Similarly for $B^{xz}(\ell,i)$. 
    
    For $\ell=0,\dots,L$ we will define integers $i_\ell$ and $\Delta_\ell$ satisfying:
    \begin{enumerate}
        \item For all $i<i_\ell$,  $B^x(\ell,i)= B^{xz}(\ell,i)$,
        \item $B^x(\ell,i_\ell)[1,|B^x(\ell,i_\ell)|-\Delta_\ell]= B^{xz}(\ell,i_\ell)[1,|B^x(\ell,i_\ell)|-\Delta_\ell]$,
        \item $\Delta_\ell + \sum_{i=i_\ell+1}^{s^x_\ell} |B^x(\ell,i)| \le \ell (3R+3)+1$.
    \end{enumerate}
    
    For $\ell=0$, $B^x(0,1),B^x(0,2),\dots, B^x(0,s^x_0)=\Split(x,0)$, so we set $i_0=s^x_0$ and $\Delta_0=1$. 
    Since  $B^{xz}(0,1),B^{xz}(0,2),\dots, B^{xz}(0,s^{xz}_0)=\Split(xz,0)$, 
    $s^x_0 \le s^{xz}_0$ and $B^x(0,s^x_0)$ might differ from $B^{xz}(0,s^x_0)$ by containing
    the last symbol of $x$ which might be the first symbol of $B^{xz}(0,s^x_0+1)$. Otherwise, $B^x(0,s^x_0)$ is the prefix of $B^{xz}(0,s^x_0)$
    so the properties of $i_0$ and $\Delta_0$ are satisfied.
    
    For $\ell=1,\dots,L$, having defined $i_{\ell-1}$ and $\Delta_{\ell-1}$ we will define  $i_{\ell}$ and $\Delta_{\ell}$:
    Define 
   $$A^x_{\ell-1}=\Compress(B^x(\ell-1,i_{\ell-1}),\ell) \;\;\;\;\&\;\;\;\;  A^{xz}_{\ell-1}=\Compress(B^{xz}(\ell-1,i_{\ell-1}),\ell),$$       
        $$(B_0,B_1,\dots, B_m)=\Split(A^x_{\ell-1},\ell) \;\;\;\;\&\;\;\;\; (B'_0,B'_1,\dots, B'_{m'})=\Split(A^{xz}_{\ell-1},\ell).$$

    For simplicity of exposition in this proof we assume that for any $B\in \Gamma^*$ of size at most $2$, $\Compress(B,\ell)=B$ and $\Split(B,\ell)=(B)$, so they both perform no action on $B$ of size at most 2.
    
    Let 
    \begin{align*}
        w &= B^x(\ell-1,i_{\ell-1})[1,|B^x(\ell-1,i_{\ell-1})|-\Delta_{\ell-1}],\\ 
        u &= B^x(\ell-1,i_{\ell-1})[1+|B^x(\ell-1,i_{\ell-1})|-\Delta_{\ell-1},\dots],\\
        v &= B^{xz}(\ell-1,i_{\ell-1})[1+|B^x(\ell-1,i_{\ell-1})|-\Delta_{\ell-1},\dots].
    \end{align*} 
    By Lemma~\ref{l-extension}, $A^x_{\ell-1}$ and  $A^{xz}_{\ell-1}$ agree 
    on at least the first $|A^x_{\ell-1}| -(3R+3) - |u|=|A^x_{\ell-1}| -(3R+3) - \Delta_{\ell-1}$ symbols.
    (This is trivial when $|w|\le 2$, in particular, when $|B^x(\ell-1,i_{\ell-1})|\le 2$ or $|B^{xz}(\ell-1,i_{\ell-1})|\le 2$.)

    Let $i\in \{0,\dots,m\}$ be the largest $i$ such that for all $j < i$, $B_j=B'_j$. Let $i_\ell$ be the index of block $B_i$ among
    blocks $B^x(\ell,1), B^x(\ell,2), \dots, B^x(\ell,s^x_\ell)$. 
    Notice, $B^x(\ell,j)= B^{xz}(\ell,j)$, for all $j<i_\ell$.
    Let $\Delta_\ell \ge 0$ be the smallest integer such that $B_i[1,|B_i|-\Delta_\ell]=B'_i[1,|B_i|-\Delta_\ell]$.
    (So the second property holds for $\ell$, as $B^x(\ell,i_\ell)=B_i$ and $B^{xz}(\ell,i_\ell)=B'_i$.)
    Since $B_i[|B_i|-\Delta_\ell+1,\dots] \cdot B_{i+1} \dots B_m$ forms a part of a suffix of $A^x_{\ell-1}$ on which $A^x_{\ell-1}$ and $A^{xz}_{\ell-1}$ differ,
    $\Delta_\ell + \sum_{j=i+1}^{m} |B_j| \le (3R+3)+ \Delta_{\ell-1}$.
    
    Notice, $\sum_{j=i_\ell+1+s-i}^{s^x_\ell} |B^x(\ell,j)| \le \sum_{j=i_{{\ell-1}+1}}^{s^x_{\ell-1}} |B^x(\ell-1,j)|$ as each $B^x(\ell,j)$
    on the left is a part of a compression of some $B^x(\ell-1,j)$ on the right. 
    Hence,  $$\sum_{j=i_{\ell+1}}^{s^x_\ell} |B^x(\ell,j)| + \Delta_\ell \le
             (3R+3) + \Delta_{\ell-1} + \sum_{j=i_{{\ell-1}+1}}^{s^x_{\ell-1}} |B^x(\ell-1,j)| \le (3R+3)+(\ell-1)(3R+3)+1 \le \ell(3R+3)+1
             $$
    
    Eventually, $\Delta_L + \sum_{i=i_L+1}^{s^x_L} |B^x(L,i)| \le L (3R+3)+1$. 
    Also $B^x(L,j)=B^{xz}(L,j)$ for $j=1,\dots,i_L-1$.
    Hence, $G^x_j=G^{xz}_j$ for $j=1,\dots,i_L-1$. 
    Since each $|B^x(\ell,j)|\ge 1$, $s^x_L - i_L \le L(3R+3)+1$, which implies  $s^x_L - (L(3R+3)+2)  \le s^x_L - T  \le i_L-1$ and the claim follows.
    
    \medskip\noindent{\em Part 2.}
    The proof of this part proceeds similarly to the first part. Let $B^{x}(\ell,i)$ and $B^{xz}(\ell,i)$ be as above. For $\ell=0,\dots,L$, we will define a sequence of integers $i_\ell, t_\ell, p_\ell, r_\ell$ satisfying:
    \begin{enumerate}
        \item $t_\ell$ is the last index $i$ such that $B^{xz}(\ell,i)$ contains some symbol that comes from the compression of $x$, and $r_\ell$ is the length of the prefix of $B^{xz}(\ell,t_\ell)$ that comes from the compression of $x$.
        \item $i_\ell \le t_\ell$ and for all $i<i_\ell$,  $B^x(\ell,i)= B^{xz}(\ell,i)$,
        \item $B^x(\ell,i_\ell)[1,p_\ell]= B^{xz}(\ell,i_\ell)[1,p_\ell]$,
        \item $r_\ell - p_\ell + \sum_{j=i_\ell}^{t_\ell-1} |B^{xz}(\ell,j)| \le \ell (3R+3)+1$.
    \end{enumerate}
    
    For $\ell=0$, $B^x(0,1),B^x(0,2),\dots, B^x(0,s^x_0)=\Split(x,0)$ and  $B^{xz}(0,1),B^{xz}(0,2),\dots, B^{xz}(0,s^{xz}_0)=\Split(xz,0)$, 
    so we set $i_0=s^x_0$. If $|B^x(0,s^x_0)| \le |B^{xz}(0,s^x_0)|$ we set $t_0=i_0$, and $r_0=p_0=|B^x(0,i_0)|$, otherwise the last symbol of $x$ starts the block $B^{xz}(0,s^x_0+1)$ so we set $t_0=i_0+1$, $p_0= |B^{x}(0,s^x_0)|$ and $r_0=1$. (For completeness we set $i_{-1}=t_{-1}=1$.) Clearly, the four properties are satisfied by this choice.
    
    For $\ell=1,\dots,L$, having defined $i_{\ell-1},t_{\ell-1},p_{\ell-1},r_{\ell-1}$ we will define $i_{\ell},t_{\ell},p_{\ell}$, and $r_{\ell}$.
    As before, let $$A^x_{\ell-1}=\Compress(B^x(\ell-1,i_{\ell-1}),\ell) \;\;\;\;\&\;\;\;\; A^{xz}_{\ell-1}=\Compress(B^{xz}(\ell-1,i_{\ell-1}),\ell).$$

    \medskip\noindent {\em Case 1.}
    Consider the case when $i_{\ell-1}=t_{\ell-1}$. Let 
    \begin{align*}
        w &= B^{xz}(\ell-1,i_{\ell-1})[1,p_{\ell-1}],\\
        u_x &= B^{xz}(\ell-1,i_{\ell-1})[1+p_{\ell-1},r_\ell],\\
        u_z &= B^{xz}(\ell-1,i_{\ell-1})[1+r_{\ell-1},\dots],\\
        v &= B^{x}(\ell-1,i_{\ell-1})[1+p_{\ell-1},\dots].
    \end{align*}

    Let $A^{xz}_{\ell-1}=w'u'_x u'_z$ where $w'$ comes from the compression of $w$, 
    $u'_x$ comes from the compression of $u_x$, and $u'_z$ comes from the compression of $u_z$. 
    Let $A^{x}_{\ell-1}=w''v'$ where $w''$ comes from the compression of $w$, and $v'$ comes from the compression of $v$. 
    Set $r'_\ell=|w'u'_x|$, let $p'_\ell \le r'_\ell$ be the largest integer so that $A^{xz}_{\ell-1}[1,p'_\ell] = A^{x}_{\ell-1}[1,p'_\ell]$.
    By Lemma~\ref{l-extension}, $p'_\ell \ge |w'|-3(R+1)$ so $r'_\ell - p'_\ell \le |w'u'_x| - |w'| + 3(R+1) \le |u'_x| + 3(R+1)$.
    Furthermore, $|u'_x| \le |u_x| \le r_{\ell-1} - p_{\ell-1} \le (\ell-1) \cdot (3R+3)+1$, which follows by properties of $p_{\ell-1}$ and $r_{\ell-1}$, so $r'_\ell - p'_\ell \le \ell (3R+3)+1$.
    
    Let $(B_0,B_1,\dots,B_{m})=\Split(A^{xz}_{\ell-1},\ell).$
    Let $i\ge 0$ be the smallest integer such that $p'_\ell \le \sum_{j=0}^i|B_j|$ and 
    let $t\ge 0$ be the smallest such that $r'_\ell \le \sum_{j=0}^t|B_j|$. 
    Set $p_\ell = p'_\ell - \sum_{j=0}^{i-1}|B_j|$ and $r_\ell = r'_\ell - \sum_{j=0}^{t-1}|B_j|$.
    Let $i_\ell$ be the index of block $B_i$ among blocks $B^{xz}(\ell,1), B^{xz}(\ell,2), \dots, B^{xz}(\ell,s^{xz}_\ell)$, and 
    let $t_\ell$ be the index of $B_t$ among those blocks. 
    Notice, $B^x(\ell,j)= B^{xz}(\ell,j)$, for all $j<i_\ell$.
    We conclude the case by observing that 
    $r_{\ell} - p_{\ell} + \sum_{j=i}^{t-1}|B_j| = \sum_{j=i}^{t-1}|B_j| + (r'_\ell - \sum_{j=0}^{t-1}|B_j|) - (p'_\ell - \sum_{j=0}^{i-1}|B_j|)
    = p'_\ell - r'_\ell \le \ell (3R+3)+1$.
     
    \medskip\noindent {\em Case 2.}
    The case $i_{\ell-1}<t_{\ell-1}$ is similar. 
    In this case we let $w$ and $v$ to be as in the previous case and $u=B^{xz}(\ell-1,i_{\ell-1})[1+p_{\ell-1},\dots]$.
    We let $p'_\ell \le |A^{xy}_{\ell-1}|$ be the largest integer so that $A^{xz}_{\ell-1}[1,p'_\ell] = A^{x}_{\ell-1}[1,p'_\ell]$.
    By Lemma~\ref{l-extension}, $p'_\ell \ge |A^{xy}_{\ell-1}|-3(R+1)-|u| = |A^{xy}_{\ell-1}|-3(R+1)- |B^{xz}(\ell-1,i_{\ell-1})|+p_{\ell-1}$.  
    Rearranging terms: $-p'_\ell+|A^{xy}_{\ell-1}| \le - p_{\ell-1} + |B^{xz}(\ell-1,i_{\ell-1})| + 3(R+1)$.
    
     Let $i\ge 0$ be the smallest integer such that $p'_\ell \le \sum_{j=0}^i|B_j|$.
     Let $p_\ell = p'_\ell - \sum_{j=0}^{i-1}|B_j|$, 
     and $i_\ell$ be the index of the block $B_i$ within $B^{xz}(\ell,1), B^{xz}(\ell,2), \dots, B^{xz}(\ell,s^{xz}_\ell)$.
    Hence, $-p_\ell +  \sum_{j=i_\ell}^{i_\ell+m-i} |B^{xz}(\ell,j)| = -p_\ell +  \sum_{j=i}^{m} |B_j| = -p'_\ell+  \sum_{j=0}^{m} |B_j| =  -p'_\ell+|A^{xy}_{\ell-1}| \le - p_{\ell-1} + |B^{xz}(\ell-1,i_{\ell-1})| + 3(R+1)$.
    
    Let $C^{xz}_{\ell-1}=\Compress(B^{xz}(\ell-1,t_{\ell-1}),\ell)$ and $(B'_0,B'_1,\dots,B'_{m'})=\Split(C^{xz}_{\ell-1},\ell)$.
    Let $r'_\ell$ be the largest position in $C^{xz}_{\ell-1}$ of a symbol coming from compression of $x$, 
    and $t\ge 0$ be the smallest integer such that $r'_\ell \le \sum_{j=0}^t|B_j|$, and set $r_\ell = r'_\ell - \sum_{j=0}^{t-1}|B'_j|$. 
    Let $t_\ell$ be the index of the block $B'_t$ within $B^{xz}(\ell,1), B^{xz}(\ell,2), \dots, B^{xz}(\ell,s^{xz}_\ell)$. 
    Clearly, $r'_\ell \le r_{\ell-1}$ so 
    $r_\ell + \sum_{j=t_\ell-t}^{t_\ell-1} |B^{xz}(\ell,j)| = r_\ell + \sum_{j=0}^{t-1}|B'_j| \le r'_\ell \le r_{\ell-1}$.
    
    Notice,  $\sum_{j=i_\ell+m-i+1}^{t_\ell-t-1} |B^{xz}(\ell,j)| \le  \sum_{j=i_{\ell-1}+1}^{t_{\ell-1}-1} |B^{xz}(\ell-1,j)|$.
    By partitioning the sum, rearranging the terms and using the upper bounds derived so far we have: 
    $r_\ell - p_\ell + \sum_{j=i_\ell}^{t_\ell-1} |B^{xz}(\ell,j)| 
    = -p_\ell +  \sum_{j=i_\ell}^{i_\ell+m-i} |B^{xz}(\ell,j)| + \sum_{j=i_\ell+m-i+1}^{t_\ell-t-1} |B^{xz}(\ell,j)| + \sum_{j=t_\ell-t}^{t_\ell-1} |B^{xz}(\ell,j)| + r_\ell 
    \le - p_{\ell-1} + |B^{xz}(\ell-1,i_{\ell-1})| + 3(R+1) +  \sum_{j=i_{\ell-1}+1}^{t_{\ell-1}-1} |B^{xz}(\ell-1,j)| + r_{\ell-1} 
    = r_{\ell-1} - p_{\ell-1} + \sum_{j=i_{\ell-1}}^{t_{\ell-1}-1} |B^{xz}(\ell-1,j)| + 3(R+1)
    \le (\ell-1)\cdot (3R+3)+1 +  3(R+1) \le \ell (3R+3) + 1 $, where the second to last inequality follows by the properties of our numbers for $\ell-1$.
    
    For $\ell=L$ we get: $r_L - p_L + \sum_{j=i_L}^{t_L-1} |B^{xz}(L,j)| \le L (3R+3)+1$. Since $p_L \le |B^{xz}(L,i_L)|$, and $r_L \ge 0$ we get: $\sum_{j=i_L+1}^{t_L-1} |B^{xz}(\ell,j)| \le L (3R+3)+2$. Since each $B^{xz}(L,j)$ is of non-zero size,
    $t_L-i_L-1 \le L (3R+3)+2$. Thus $t_L \le i_L +  L (3R+3)+3 \le s^x_L + L (3R+3)+3$, as $i_L \le s^x_L$. The claim follows.
\end{proofof}

\section{Appending a symbol to a grammar decomposition}
\label{s-update}

In this section we provide a detailed description of the process of updating the active grammars of a string $x$ when appending a new symbol $a$. 
By Lemma~\ref{l-grammarsuffix} only the last $T$ grammars of $x$ might change when adding a new symbol $a$.
As observed already previously, Lemma~\ref{l-grammarsuffix} also implies that once a grammar becomes more than
$(T+1)$-th grammar from the end it will never change, despite the fact that the number of grammars that follow it might shrink after adding more symbols. 
(Adding more symbols might create periodicity that will be exploited by the compression.)
Our rolling sketch algorithm keeps at most $T$ active grammars that might still change after adding more symbols.
It is convenient for our implementation of the update function to have access also to the previous at most $T$ committed grammars (to have the proper context for re-compression).
Our rolling sketch algorithm has those committed grammars available in appropriate buffers.
Thus we will assume that the update function is always invoked with exactly $T+1$ grammars, unless $x$ is decomposed into less that $T+1$ grammars. 
Some of the first few grammars from the output of the update procedure should be discarded as they correspond to grammars that should stay the same.
In particular, if there are $t$ active grammars and $s$ committed grammars then we should discard the first $\min(s,T+1-t)$ grammars
from its output.
The following statement encapsulates the properties of our update procedure $\UpdateAG()$.

\begin{theorem}\label{t-update}
    Let integers $k\le n$ and functions $C_1,\dots,C_L$ and $H_0,\dots,H_L$ be given.
    For any $a\in \Sigma$ and $x\in \Sigma^*$ of length at most $n$ with $G_1,\dots,G_s$ being the grammars output by the decomposition algorithm on input $x$ using functions $C_1,\dots,C_L,H_0,\dots,H_L$, $\UpdateAG(G_{s-\min(s,T+1)+1},\dots,G_s,a)$ outputs a sequence of grammars $G'_1,\dots,G'_{t'}$
    such that $G_1,\dots,G_{s-\min(s,T+1)}, G'_1,\dots,G'_{t'}$ is the sequence that would be output by the decomposition algorithm on $x\cdot a$  using the functions $C_1,\dots,C_L, H_0,\dots,H_L$.
    The update algorithm runs in time $\OO(k L T)=\OO(k)$ and outputs $t' \le 4TL$ grammars.
\end{theorem}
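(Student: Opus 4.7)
My plan is to implement $\UpdateAG$ in three phases. First, a \emph{reconstruction} phase uses the rules of the input grammars to rebuild, for each level $\ell$, the last $\Theta(T)$ symbols of the trace $B^x(\ell,\cdot)\cdots B^x(\ell,s^x_\ell)$. Starting from the starting rules $\# \to v_j$ of $G_{s-\min(s,T+1)+1},\dots,G_s$, I concatenate the $v_j$'s and then, for $\ell = L,L-1,\dots,1$, rewrite level-$\ell$ symbols via their rules while retaining only a suffix of the resulting string of length slightly more than $T$. When an $\rr_{a,r}$ shortcut crosses into the retained window I also record the cumulative length of the run that starts earlier, so that no information is lost. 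Second, a \emph{re-compression} phase appends $a$ to the reconstructed level-$0$ tail and re-runs $\Compress$ and $\Split$ from level $1$ up to $L$, but only on the suffix that could differ from the trace on $x$; any newly minted compression rules are added to partial grammars $D_1,\dots,D_L$ (alongside rules inherited from the input grammars). Third, an \emph{extraction} phase calls $\Grammar$ on each newly produced level-$L$ block to obtain $G'_1,\dots,G'_{t'}$.

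Correctness that the concatenation $G_{1},\dots,G_{s-\min(s,T+1)},G'_{1},\dots,G'_{t'}$ equals the output of the decomposition algorithm on $xa$ rests on three ingredients. Lemma~\ref{l-grammarsuffix} part~1 (applied with $z \leftarrow a$) guarantees that all but the last $T$ grammars of $x$ are unchanged in the decomposition of $xa$, so the uncommitted prefix $G_{1},\dots,G_{s-\min(s,T+1)}$ still appears verbatim. Lemma~\ref{l-extension}, applied once per level, ensures that modifications ripple by only $3(R+1)$ symbols per level; since I maintain a window of size at least $L(3R+6)=T$ at every level, the window always contains enough context both to correctly determine the new compression/split decisions near the right end and to remain unaffected across its left boundary. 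Lemma~\ref{l-grammarsuffix} part~2 bounds $t'$ by the number of grammars in the retained window plus $T$, which gives $t' \le 4TL$ with ample slack.

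For the running time, each of the $L$ levels manipulates $\OO(T)$ symbols, evaluating $H_\ell$, $C_\ell$, and the locally consistent coloring $\FL$ at $\OO(T)$ points with $\OO(1)$ time per evaluation, so the reconstruction and re-compression phases together cost $\OO(TL)$. The extraction phase invokes $\Grammar$ at most $t' = O(TL)$ times; since each constructed grammar has size $\OO(k)=S$ and the partial grammars $D_\ell$ are stored once and shared across invocations, the total cost of extraction is $\OO(kLT)$, dominating the budget. The main obstacle I expect is the bookkeeping around the repeated-run shortcuts $\rr_{a,r}$: appending $a$ can extend an existing maximal run at the right end of $x$ by one symbol and thereby change the $\rr_{a,\cdot}$ symbol representing that run at the next level, which can further propagate through subsequent compressions. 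Ensuring that the retained window always encodes the correct multiplicity of any run crossing its left boundary---so that recompression produces the correct $\rr_{a,r+1}$ rather than splitting the run incorrectly or merging an unrelated extension---is the delicate part and justifies the slight over-retention of state described in the reconstruction phase.
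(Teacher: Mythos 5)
Your plan matches the paper's construction essentially step for step: the reconstruction phase is the paper's $\PartialDecompress$ loop producing the per-level windows $Z_L,\dots,Z_0$ of size $\Theta(T)$ (with truncated $\rr_{a,r}$ expansions tracked across the window boundary), the re-compression phase is the paper's $\Recompress$/$\RecompressFirstBlock$/$\CrossOverBlock$ machinery justified by Lemma~\ref{l-extension} and Lemma~\ref{l-grammarsuffix}, and the extraction and time accounting (reconstruction and recompression in $\OO(TL)$, grammar minimization dominating at $\OO(kLT)$, output count $O(TL)$) coincide with the paper's. You also correctly identify the genuinely delicate point — stitching the recompressed suffix to the untouched compressed prefix at each level, especially when a repeated-run shortcut straddles the window boundary — which is exactly what the paper's $\CrossOverBlock$ case analysis handles.
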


Here we assume that the decomposition algorithm does not fail neither on $x$ nor on $x\cdot a$
with respect to producing correct deterministic grammars 
so the first two parts of Theorem~\ref{t-decomposition} are satisfied for $x$ and $y=x\cdot a$, 
and the choice of functions $C_1,\dots,C_L$ and $H_0,\dots,H_L$.
For the simplicity of our implementation, we assume a stronger property of $C_1,\dots,C_L$, 
that each $C_\ell$ is one-to-one on the union of all blocks of $x$ and $x \cdot a$ at level $\ell$.
(See remark after Lemma~\ref{l-onetoone}.)

\subsection{Auxiliary functions}

Our update algorithm uses several simple and straightforward auxiliary functions we describe next.
Function $\DecompressSymbol(c,G,\ell,t)$ takes a symbol $c\in \Gamma$ and if it is a level-$\ell$ symbol
compressed by the grammar $G$ then it returns its decompression truncated to the length of at most $t$ symbols. 
Otherwise it returns the original symbols $c$.

\begin{algorithm}[H]
\DontPrintSemicolon
   \caption{$\DecompressSymbol(c,G,\ell,t)$}
   \KwIn{A symbol $c$, a grammar $G$, a level $\ell$, maximum output size $t\ge 2$.}
   \KwOut{Decompresses $c$ if it was compressed at level $\ell$. Returns at most $t$ symbols of the decompression.}
   
   \vspace{1mm}
   \hrule\vspace{1mm}

    \lIf{$c\in \Sigma^\ell_c$}{ let $a,b \in \Gamma$ be such that $c\rightarrow ab \in G$. Return $ab$. }
    \lIf{$c\in \Sigma^\ell_r$}{ let $a\in \Gamma,r\in \N$ be such that $c=\rr_{a,r}$. Return $a^{\min(t,r)}$. }

    Return $c$.

\end{algorithm}

Function $\DecompressString(Z,G,\ell)$ decompresses all level-$\ell$ compression symbols in a string $Z\in \Gamma^*$ using the grammar $G$, and returns the resulting decompressed string. 

\begin{algorithm}[H]
\DontPrintSemicolon
   \caption{$\DecompressString(Z,G,\ell)$}
   \KwIn{A string $Z$, a grammar $G$, and level $\ell$.}
   \KwOut{Decompresses $z$ at level $\ell$.}
   
   \vspace{1mm}
   \hrule\vspace{1mm}

    $Y=\eps$.

    \lFor{$i=1$ to $|Z|$}{ $Y = Y\cdot \DecompressSymbol(Z[i],G,\ell,\infty)$.}

    Return $Y$.

\end{algorithm}

Function $\DecompressSymbolLength(c,\ell)$ returns the length of the decompression of a symbol $c$ at level $\ell$.

\begin{algorithm}[H]
\DontPrintSemicolon
   \caption{$\DecompressSymbolLength(c,\ell)$}
   \KwIn{A symbol $c$, a level $\ell$.}
   \KwOut{Returns the length of decompression of $c$ at level $\ell$.}
   
   \vspace{1mm}
   \hrule\vspace{1mm}

    \lIf{$c\in \Sigma^\ell_c$}{ return 2. }
    \lIf{$c\in \Sigma^\ell_r$}{ let $a \in \Gamma,r\in \N$ be such that $c=\rr_{a,r}$. Return $r$. }

    Return $1$.

\end{algorithm}


   
    
    
    


\begin{algorithm}[H]
\label{algo:compress1}
\DontPrintSemicolon
   \caption{$\CompressWithGrammar(B,\ell)$}
   \KwIn{String $B$ over alphabet $\Gamma$, and level number $\ell$.}
   \KwOut{String $B''$ over alphabet $\Gamma$, and set of applied rules $G'$.}
   
   \vspace{1mm}
   \hrule\vspace{1mm}

   \lIf{$|B|\le 1$}{ return $B,\emptyset$. }

    Set $G'=\emptyset$.
   
    Divide $B=B_1B_2B_3\dots B_m$ into minimum number of blocks so that each maximal subword $a^r$ of $B$, for $a\in\Gamma$ and $r\ge 2$, is one of the blocks.
    
    \For{each $i\in \{1,\dots, m\}$}{
      \If{$B_i=a^r$, where $r\ge 2$}{
         Set $B'_i = \rr_{a,r} \cdot \#$ and color $\rr_{a,r}$ by 1 and $\#$ by 2.
         
        $G' = G' \cup \{ \rr_{a,r} \rightarrow a^r \}$;
      }
      \lElse{Set $B'_i = B_i$ and color each symbol of $B'_i$ according to $\FL(B_i)$}
    }
    
    Set $B'=B'_1B'_2\cdots B'_m$, $B''=\eps$, and $i = 1$.

    \While{$i<|B'|$}{

        \lIf{ $B'[i+1] = \#$}{ $B'' = B'' \cdot B'[i]$}
        \Else{ 
            $B'' = B'' \cdot C_\ell(B'[i,i+1])$;

            $G' = G' \cup \{ C_\ell(B'[i,i+1])\rightarrow B'[i,i+1] \}$;
         }

        $i = i+2$.

        \lIf{$i\le |B'|$ and $B'[i]$ is not colored 1}{
            $B'' = B'' \cdot B'[i]$, $i = i+1$    
        }
    }

    Return $B'', G'$. 

\end{algorithm}

Function $\CompressWithGrammar(B,\ell)$ is an extension of $\Compress(B,\ell)$ that in addition
to compressed block $B$ at level $\ell$ returns the set of grammar rules used for the compression of $B$ at this level.

Finally, function $\FindCompressedPrefix(Z,p,\ell)$ returns the length of the smallest prefix of a string $Z$
that decompresses into at least $p$ symbols at level $\ell$.

\begin{algorithm}[H]
\label{algo:find_index}
\DontPrintSemicolon
   \caption{$\FindCompressedPrefix(Z,p,\ell)$}
   \KwIn{String $Z$, an integer $p$, level $\ell$.}
   \KwOut{Smallest index $j$ such that level $\ell$ decompression of $Z[1,j]$ has length $\ge p$.}
   
   \vspace{1mm}
   \hrule\vspace{1mm}
    
    $q=0$ and $j=0$.
    
    \While{$q < p$}{
        $j = j+1$;

        $p = p+\DecompressSymbolLength(Z[j],\ell)$;
    }
    
    Return $j$. 

\end{algorithm}

\subsection{Main functions}

The core of the update function $\UpdateAG((G_1,\dots,G_t),a)$ is build around the functions we describe next.
The functions use globally accessible set of grammar rules $G$ that contains all the rules from
$G_1,\dots,G_t$ except for the starting rules. 
(This set of rules is deterministic assuming the remark after Theorem~\ref{t-update}.)

The functions will build a sequence of strings $Z_L,Z_{L-1},\dots,Z_0$ each of length at most $2T$.
$Z_L$ is the concatenation of the right-hand-sides of starting rules of $G_1,\dots,G_t$.
For $\ell=L,\dots,1$, $Z_{\ell-1}$ is then build inductively by decompressing a (suitable) largest suffix 
of $Z_\ell$ so that $Z_{\ell-1}$ would be of length at most $T+4 \le 2T$.
The decompression is provided by function $\PartialDecompress(Z,F,\ell)$ which returns tuple $Z',F',u,r'$.
In the case that the first symbol of the decompressed suffix of $Z_\ell$ is the level-$\ell$ repeat symbol
$\rr_{a,r}$ that would expand $Z_{\ell-1}$ beyond the limit of $T+4$ symbols, we truncate the expansion
of that symbol to the length $r_{\ell}=r'$. 
The return value $u$ indicates how many symbols of $Z$ were left uncompressed (which would include the partially decompressed symbol $\rr_{a,r}$). 
It satisfies that if $u\ne 0$ then $|Z'|\ge T$.
Strings $Z_L,\dots,Z_0$ satisfy that for $\ell=L,\dots,1$, if $|Z_\ell|\ge T$ then $|Z_{\ell-1}|\ge T$.
(In particular, if $\UpdateAG()$ is invoked with at least $T+1$ grammars, then all $Z_\ell$ are of length at least $T$.
The compression of the first grammar might depend on unseen grammars in that case so we cannot re-compress it at will.)

Strings $Z_L,\dots,Z_0$ are accompanied by strings of integers $F_L,\dots,F_0$ over the alphabet $\{0,\dots,L+1\}$. 
The value of $F_\ell[i]$ indicates at which level the symbol $Z_\ell[i]$ becomes the first symbol in its block.
In particular, $F_\ell[i] < \ell$ indicates that a block starts at position $i$ of $Z_\ell$.
This value is relevant for re-compression of updated strings $Z_0,\dots,Z_L$.
The initial values of $F_L$ are computed using $\SplittingDepth(G)$.
Function $\SplittingDepth(G)$ is fairly straightforward: For a grammar $G$, it inductively decompresses
the first two symbols of the evaluation of $G$. 
It finds the lowest level $\ell$, at which the first two symbols of the decompression give zero when function $H_\ell$ is applied on them.  

After obtaining $Z_L,\dots,Z_0$, $\UpdateAG((G_1,\dots,G_t),a)$ appends $a$ to $Z_0$, and then
re-compresses $Z_0,\dots,Z_{L-1}$ using a function $\Recompress(B,Z,F,u,r,\ell)$. 
We provide more details on function $\Recompress(B,Z,F,u,r,\ell)$ further below.
Invoking $\UpdateAG((G_1,\dots,G_t),a)$ returns a sequence of updated grammars.

\begin{algorithm}[H]
\label{algo:partial_decompress}
\DontPrintSemicolon
   \caption{$\PartialDecompress(Z,F,\ell)$}
   \KwIn{String $Z$, splitting depth string $F$, and level $\ell$.}
   \KwOut{Decompressed string $Z'$, splitting depth string $F'$, unused count $u$, repeat count $r'$.}
   
   \vspace{1mm}
   \hrule\vspace{1mm}

    Set $Z'=\eps$ and $F'=\eps$.
    
    \For{$u = |Z|$ to $1$}{
        
        \If{$Z[u] = \rr_{a,r}$, where $\rr_{a,r} \in \Sigma_{r}^{\ell}$}{\;
           \lIf{$|Z'| + r \le T + 3$}{
                $Z' = a^r \cdot Z'$ and
                $F' = F[u] \cdot (L+1)^{r-1} \cdot F'$
            }
            \Else{
                $r' = T - |Z'| + 1$; 
                
                $Z' = a^{r'} \cdot Z'$ and $F' = (L+1)^{r'} \cdot F'$ ;
 
                Return $Z',F',u,r'$.
            }
        }
        \ElseIf{$Z[u]=a$, where $a \in \Sigma_{c}^{\ell}$}{
            $Z' = b\cdot c \cdot Z'$, where $a\rightarrow b\cdot c$ is in $G$;

            $F' = F[u] \cdot (L+1) \cdot F'$;
        }
        \lElse{$Z' = Z[u] \cdot Z'$ and $F' = F[u] \cdot F'$}
         
        \lIf{$|Z'| \ge T$}{
             return $Z',F',u-1,0$.                 
        }
    }

    Return $Z',F',0,0$.     
\end{algorithm}

\begin{algorithm}[H]
\DontPrintSemicolon
   \caption{$\SplittingDepth(G)$}\label{alg-split-depth}
   \KwIn{Non-empty grammar $G$.}
   \KwOut{The first level $\ell$ where $G$ would be separated as a new block.}
   
   \vspace{1mm}
   \hrule\vspace{1mm}

    Let $v$ be such that $\# \rightarrow v \in G$. \acc{$v$ are the first two symbols of $\eval(G)$.}

    $d = L+1$. 
    
    \For{ $\ell=L,\dots, 0$}
    {
        \lIf{$|v|\ge 2$ and $H_\ell(v[1,2])=0$}{$d = \ell$.}

        $u = \DecompressSymbol(v[1],G,\ell,2)$

        \lIf{$|v| \ge 2$}{ $u = u \cdot \DecompressSymbol(v[2],G,\ell,2)$.}
        
        $v=u$.
    }

    Return $d$.

\end{algorithm}

\begin{algorithm}[H]
\label{algo:update_ag}
\DontPrintSemicolon
   \caption{$\UpdateAG(AG,a)$}
   \KwIn{List of grammars $AG = (G_1 , \dots , G_t)$ representing a string $x$, and a symbol $a$.}
   \KwOut{Updated list of grammars $AG'$ representing string $x\cdot a$.}
   
   \vspace{1mm}
   \hrule\vspace{1mm}

    \acc{Construct a set of rules $G$, initial compressed string $Z_L$ and splitting depth string $F_L$.} 

    For $i=1,\dots,t$, let $\# \rightarrow v_i$ be the starting rule in $G_i$.

    Set $G = \bigcup_{i=1}^t G_i \setminus \{\# \rightarrow v_i\}$. 

    Set $Z_L = v_1$ and $F_L = 0 \cdot (L+1)^{|v_1|-1}$.

    For $i=2,\dots,t$, set $Z_L = Z_L \cdot v_i$ and $F_L = F_L \cdot \SplittingDepth(G_i) \cdot (L+1)^{|v_i|-1}$.
    
    \acc{Perform partial decompression}
        
    \For{$\ell = L$ to $1$}{
        $Z_{\ell-1}, F_{\ell-1}, u_{\ell}, r_{\ell} = \PartialDecompress(Z_{\ell},F_{\ell}, \ell)$.
    }

    \acc{Perform re-compression}

    $Z_0 = Z_0\cdot a$;
    $B = \Split(Z_0,0)$;
    
    \For{$\ell = 1$ to $L$}{
        $B',G' = \Recompress(B,Z_{\ell}, F_{\ell}, u_{\ell}, r_{\ell}, |Z_{\ell-1}|, \ell)$

        $G = G \cup G'$
        
        $B = B'$        
    }

    Let $B = (B_1,\dots , B_{t'})$.

    $AG'=()$.
    
    \For{$i = 1$ to $t'$}{
        $G' = G \cup \{\#\rightarrow B_i\}$.
        
        Remove from $G'$ unnecessary rules to get $G'_i$ (as in Section~\ref{s-grammars}).
        
        Append $G'_i$ to $AG'$.
    }
    
    Return $AG'$. 

\end{algorithm}

Function $\Recompress(B,Z,F,u,r,\ell)$ gets a sequence $B=(B_0,\dots,B_s)$ of blocks
that represent compression of the updated $Z_{\ell-1}$ (after adding $a$) up-to level $\ell-1$. 
It also gets the original $Z_\ell$, the splitting depth string $F_\ell$, the number of symbols $u_\ell$
that were decompressed from $Z_\ell$ to get the original $Z_{\ell-1}$ and the parameter $r_\ell$
that indicates that the first $r_\ell$ symbols of $Z_{\ell-1}$ are a partial decompression of the repeat symbol $Z_{\ell}[u]$. 
It outputs a sequence of blocks $B'$ that represent the updated block $Z_\ell$ compressed up-to level $\ell$, and a set of rules $G'$ that were used for compression at level $\ell$.

Blocks $B_1,\dots,B_s$ can be independently compressed and split at level $\ell$.
The block $B_0$ needs a special treatment though as it needs to be combined with its possible remainder in $Z_\ell$.  
This is done in function $\RecompressFirstBlock(B_0,Z,F,u,r,\ell)$.
Remaining blocks for the output $\Recompress()$ are obtained from $Z_\ell$ by splitting it into blocks according 
to $F_\ell$.

\begin{algorithm}[H]
\label{algo:recompression}
\DontPrintSemicolon
   \caption{$\Recompress(B,Z,F,u,r,z,\ell)$}
   \KwIn{$B=(B_0,\dots,B_s)$ sequence of blocks, original uncompressed string $Z$, splitting depth string $F$ of $Z$, $u$ number of uncompressed symbols in $Z$, repeat count $r$, $z=|Z_{\ell-1}|$, and level $\ell$.}
   \KwOut{$B'$ a new sequence of blocks representing $B$ together with $Z[1,u]$, and set of newly added rules $G'$.}
   
   \vspace{1mm}
   \hrule\vspace{1mm}

    \lIf{$z<T$}{ $B'=()$, $G'=\emptyset$, $u'=0$, $j=0$. \acc{No symbols precede $B_0$.} }
    \Else{
       $B',G',u' = \RecompressFirstBlock(B_0,Z,F,u,r,\ell)$.     \acc{ Compress block $B_0$. }
       
       $j=1$.
     }
    
    \acc{ Compress blocks $B_j,\dots,B_s$.}
   
    \For{$i = j$ to $s$}{  
        \lIf{$|B_i|\le 2$}{ $B'' = (B_i)$; $G''=\emptyset$ }
        \Else{
            $B'_i,G'' = \CompressWithGrammar(B_i,\ell)$.

            $B'' = \Split(B'_i,\ell)$.
        }

        Append $B''$ to $B'$.

        $G'= G' \cup G''$.
    }

    $i=u'$.    \acc{ Separate remaining blocks in $Z$. }

    \While{$i>0$}{
        \lWhile{$i>1$ and $F[i]>\ell$}{$i=i-1$.}

        Add $Z[i,u']$ as the first item of $B'$.

        $i=i-1; u'=i$.
        
    }

    Return $B',G'$. 
\end{algorithm}

Function $\RecompressFirstBlock(B_0,Z,F,u,r,\ell)$ is the most complicated function of the whole re-compression process. 
The function is invoked only if $|Z_{\ell-1}|\ge T$.
The function gets the first level $\ell-1$ block $B_0$ that needs to be combined with its remainder in $Z=Z_\ell$. 
The remainder is a suffix of $Z_\ell[1,u]$, where $r$ indicates that the first $r$ symbols of the original $Z_{\ell-1}$ were obtained by the partial decompression of $Z_\ell[u]$. 
If $r\ne 0$ then the compression of the part of $B_0$ that follows its leading $a$'s ($Z_\ell[u]=\rr_{a,r'}$) is independent of the compression of the part of $Z$ belonging to $B_0$ and preceding $Z_\ell[u]$, as $r'-r\ge 2$. 
Thus we can compress that part of $B_0$, combine it with an appropriate repetition symbol $\rr_{a,r''}$
and append it to the appropriate suffix of $Z_\ell[1,u-1]$ (which is already compressed at level $\ell$.)
If $r=0$ then we invoke a function $\CrossOverBlock(B_0,Z[u',\dots],u-u'+1,\ell)$, where $u'$ is the first symbol in $Z_\ell$ that belongs to the block of $B_0$.
Eventually, we split the compressed block $B_0$ using $\Split()$.

\begin{algorithm}[H]
\label{algo:recompressionFirst}
\DontPrintSemicolon
   \caption{$\RecompressFirstBlock(B_0,Z,F,u,r,\ell)$}
   \KwIn{Block $B_0$, an original uncompressed string $Z$, splitting depth string $F$ of $Z$, $u$ number of uncompressed symbols in $Z$, repeat count $r$, and level $\ell$.}
   \KwOut{$B'$ a new sequence of blocks representing $B_0$ together with $Z[1,u]$, and set of newly added rules $G'$, number $u'$ of unused symbols in $Z$.}

    \lIf{$r\ne 0$}{$u=u-1$.}       

    $u'= u+1$. \acc{ Find the beginning of block $B_0$ in the uncompressed part $Z$. }

    \lWhile{$u'>1$ and $d[u'] \ge \ell$}{
        $u'=u'-1$.
    }
    \If{$r \ne 0$}{
        \acc{ Block $B_0$ starts by partially decompressed symbol $\rr_{a,r}$. }
    
        Let $a\in \Gamma$ and $r'\in \N$ be such that $Z[u+1]=\rr_{a,r'}$.

        \lFor{$i=1$ to $|B_0|$}{
            \lIf{$B_0[i] \ne a$}{break;}
        }
        \lIf{$B_0[i]=a$}{ $B'=\eps, G'' = \emptyset$, $i=i+1$. }
        \lElse{$B',G'' = \CompressWithGrammar(B_0[i,\dots],\ell)$.}

        $B' = Z[u',u] \cdot \rr_{a,r'-r+i-1} \cdot B'$.    
    }
    \Else{
        $B',G'' = \CrossOverBlock(B_0, Z[u',\dots],u-u'+1,\ell)$.
    }

    $B'' = \Split(B',\ell)$.

    Return $B'',G'',u'-1$. 
\end{algorithm}

Function $\CrossOverBlock(B,Z,u,\ell)$ gets a block $B$ that was compressed up-to level $\ell-1$ and 
needs to be combined with its remainder $Z[1,u]$ that is compressed up-to level $\ell$. 
(The resulting block should correspond to ``$Z[1,u] \cdot B$''.) 
We know that $|Z_{\ell-1}|\ge T \ge L(3R+3)$ otherwise $\RecompressFirstBlock()$ and $\CrossOverBlock()$ would not be called.
By the three properties of $\Delta_{\ell-1}$ and $i_{\ell-1}$ defined in the proof of Part 1 of Lemma~\ref{l-grammarsuffix}
we know that the first $3(R+1)$ symbols of $Z_{\ell-1}$ were not modified as a result of appending the new symbol to $x$. 
Hence the first $\min(3(R+1),|B|)$ symbols of $B$ correspond to the decompression of $Z[u+1,\dots]$. 

In this part of $B$ we look for any repeated symbol. 
If we find a repeated symbol there, we combine the compression of the part of $B$ starting at the repeated symbol with the original
part of $Z[u+1,\dots]$ that produced the symbols of $B$ preceding the repeated symbol (and also with $Z[1,u]$).
By the properties of compression, repeated symbols break dependence between compressed symbols.

If $|B|\le 2R+20$ then at least $3(R+1)-2R-20>2$ unchanged symbols follow $B$. 
Thus $B$ ends at its original location as it was split at some level $<\ell$ and the first two symbols of the next block at all
levels $<\ell$ are the same as originally.

Finally, if $|B|>2R+20$ and there is no repeated symbol in the first up-to $3(R+1)$ symbols of $B$
then we can compress $B$ to get $B'$, strip from $B'$ the compression of the first $R+10$ symbols 
and combine it with the original compression of those $R+10$ symbols from $Z$.
(The first up-to $3(R+1)$ symbols of $B$ consist of singletons. 
The compression of a singleton depends on the context of at most $R+3$ symbols on either side.)  

\begin{algorithm}[H]
\label{algo:recompressionCross}
\DontPrintSemicolon
   \caption{$\CrossOverBlock(B,Z,u,\ell)$}
   \KwIn{Block $B$, an original uncompressed string $Z$, number $u$ of unused symbols in $Z$, and level $\ell$.}
   \KwOut{$B'$ and set of newly added rules $G'$.}

  \acc{ Try to find a repeated symbol in unmodified $B$. }

   $i=1$.

   \lWhile{$i<|B|$ and $i < 3(R+1)$ and $B[i]\ne B[i+1]$}{ $i=i+1$.}

   \If{$i<|B|$ and $B[i] = B[i+1]$}{
       \acc{$B[i]$ is a repeated symbol.}
   
        $B',G' = \CompressWithGrammar(B[i,\dots],\ell)$.

        $j = \FindCompressedPrefix(Z[u+1,\dots],i-1,\ell)$.

        $B' = Z[1,u+j] \cdot B'$.
    }
    \ElseIf{$|B|\le 2R+20$}{
        $j = \FindCompressedPrefix(Z[u+1,\dots], |B|, \ell)$. \acc{At least two unchanged symbols follow $B$.}

        $B' = Z[1,u+j]$, $G'=\emptyset$.
    }
    \Else{
        $B',G' = \CompressWithGrammar(B,\ell)$.

        $p = \FindCompressedPrefix(B',R+10, \ell)$.

        $j = \FindCompressedPrefix(Z[u+1,\dots], R+10, \ell)$.

        $B' = Z[1,u+j-1] \cdot B'[p,\dots]$.
     }

     Return $B',G'$. 
\end{algorithm}

The correctness of the update algorithm follows from its description.

\subsection{Time analysis}

We assume that strings are represented efficiently (e.g. by balanced trees) so we can extract a sub-string, concatenate strings, etc. in time $\OO(1)$. All strings that we will operate on will be of length $O(T)$.
Similarly, we assume that grammars are represented efficiently so that we can look-up a rule with a given 
left-hand symbol, append two grammars, etc. in time $\OO(1)$. 

Then $\DecompressSymbol()$ and $\DecompressSymbolLength()$ takes time $\OO(1)$.
The time complexity of each of the functions $\CompressWithGrammar()$, $\DecompressString()$, 
$\FindCompressedPrefix()$, $\PartialDecompress()$ and $\CrossOverBlock()$ is proportional to the length of strings on which it operates so it is $\OO(T)$.
Time of $\SplittingDepth()$ is proportional to the depth of the grammar, which in our case is at most $\OO(L)$.
Each $\RecompressFirstBlock()$ executes $O(T)$ operations on strings and grammars, and $O(T)$
evaluations of $H_\ell$ (inside the calls to $\Split()$). 
So $\RecompressFirstBlock()$ takes time $\OO(T)$.

Similarly, each $\Recompress()$ executes up-to one call to $\RecompressFirstBlock()$, $O(T)$ operations on strings and grammars, and $O(T)$ evaluations of $H_\ell$ to split blocks. 
Again, its total time complexity is $\OO(T)$.
Eventually, $\UpdateAG()$ executes up-to $T$ $\SplittingDepth()$, $O(T)$ string operations, 
$L$ calls to  $\PartialDecompress()$ and $\Recompress()$, and then up-to $O(LT)$ invocations of 
grammar minimization procedure costing $\OO(k)$ time each. Thus, the total time for $\UpdateAG()$ is $\OO(LTk)$. 

The number of grammars the algorithm outputs is at most $\sum_{\ell=0}^L |Z_\ell| \le 2T(L+1) \le 4TL$.

\section{Table of parameters}\label{s-param}

\begin{center}
\begin{tabular}{ |l|c|c|c| } 
 \hline
 Definition & Asymptotics & Meaning & Reference \\
 \hline
 \hline
 $R = \log^{*}|\Gamma| + 20$ & $\log^* n$ & compression locality & Section \ref{s-locally_consistent_parsing}\\ \hline
 $L=\lceil \log_{3/2} n \rceil+3$ & $ \log n$ & recursion depth & Section \ref{s-decomposition}, Corollary \ref{c-decomposition1}\\ \hline
 $D = 110R(L+1)k$ & $k \log n \log^* n$ & $1/$splitting probability & Section \ref{s-decomposition}, Lemma \ref{l-decomposition1}\\ \hline
 $S=15 DL \log n + 3$ & $k \log^3 n \log^* n$ & maximum grammar size & Section \ref{s-decomposition}, Theorem \ref{t-decomposition}\\ \hline
 $M= 3 S \cdot \lceil 1 + \log |\Gamma| \rceil$ &  $k \log^4 n \log^* n$ & grammar encoding size & Section \ref{s-binencoding}\\ \hline
 $T=L(3R+6)$ & $\log n \log^* n$ & locality of suffix changes & Section \ref{s-rolling}, Lemma \ref{l-grammarsuffix}\\ \hline
 $N \ge n^3$ & $n^3$  & $\FKR$ range size & Section \ref{s-binencoding}\\ \hline
\end{tabular}
\end{center}

\section*{Acknowledgements}

The authors benefited greatly from discussions with Nicole Wein who took part in the initial stages of this project. 
The second author also benefited from many discussions on edit distance and on hash functions with Mike Saks. 
We are grateful to Tomasz Kociumaka for providing us with a reference for Proposition~\ref{p-edgrammar}.
We thank anonymous reviewers for their comments.

\bibliographystyle{alpha}
\bibliography{edit_sketch}


\end{document}